\newenvironment{proof}{{\em Proof:}}
{\hspace{\stretch{1}}%
\rule{1ex}{1ex}\\}
\begin{document}

\newcommand{\beg}{\begin{equation}}
\newcommand{\eeg}{\end{equation}}
\newcommand{\br}{\begin{array}}
\newcommand{\er}{\end{array}}
\newcommand{\bea}{\begin{equation} \begin{array}{c}}
\newcommand{\eea}{\end{array} \end{equation}}
\newcommand{\un}{\underline}
\newcommand{\cb}{\begin{center}}
\newcommand{\ce}{\end{center}}
\newcommand{\llg}{\left\langle}
\newcommand{\rrg}{\right\rangle}
\newcommand{\thh}{$^{\mbox{th}}$}
\newcommand{\al}{\alpha}
\newcommand{\bite}{\begin{itemize}}
\newcommand{\eite}{\end{itemize}}
\newcommand{\m}{\mbox}
\newcommand{\wh}{\hspace{3mm} \mbox{ where }}
\newcommand{\lcm}{\mbox{lcm}}
\newcommand{\di}{\mbox{ div }}
\newcommand{\T}{\mbox{ TRUE }}
\newcommand{\SP}{\mbox{ SPAN}}
\newcommand{\GF}{\mbox{ GF}}
\newcommand{\gf}{\mbox{ {\tiny{GF}}}}
\newcommand{\GR}{\mbox{ GR}}
\newcommand{\Di}{\mbox{ DIV}}
\newcommand{\mt}{\hspace{10mm} \mbox{ }}
\newcommand{\mf}{\hspace{5mm} \mbox{ }}
\newcommand{\mz}{\hspace{2mm} \mbox{ }}
\newcommand{\mv}{\hspace{1mm} \mbox{ }}
\newcommand{\ga}{\gamma}
\newcommand{\xl}{\begin{tiny} \begin{array}{c} < \\ \simeq \end{array} \end{tiny}}
\newcommand{\xm}{\begin{tiny} \begin{array}{c} > \\ \simeq \end{array} \end{tiny}}
\newcommand{\OR}{\mbox{ ord}}
\newcommand{\dg}{\mbox{ deg}}
\newtheorem{thm}{Theorem}
\newtheorem{pro}{Proposition}
\newtheorem{lem}{Lemma}
\newtheorem{cj}{Conjecture}
\newtheorem{cor}{Corollary}
\newtheorem{df}{Definition}
\newtheorem{imp}{Implication}
\newcommand{\mo}{\mbox{ mod }}
\newcommand{\Tr}{\mbox{ Tr}}
\newcommand{\mn}{\mbox{ {\tiny{min}}}}
\newcommand{\mx}{\mbox{ {\tiny{max}}}}
\newcommand{\erf}{\mbox{ erf}}
\newcommand{\erfc}{\mbox{ erfc}}
\newcommand{\SNR}{\mbox{ SNR}}
\newcommand{\BER}{\mbox{ BER}}
\newcommand{\hl}{\\ \hline}
\newcommand{\bd}[1]{\textbf{#1}}

\title{From Graph States to Two-Graph States}
\author{Constanza Riera\thanks{C. Riera and M. G. Parker are with the
  Selmer Centre, Inst. for Informatikk, H{\o}yteknologisenteret i Bergen,
  University of Bergen, Bergen 5020, Norway. E-mail: \texttt{riera@ii.uib.no,matthew@ii.uib.no}.
  Web: \texttt{http://www.ii.uib.no/$\sim$riera,matthew/}. C. Riera was supported in part by the Norwegian
Research Council.} and St\'ephane Jacob\thanks{St\'ephane Jacob is at L'\'ecole polytechnique, France
E-mail: \texttt{stephane.jacob@polytechnique.edu}}
and Matthew G. Parker}

\date{\today}
\maketitle

\begin{abstract}
The name `{\em graph state}' is used to describe a certain class of pure quantum state
which models a physical structure on which one can perform {\em measurement-based quantum
computing}, and which has a natural graphical description.
We present the {\em{two-graph state}}, this being a generalisation of the {\em{graph state}}
and a two-graph representation of a {\em stabilizer state}.
Mathematically, the two-graph state can be viewed as
a simultaneous generalisation of a  binary linear code and quadratic Boolean function.
It describes precisely
the coefficients of the pure quantum state vector resulting from the action of a member of the {\em{local
Clifford group}} on a graph state, and comprises a graph which encodes the
{\em{magnitude}} properties of
the state, and a graph encoding its {\em{phase}} properties. This description
facilitates a computationally
efficient spectral analysis of the graph state with respect to operations from the
local Clifford group on the state, as
all operations can be realised graphically. By focusing on the so-called
{\em{local transform group}},
which is a size 3 cyclic subgroup of the local Clifford group over one qubit, and
over $n$ qubits is of size $3^n$,
we can efficiently compute spectral properties of the graph state.
\end{abstract}

\section{Introduction}
\subsection{Codes with phase}
Consider a binary linear code, ${\cal C}$, of length $n$ and dimension $k$. We can represent
${\cal C}$ by its {\em indicator vector} in $({\mathbb{Z}}_2^2)^{\otimes n}$,
${\cal I}_m = (m(0\ldots 0),
m(0\ldots 1),\ldots,m(1\ldots 1)) = (m({\bf{x}}))$, where $m$, the {\em indicator function},
is a mapping from ${\mathbb{Z}}_2^n \rightarrow {\mathbb{Z}}_2$
such that $m({\bf{x}}) = 1$ iff
${\bf{x}} \in {\cal C}$, otherwise $m({\bf{x}}) = 0$. The indicator vector is, therefore,
the truth-table of $m$. For example,
the $n = 3$, $k = 2$ binary linear
code, with codewords ${\cal C} = \{000,011,110,101\}$, can be represented by the indicator
vector ${\cal I}_m = (1,0,0,1,0,1,1,0)$.
The indicator function is a {\em Boolean function} and respects
a non-unique factorization,
$m({\bf{x}}) = \prod_{i=0}^{n-k-1} m_i({\bf{x}})$, where the Boolean functions,
$m_i$, are affine functions, i.e. of algebraic degree $\le 1$ if ${\cal C}$ is linear,
in which case each function, $m_i$, represents
the row of a parity-check matrix that defines ${\cal C}$. For instance, for the above example,
$m({\bf{x}}) = (x_0 + x_1 + x_2 + 1)$. As another example, if
${\cal C} = \{010,101\}$, then ${\cal I}_m = (0,0,1,0,0,1,0,0)$ and $m$
can be written as $m({\bf{x}}) = (x_0 + x_1)(x_0 + x_2 + 1)
 = (x_0 + x_1)(x_1 + x_2) = (x_0 + x_2 + 1)(x_1 + x_2)$ where, in this case, ${\cal C}$ is a
{\em coset code} as it is a binary linear code additively offset by the codeword $010$.
By placing the `ones' in different positions in ${\cal I}_m$,
one can, more generally, represent any binary nonlinear code,
where $m$ is no longer the product of affine factors. We do not consider such generalisations
in this paper but we do consider another generalisation where a $\pm 1$ phase can be applied
to every entry of ${\cal I}_m$ - thus we consider codes where every codeword has an associated
phase. In order to accomodate such a generalisation we introduce the indicator
vector, $\left|\psi\right> = \frac{1}{\sqrt{w(m)}}(m(0\ldots 0)(-1)^{p(0\ldots 0)},
m(0\ldots 1)(-1)^{p(0\ldots 1)},\ldots,
m(1\ldots 1)(-1)^{p(1\ldots 1)}) = (\frac{1}{\sqrt{w(m)}}m({\bf{x}})(-1)^{p({\bf{x}})})$,
being a vector
in $({\mathbb C}^2)^{\otimes n}$, where $w(m)$ is the support weight of $m$
(i.e. the number of `ones' in the truth-table of $m$),
and $m$ and $p$ are Boolean functions from
${\mathbb{Z}}_2^n \rightarrow {\mathbb{Z}}_2$, although
we embed the ${\mathbb{Z}}_2$ output of $m$ into $\{0,1\}$
 of the complex numbers. With such
a definition, $\left|\psi\right>$ is normalised such that
$\sum_{{\bf{x}} \in {\mathbb{Z}}_2^n} |\left|\psi\right>_{{\bf{x}}}|^2 = 1$ and
the codeword ${\bf{x}}$ can be considered to be sampled from the code, ${\cal C}$, defined
by $\left|\psi\right>$, with probability $|\left|\psi\right>_{{\bf{x}}}|^2$.
In this paper we focus on the case where $m$ is a product of
affine Boolean functions and $p$ is a quadratic Boolean function. For example,
the $n = 3$, $k = 2$ binary linear
`code-with-phase' comprising codewords ${\cal C} = \{+000,-011,-110,-101\}$,
can be represented by the indicator
vector $\left|\psi\right> = \frac{1}{2}(1,0,0,-1,0,-1,-1,0) = \frac{1}{2}m(-1)^p =
\frac{1}{2}(x_0 + x_1 + x_2 + 1)(-1)^{x_0x_1 + x_2} =
\frac{1}{2}(x_0 + x_1 + x_2 + 1)(-1)^{x_1x_2 + x_1 + x_2}
 = \frac{1}{2}(x_0 + x_1 + x_2 + 1)(-1)^{x_0x_2 + x_0 + x_2}
 = \frac{1}{2}(x_0 + x_1 + x_2 + 1)(-1)^{x_0x_1 + x_0 + x_1}$.
For a given $m$ function,
there will, in general, be more than one choice of $p$ function. The choice of letters,
$m$ and $p$, is to remind the reader that $m$ assigns `magnitude' to the codewords in the
code, and $p$ assigns `phase'.
Later in this paper
we shall need to generalise to indicators of the form
$\left|\psi\right> = (\frac{1}{\sqrt{w(m)}}m({\bf{x}})i^{p({\bf{x}})})$ where $m$ is, once
again, a product of affine Boolean functions, but now $p$ is a generalised quadratic
Boolean function from ${\mathbb Z}_2^n \rightarrow {\mathbb Z}_4$ of the `special form'
$p({\bf{x}}) = (\sum_{i < j} a_{ij}x_ix_j) + (\sum_j b_jx_j) + c$, were
$a_{ij} \in \{0,2\}$, and $b_j,c \in \{0,1,2,3\}$.


\subsection{Quantum states and the local Clifford group}
The use of `bra-ket' notation, $\left| * \right>$, to denote the code-with-phase indicator
is because $\left|\psi\right>$ can be interpreted as the description for a pure quantum
state vector of $n$ qubits with the property that the $n$ qubits described by
$\left|\psi\right>$ are projected into state ${\bf{x}}$ with probability
$|\left|\psi\right>_{{\bf{x}}}|^2$ by a joint measurement of
$\left|\psi\right>$ in the so-called `computational basis' \cite{NC:QC}.
We shall show
(corollary \ref{corthm}) that, by restricting $m$ to
a product of affine functions, and $p$ to a generalised quadratic Boolean function of the
special form described previously, $\left|\psi\right>$ describes,
exactly, the class of quantum
{\em stabilizer states} for qubits \cite{Cald:Qua,Gott:Stab}.

Two pure $n$-qubit states,
$\left|\psi'\right>$ and $\left|\psi\right>$, are considered {\em locally-equivalent} if there exists
a $2^n \times 2^n$ unitary matrix, $U$, with tensor factorisation
$U = U^{(0)} \otimes U^{(1)} \otimes \ldots \otimes U^{(n-1)}$, where each $U^{(i)}$
is a $2 \times 2$
unitary matrix, such that $\left|\psi'\right> = U\left|\psi\right>$. In the context of
quantum information, local equivalence preserves the
structure of the $n$-partite quantum system, in particular the $n$-partite
{\em entanglement} of
the system \cite{NC:QC}. An important group of $2 \times 2$ unitary matrices
is the (complex) {\em local Clifford group}, ${\bf C_1}$ which
can be generated by the {\em Hadamard matrix},
$H=\frac{1}{\sqrt{2}}\begin{tiny}\left(\begin{array}{rr}1&1\\ 1&-1\end{array}\right)\end{tiny}$,
and the {\em negahadamard matrix},
$N=\frac{1}{\sqrt{2}}\begin{tiny}\left(\begin{array}{rr}1&i\\ 1&-i\end{array}\right)\end{tiny}$, where
$i = \sqrt{-1}$. The $n$-qubit local Clifford group is then given by ${\bf C_n} = {\bf C_1}^{\otimes n}$.
A {\em graph state} is of the form $\left|\psi\right> = 2^{\frac{-n}{2}}(-1)^p$, where
$p$ is a homogeneous quadratic Boolean function and, implicitly, $m = 1$.
When $m = 1$, all $\left|\psi\right>_{\bf x}$ have the same magnitude, and
we refer to such state vectors, $\left|\psi\right>$, as {\em{flat}} \cite{RP:BC1}.
The homogeneous quadratic, $p$,
maps, bijectively, to a simple graph \cite{thesis}.
It can be shown that
every stabilizer state is locally equivalent to a set of
graph states, where each such graph state is obtained
via the action of a specific unitary from ${\bf C_n}$ on the stabilizer.
In this paper we represent stabilizer states by the form
$\left|\psi\right> = 2^{\frac{-n}{2}}mi^p$, where $p$ is quadratic of the special form and
$m$ is a product of affine Boolean functions \cite{Par:QE}. This form is a generalisation of
that for a graph state. As $m$ is the indicator function for
a binary linear coset code, it can be represented by a bipartite graph with loops,
as will be made clear
later \cite{Par:QE,Dan:Pivot}. As both $m$ and $p$ can, with minor embellishments,
be represented by graphs, we refer to
$\left|\psi\right>$ of this form as a {\em two-graph state} and the two-graph state is a
bi-graphical representation of a stabilizer state.

\subsection{The Pauli group, stabilizer states, and graph states}
The single-qubit {\em Pauli group} of matrices, ${\bf P_1}$, is generated by
$X = \begin{tiny} \left ( \begin{array}{rr}
0 & 1 \\
1 & 0
\end{array} \right ) \end{tiny}$,
$Z = \begin{tiny} \left ( \begin{array}{rr}
1 & 0 \\
0 & -1
\end{array} \right ) \end{tiny}$, and $i$,
and the Pauli group for $n$ qubits is
${\bf P_n} = {\bf P_1}^{\otimes n}$. 
Formally, a stabilizer state over a system of $n$ qubits is
defined to be a joint eigenvector of a stabilizer generated by
a certain subgroup of ${\bf P_n}$ \cite{Cald:Qua,Gott:Stab,DanQECC}.
A graph state is a special case of a stabilizer state, being a joint eigenvector of
a subgroup of ${\bf P_n}$, and the graph state can be described by the edges of a simple graph with $n$ nodes
\cite{Raus:QC,Sch:QG,Hein:Graph}. The stabilizer generated by a subgroup of the Pauli group came to prominence in the mid-90's when it was used to describe a class of quantum error-correcting codes \cite{Cald:Qua,Gott:Stab}. In this context the stabilizer state describes a quantum error-correcting code of zero dimension which is robust to errors caused by a convex combination of members of the Pauli group.

It has been shown
in \cite{RP:BC1,VanD:Gr} that the graph state can always be represented
by a homogeneous
quadratic Boolean function whose structure can be bijectively mapped to the associated
graph in an obvious way.
Although the graph state has its origins in the theory of eigensystems,
its re-interpretation as a quadratic Boolean function
allows one to consider new cryptographic criteria for the function,
such as its {\em generalised bentness} \cite{RP:BC1,thesis}, or
{\em{aperiodic propagation criteria}} \cite{DanAPC}, and to justify applying such criteria to
Boolean functions of higher degree. In this paper we express the stabilizer state as
a two-graph state, this
being a simultaneous generalisation of a binary linear coset code and a
quadratic Boolean function. Such a generalisation shall allow us, in future work, to propose
and investigate new criteria for binary linear codes, and also to establish unforeseen links
between Boolean functions and coding theory.
Stabilizer states also have a natural interpretation as
GF$(4)$ additive codes \cite{Cald:Qua} and the analysis of graph states relates
naturally to recent graph-theoretic results for the associated graphs \cite{RP:Int}.

\subsection{The action of the local Clifford group}
Apart from highlighting the two-graph, magnitude-phase form of the stabilizer state,
the primary purpose of this paper is to efficiently
describe how the action of unitary matrices from the
local Clifford group, ${\bf C_n}$, modify the form of the two-graph state.
In particular, we focus
on efficiently computing spectral metrics of the form
$\sum_{U \in {\bf C_n}} \sum_{\bf{x}} |U\left|\psi\right>_{\bf{x}}|^j$ for
some integer $j$. In such cases one is only interested in the magnitudes of the elements
of $U\left|\psi\right>$, not their phases, and this simplification allows us to further
simplify as we only need to sum over all $U$ in a size $3^n$ subgroup, ${\bf T_n}$,
of the local Clifford group, as shall be explained later.
It has been shown in previous work
\cite{Bou:Tree,DanQECC,DanClass,Dan:Pivot,Glynn:Graph,RP:BC1,RP:Piv,thesis,VanD:Gr,VanD:Pivot}
how the action of matrices from the local Clifford group
on the graph state can be realised
using only local graphical operations, where linear phase offsets, generated by each matrix action,
are repeatedly eliminated by invoking local equivalence.
These two graphical
operations are called {\em edge-local complementation} (ELC)
(sometimes called {\em pivot}),
and {\em local complementation} (LC), where ELC can be decomposed into a series of LCs.
Whilst ELC acting on bipartite graphs
can be used to classify binary linear codes \cite{Dan:Pivot},
LC acting on graphs can
be used to classify additive codes over GF$(4)$ \cite{DanQECC}.
In this paper, ELC and LC are generalised so as to realise the
action of matrices from the local Clifford group
on the two-graph state, without the requirement to repeatedly eliminate linear phase offsets.

\subsection{Example}
Here is a small example that should clarify some of the ideas discussed so far:

Consider the $n = 3$-qubit graph state, $\left| \psi \right >$, which is
the joint eigenvector of the
group of commuting operators, 
$\langle {\cal{K}}_0,{\cal{K}}_1,{\cal{K}}_2 \rangle$, where ${\cal{K}}_0 = X\otimes Z\otimes I$, ${\cal{K}}_1 = Z\otimes X\otimes Z$,
${\cal{K}}_2 = I\otimes Z\otimes X$, and $I$ is the $2 \times 2$ identity matrix. Then $\left| \psi \right >$
can be represented by the simple graph, $P = ({\cal V}_P,{\cal E}_P)$, with vertices
${\cal V}_P = \{0,1,2\}$ and edges ${\cal E}_P = \{01,12\}$.
The state $\left| \psi \right >$ can be written explicitly in the
computational basis as $\frac{1}{\sqrt{8}}(\left| 000 \right > + \left| 001 \right > + \left| 010 \right > - \left| 011 \right >
 + \left| 100 \right > + \left| 101 \right > - \left| 110 \right > + \left| 111 \right >)$, which we abbreviate to
$\left| \psi \right > = \frac{1}{\sqrt{8}}(1,1,1,-1,1,1,-1,1)$, and
can alternatively be written,
using {\em{algebraic normal form}} (ANF) for the phase,
as $\left| \psi \right > = (\frac{1}{\sqrt{8}}(-1)^{p({\bf{x}})}) =
(\frac{1}{\sqrt{8}}(-1)^{x_0x_1 + x_1x_2})$, where
$p : {\mathbb Z}_2^3 \rightarrow {\mathbb Z}_2$ , and
$\left| \psi \right >_{\bf x} = \frac{1}{\sqrt{8}}(-1)^{p({\bf{x}})}$.
The quadratic monomial $x_ix_j$ is a term in $p$ iff $ij$ is an edge in $P$.
Let
$\left| \psi' \right > = (I \otimes N \otimes I)\left| \psi \right > = \frac{\omega}{\sqrt{8}}(-1)^{x_0x_1 + x_0x_2 + x_1x_2}
i^{3(x_0 + x_1 + x_2)}$, where $i = \sqrt{-1}$, and $\omega = e^{\pi i/4}$.
Then $\left| \psi' \right >$ is flat, and
the quadratic part of $\left| \psi' \right >$ represents the
graph $P'$ with edge set ${\cal{E}}_{P'} = \{01,02,12\}$ - the affine part of
$\left| \psi' \right >$
can be eliminated by subsequent action of the diagonal unitary,
$D = \omega^7
\begin{tiny}\left(\begin{array}{rr}1&0\\ 0&i\end{array}\right)
\otimes \left(\begin{array}{rr}1&0\\ 0&i\end{array}\right)
\otimes \left(\begin{array}{rr}1&0\\ 0&i\end{array}\right)\end{tiny}$, which is
in ${\bf C_n}$.
The state,
$\left| \psi' \right >$ is, by construction,
local unitary equivalent, via unitaries from ${\bf C_n}$,
to the graph state $\left| \psi \right >$, and therefore represents, to within local equivalence,
the same stabilizer state as $\left| \psi \right >$.
A graphical way of
interpreting the action of $D(I \otimes N \otimes I)$
on $\left| \psi \right >$ is to perform the action of local complementation 
on $P$ at vertex $1$ to produce graph $P'$, that is to complement all edges between
the neighbours of vertex $1$. This example shows how the action of a
unitary from the local Clifford group
maps between two locally-equivalent graph states. But, let us now consider
$\left| \psi'' \right > = (H \otimes I \otimes I)\left| \psi \right > = \frac{1}{2}(1,0,0,1,1,0,0,-1)$,
which, by construction,
is the same stabilizer state as $\left| \psi \right >$, to within local equivalence,
but not a graph state
as we cannot represent $\left| \psi'' \right >$ using only a quadratic Boolean function for its phase part.
But we can represent $\left| \psi'' \right >$ using a two-ANF representation:
$$ \left| \psi'' \right > = \frac{1}{2}m''({\bf{x}})(-1)^{p''({\bf{x}})} = \frac{1}{2}(x_0 + x_1 + 1)(-1)^{x_1x_2}, $$
where $\left| \psi'' \right >_{\bf{x}} = m''({\bf{x}})(-1)^{p''({\bf{x}})}$, $m'': {\mathbb Z}_2^2 
\rightarrow {\mathbb Z}_2$, and
$p'': {\mathbb Z}_2^2 \rightarrow {\mathbb Z}_2$.
As mentioned previously,
throughout this paper we perform a final embedding of the output of $m$,
namely ${\mathbb Z}_2$, into the complex, $\{0,1\}$,
so as to interpret the two-ANF state as a pure quantum state. To keep notation
simple, we shall not formally indicate this embedding. We refer to this two-ANF representation as
an {\em{algebraic polar form}} (APF) and represent the two ANFs by two graphs, where the polynomials,
$m''$ and $p''$, can be written as magnitude and phase graphs,
respectively. $p''$ maps to the {\em{phase graph}} $P''$ with vertex and edge sets
${\cal{V}}_{P''} = \{1,2\}$ and ${\cal{E}}_{P''} = \{12\}$,
respectively, and $m''$
maps to the {\em{magnitude graph}} $M''$ with vertex and edge sets
${\cal{V}}_{M''} = \{0,1\}$ and
${\cal{E}}_{M''} = \{01\}$, respectively.
The method of mapping a
magnitude polynomial, $m({\bf{x}})$ to its associated magnitude graph, $M$,
is explained in definition \ref{df:TGS}.
Although we are conceptually dealing with a two-graph object, $(M,P)$,
we prefer to act on an associated single graph, $G$, where the
vertex and edge sets of $G$ satisfy
${\cal{V}} = {\cal{V}}_M \cup {\cal{V}}_P$,
${\cal{E}} = {\cal{E}}_M \cup {\cal{E}}_P$, respectively.
If we further bipartition the vertex set ${\cal{V}}$ into ${\cal{L}}$ and ${\cal{R}}$,
where ${\cal{V}} = {\cal{L}} \cup {\cal{R}}$, ${\cal{L}} \cap {\cal{R}} = \emptyset$, and
${\cal{R}} = {\cal{V}}_P$, then we can exactly recover the
graph pair, $(M,P)$, from the graph-set pair, $(G,{\cal{R}})$, so the graph pair and graph-set pair definitions are equivalent.

\subsection{Local equivalence and a subgroup of the local Clifford group}
{\em{Measurement-based quantum computing}} using {\em{cluster states}} \cite{Raus:MBQC} or, more generally, graph
states, considers the action of unitary matrices on the graph state,
along with measurement of its vertices and classical communication
between its vertices. Of particular importance are the action of those unitaries
from ${\bf{C_n}}$ on the graph state \cite{Raus:MBQC}.
A classification of the equivalence classes of graph states,
wrt unitaries from
${\bf{C_n}}$, has been undertaken \cite{Hohn:SD,Hein:Gr,DanClass,DanDat,Glynn:Geom}, and, until very recently, it was an
open problem to prove that such equivalence classes remain the same even when one widens the class of unitaries considered to include
local unitaries outside the local Clifford group \cite{Sch:LULC}. Recent results have, however, suggested that this
so-called `LU$=$LC conjecture' is false \cite{Gro:LULC,Ji:LULC}.
Equivalence of graph states wrt the action of unitaries from ${\bf{C_n}}$ can be
realised on the associated graphs by means of {\em{local complementation}}
\cite{Bou:Tree,Bou:Grph,Glynn:Graph,VanD:Gr,DanQECC}. In \cite{RP:BC1} it was
shown that successive local complementations on a graph can be realised by considering the action on the graph state
of only a small subgroup, ${\bf{T_n}}$, of matrices from
${\bf{C_n}}$, where ${\bf{T_n}} = {\bf{T_1}}^{\otimes n}$ and
${\bf{T_1}} = \{I,\lambda,\lambda^2\}$ is a cyclic subgroup generated by
$\lambda = \frac{\omega^5}{\sqrt{2}} \begin{tiny} \left ( \begin{array}{rr}
1 & i \\
1 & -i
\end{array} \right ) \end{tiny}$, where
${\bf{T_1}} \subset {\bf{C_1}}$, and $|{\bf{T_1}}| = 3$. We call ${\bf{T_n}}$ the {\em{local transform group}} over $n$ qubits.
Moreover ${\bf{C_1}} = {\bf{T_1}} \times {\bf{D_1}}$,
where $|{\bf{C_1}}| = 192$ and $|{\bf{D_1}}| = 64$, and ${\bf{D_1}}$ is a
subgroup of diagonal and antidiagonal $2 \times 2$
matrices generated by $\omega$,
$\begin{tiny} \left ( \begin{array}{rr}
1 & 0 \\
0 & i
\end{array} \right ) \end{tiny}$, and
$\begin{tiny} \left ( \begin{array}{rr}
0 & 1 \\
1 & 0
\end{array} \right ) \end{tiny}$.
In \cite{RP:BC1} we concentrate only on the subset of transforms
from ${\bf{T_n}}$ whose action on a graph state yield flat
spectra, where these flat spectra can be interpreted, to within a final multiplication ny a member of ${\bf D_n}$, as a set of
locally-equivalent graph states.
In this paper we, more generally, consider the action of all $3^n$ transforms from
${\bf{T_n}}$ on a graph state. We show that
a graph state is always locally equivalent, wrt unitaries from ${\bf C_n}$,
to a two-graph object, $(M,P)$, where $M$ and $P$
represent {\em{magnitude}} and {\em{phase}} graphs for the state, respectively,
and the action of any member of ${\bf{T_n}}$ on such a state can be expressed as a
graphical
operation on the combined graph formed by $M$ and $P$, to yield another graph which can, once again,
be split into a two-graph, $(M',P')$ object.

To compute the two-graph orbit and/or perform spectral analysis of a certain
graph or stabilizer state, neither \cite{RP:BC1} or this paper use ${\bf{T_1}}$ explicitly.
Instead we use the set of three matrices, $\{I,H,N\}$.
It is evident that
$\lambda = \omega^5 N$, and $\lambda^2 = \omega^3
\begin{tiny} \left ( \begin{array}{rr}
1 & 0 \\
0 & -i
\end{array} \right ) \end{tiny}H$,
so one can always obtain the action of any unitary from the transform group, ${\bf{T_n}}$, by first applying the appropriate
unitary from $\{I,H,N\}^{\otimes n}$, then applying a suitable unitary from ${\bf{D_n}}$,
where $\{U_0,U_1,\ldots,U_{t-1}\}^{\otimes n}$ means the set of matrices formed by any
$n$-fold tensor product of matrices from the set $\{U_0,U_1,\ldots,U_{t-1}\}$.
But the application of any unitary from ${\bf{D_n}}$ to a state does not change coefficient magnitudes.
So, to perform spectral analysis based on magnitude computations,
we can use $\{I,H,N\}$ instead of ${\bf{T_1}} = \{I,\lambda,\lambda^2\}$.
We choose to do this because $H$ is the
2-point periodic {\em{discrete Fourier transform}} (DFT), and $N$ is the 2-point negaperiodic DFT, and using this viewpoint
facilitates a `Fourier' approach to the analysis of graph states and stabilizer (two-graph)
states. However, all results in this paper wrt
$\{I,H,N\}^{\otimes n}$ are trivially translated into results wrt ${\bf{T_n}}$, as shown in subsection \ref{lambda}.

\subsection{Main aims of this paper}
In previous work the use of graphs to represent graph states has simplified both
theoretical and computational analyses of graph
states. Our primary aim, in this paper, is to use two-graph states to represent
stabilizer states, so as to simplify analysis of the stabilizer state, where the
graph state is a special case of the two-graph state.
We obtain computationally efficient algorithms for the spectral analysis of the graph and
two-graph state wrt ${\bf{C_n}}$, as
the set of spectra computed via the action of $\{I,H,N\}^{\otimes n}$ on a two-graph state
acts as a precise summary of the much larger set of spectra resulting from the action of any
member of ${\bf{C_n}}$ on the two-graph state, where the action of ${\bf{D_n}}$ has been factored out.
A secondary aim of this paper is to provide an efficient, localised, graphical method to
realise the action of any member of ${\bf{C_n}}$
on the graph or two-graph state. This is made possible because $H$ and $N$ are generators of ${\bf{C_1}}$ and, in this paper, we characterise
the actions of $H$ and $N$ on the two-graph state and, therefore, ${\bf{C_n}}$ is covered via repeated actions of $H$ and $N$.
Moreover, as ${\bf C_n} = {\bf D_n} \times {\bf T_n}$, and ${\bf T_n} = D\{I,H,N\}^{\otimes n}$, $D \in {\bf D_n}$ then,
to within a final action by a member of ${\bf{D_n}}$, the
graphical characterisation of the action of any unitary from $\{I,H,N\}^{\otimes n}$ on a two-graph object,
$(G,{\cal{R}}) \equiv (M,P)$ will, at the same time, graphically
characterise the action of successive unitaries
from $\{I,H,N\}^{\otimes n}$ on $(G,{\cal{R}})$.

Section \ref{Prelim} onwards of this paper makes precise the discussion of this introduction.
Let $U_v = I^{\otimes v} \otimes U \otimes I^{\otimes n-v-1}$.
Then it is shown that
\bite
	\item {\em{Two-Graph State}}: The two-graph state
	comprises a graph with loops, $G$, and a set ${\cal{R}}$ or,
	equivalently, two graphs $M$ and $P$ ($(G,{\cal{R}}) \equiv (M,P)$), and is represented by $m(-1)^p$,
	where $m$ is a product of affine Boolean functions, and $p$ is a quadratic Boolean function,
	The transition between two representations of the same two-graph state
	is characterised via the operation called `swp' which operates on $(G,{\cal{R}})$.
	Then the action of a unitary, $H_v$, $v \in {\cal{V}}$, on $(G,{\cal{R}})$ is characterised
	via the conditional action of `swp' on $(G,{\cal{R}})$, and a set operation on ${\cal{R}}$,
	to produce another two-graph state, $(G',{\cal{R}}') \equiv (M',P')$.
	Consequently the action of any transform from $\{I,H\}^{\otimes n}$
	on a two-graph state can be computed graphically plus a few set operations.
	\item {\em{Generalised Two-Graph State}}: 
	The generalised two-graph state comprises a graph with loops,
	$G$, and two sets ${\cal{R}}$ and ${\cal{Q}}$ or, alternatively, two graphs
	$M$ and $P$ and a set ${\cal{Q}}$,
	($(G,{\cal{R}},{\cal{Q}}) \equiv (M,P,{\cal{Q}})$), and is represented by $mi^p$,
	where $m$ is a product of affine Boolean functions, and $p$ is a quadratic
	function from ${\mathbb{Z}}_2^n \rightarrow {\mathbb{Z}}_4$ of the special form.
	The possible loops at vertices in ${\cal{R}}$ are weighted according to elements
	in ${\cal Q}$.
	The transition between two representations of the same generalised two-graph state
	is characterised via the generalised operation called `swp' which now operates on $(G,{\cal{R}},{\cal{Q}})$.
	Then the actions of unitaries, $H_v$ and $N_v$, $v \in {\cal{V}}$, on $(G,{\cal{R}},{\cal{Q}})$ can be characterised
	via the conditional action of `swp' on $(G,{\cal{R}},{\cal{Q}})$, and certain other
	conditional operations on $G$,${\cal{R}}$, and ${\cal{Q}}$,
	to produce another generalised two-graph state,
	$(G',{\cal{R}}',{\cal{Q}}') \equiv (M',P',{\cal{Q}}')$.
	Consequently the action of any transform from $\{I,H,N\}^{\otimes n}$
	on a generalised two-graph state can be computed graphically plus a few set operations.
	\item {\em{Spectral Analysis of the Graph State}}:
	By considering $L_j$ norms of the graph state wrt the local Clifford group, we demonstrate the usefulness
	of the generalised two-graph representation to compute, efficiently, these norms.
\eite

We also generalise the graph operations of {\em{edge-local complementation}} (ELC) \cite{RP:Int,VanD:Pivot,Dan:Pivot}
and {\em{local complementation}} (LC) \cite{Bou:Tree,Bou:Grph,Glynn:Graph,VanD:Gr,DanQECC} to the two-graph operations,
{\em{edge-local complementation}}$^{\odot}$ (ELC$^{\odot}$)
and {\em{local complementation}}$^{\odot}$ (LC$^{\odot}$) which now take into account graph loops.

A recent paper \cite{Ell:CliffGraph}, independent to ours, also extends the graphical
notation to deal with the action of the local Clifford group on stabilizer states. \cite{Ell:CliffGraph}
also implicitly utilises a bipartite splitting of the graph (via `hollow' and `filled-in' nodes), and also
requires graph loops. \cite{Ell:CliffGraph} describes the action of $H$, $S$ and $Z$ on their graph, whereas we describe
the action of $H$ and $N$. Their model and our model must be equivalent in terms
of characterising the action of the local Clifford group on stabilizer states. However one can list some
differences in approach between the papers as follows. Firstly, \cite{Ell:CliffGraph} focusses, primarily, on modelling the
action of the local Clifford group. In contrast, we focus, primarily, on modelling the action of the local transform
group, ${\bf T_n}$, and/or $\{I,H,N\}^{\otimes n}$ as we are more interested in
evaluating spectral metrics for the graph state as efficiently as possible,
up to as many qubits as possible, although a secondary result of our work is that the action of the complete local Clifford
group is also modelled. Secondly, \cite{Ell:CliffGraph} implicitly considers the stabilizer state as a joint eigenstate, and
does not therefore have to consider an explicit basis for the state. In contrast, in our paper we consider an explicit computational
basis for the state, and this allows us to distinguish between magnitude and phase properties of the stabilizer state. This,
in turn, allows us to evaluate spectral metrics, associated with the graph state, with small effort.
Thirdly, by distinguishing between magnitude and phase, we highlight the stabilizer state as a simultaneous
generalisation of both the usual classical cryptographic representation of Boolean functions (the phase part), and
the usual parity-check graph (factor graph) representation of classical binary linear codes (the magnitude part). The link to parity-check
graphs was investigated in \cite{Par:QFG} and the interaction between magnitude and phase graphs
was investigated in \cite{Par:QE} and has since been exploited in \cite{Par:SB,RP:BC1,RP:Piv,DanQECC,DanAPC,DanPAR,DanClass}.
A preliminary version of this paper was presented at \cite{graphOxford}.

For the rest of this paper we only consider
connected graph states as, otherwise, the system is degenerate.
We also ignore the global multiplicative constants in front of the
state vector. In particular
our method strictly only distinguishes between the action on the two-graph state
of matrices from the size $24^n$ subgroup of the local Clifford group,
as the supplementary multiplication of
the state by a power of $\omega$ is ignored, i.e we remove the centre of the local Clifford
group. For most scenarios this global multiplicative constant can be
ignored, however a trivial refinement of our method would be necessary if one was to relate the action of the same sequence
of matrices from the local Clifford
group on two or more two-graph states.

\section{Formal Definitions}
\label{Prelim}

Define $U_v = I^{\otimes v} \otimes U \otimes I^{\otimes n-v-1}$.

\begin{df} 
Let $I= \begin{tiny} \left(\begin{array}{cc}
1&0\\
0&1 
\end{array}\right) \end{tiny}$, $H=\frac{1}{\sqrt{2}} \begin{tiny}\left(\begin{array}{cc}
1&1\\
1&-1 
\end{array}\right) \end{tiny}$, and $N=\frac{1}{\sqrt{2}} \begin{tiny}\left(\begin{array}{cc}
1&i\\
1&-i 
\end{array}\right) \end{tiny}$ be the $2\times2$ identity,
{\em Walsh-Hadamard}, and {\em negahadamard} \cite{Par:SB} matrices, respectively.
The set of $3^n$ transforms, $\{I,H,N\}^{\otimes n}$, is defined as the set of all $n$-fold tensor product
combinations of matrices $I,\,H$ and $N$.
 \label{IHN:def}\end{df}

\begin{df} \cite{Par:QE}
A pure $n$-qubit state,
$\left| \psi \right > = (\left| \psi \right >_{0\ldots 00},\left| \psi \right >_{0\ldots 01},
\ldots,\left| \psi \right >_{1\ldots 11})$, with vector entries satisfying
$\left| \psi \right >_{\bf{x}} \in
c\{0,1,-1\}$, for some complex constant, $c$, can always be written in the form
$$ \left| \psi \right > = cm({\bf x})(-1)^{p({\bf x})}, $$
where $\left| \psi \right >_{\bf{x}} = cm({\bf x})(-1)^{p({\bf x})}$,
$\forall {\bf{x}} \in {\mathbb{Z}}_2^n$,
and $m, p: {\mathbb{Z}}_2^n \rightarrow {\mathbb{Z}}_2$ are both Boolean functions.
The output of $m$ is embedded in the complex numbers.
We separate, thus, {\em magnitude}, $m$, and {\em phase}, $p$, of $\left| \psi \right >$, and call
such a representation the {\em algebraic polar form (APF)} of $\left| \psi \right >$.
\end{df}
{\bf{Remark: }} In order to simplify notation we henceforth omit the normalisation constant, $c$, from
any expression of the form $\left| \psi \right > = cm({\bf x})(-1)^{p({\bf x})}$ or similar.

\begin{df}
Let $G = ({\cal{V}},{\cal{E}})$ be a graph with vertex set, ${\cal{V}}$,
and edge set, ${\cal{E}}$, where $G$ may contain loops.
Let $\Gamma_G$ be the
binary adjacency matrix of $G$. Then, for two graphs, $G$ and $G'$,
both defined over the same $n$ vertices, $G'' = G \pm G'$ means that the adjacency matrix,
$\Gamma_{G''}$, of $G''$, satisfies
$\Gamma_{G''} = \Gamma_G \pm \Gamma_{G'}$.
Let ${\cal{N}}_v^G$ be the set of vertices other than $v$ which are neighbours of vertex $v$
in $G$. 
Let ${\cal{B}}_v^G = {\cal{N}}_v^G \cup \{v\}$ be the set of vertices
less than or equal to one edge distance from vertex $v$ in $G$.
For a vertex set, ${\cal{V}}$, let $G_{\cal{V}}$ be the {\em{induced subgraph}} of $G$ on ${\cal{V}}$,
comprising all edges from $G$ whose endpoints are both in ${\cal{V}}$.
For vertex sets, ${\cal{V}}$ and ${\cal{V}}'$, define
$K_{{\cal{V}},{\cal{V}}'}$ to be the graph with binary adjacency matrix,
$\Gamma_K$, where $\Gamma_{K_{ij}} = \Gamma_{K_{ji}} = 1$ iff
$i \in {\cal{V}}\setminus{\cal{V}}', j \in {\cal{V}}'$
or $i=j\in{\cal{V}}\cap{\cal{V}}'$.
$G_{\cal{V}}$ and $K_{{\cal{V}},{\cal{V}}'}$ may contain loops. Let $G_v = K_{\{v\},{\cal{N}}_v^G}$.
Let $\Delta_{\cal{V}}$ be the graph with diagonal binary adjacency matrix,
$\Gamma_{\Delta}$, where $\Gamma_{\Delta_{ij}} = 1$ iff $i = j \in {\cal{V}}$.
The {\em{complete graph}}, $C_{{\cal{V}}}$, is the simple graph whose
edge set comprises the set of edges $\{vw, \mz
\forall v,w \in {\cal{V}}, v < w \}$.
\label{graphdefs} 
\end{df}

\begin{df} \cite{Bou:Tree,Bou:Grph,Glynn:Graph,VanD:Gr,DanQECC}
The action of {\em{local complementation}} (LC) on a simple graph $G$ at vertex $v$ is
the graph transformation obtained by replacing the subgraph $G_{{\cal{N}}_v^G}$
by its complement.
\end{df}
Example: The action of LC on a graph at vertex $v=0$, is shown in figure \ref{LCact}.
\begin{figure}[!h]
\cb
\includegraphics[width=.5\hsize]{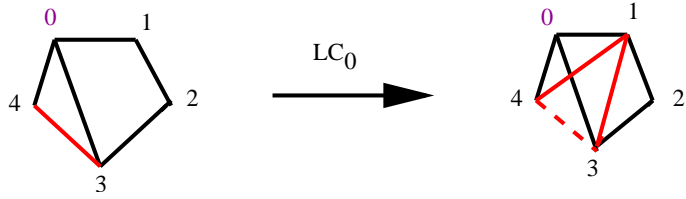}
\ce
\caption{The action of LC at vertex $0$}
\label{LCact}
\end{figure}

\begin{df} \cite{RP:Int,VanD:Pivot,Dan:Pivot}
The action of {\em{edge local complementation}} (ELC) on a simple graph $G$ at edge $vw$ is
the graph transformation obtained by performing LC at vertex $v$, then vertex $w$, then
vertex $v$ again (or, equivalently, at $w$, then $v$, then $w$ again).
\end{df}

In this paper we generalise both LC and ELC so as to operate on a two-graph object.

\begin{df}
Let $G$ be a graph with possible loops, containing an edge $vw$, $v \ne w$.
Then $G^{vw}$ is the graph resulting from
the action of {\em{edge local complementation}}$^{\odot}$ (ELC$^{\odot}$) on edge $vw$ of $G$, where
$$ G^{vw} = G + K_{{\cal{B}}_v^G,{\cal{B}}_w^G} + \Delta_{\{v,w\}}
 + \Gamma_{G_{vv}}\Delta_{{\cal{B}}_w^G} + \Gamma_{G_{ww}}\Delta_{{\cal{B}}_v^G}\enspace. $$
\label{ELC}
\end{df}

\noindent
Example: The action of ELC$^{\odot}$ on the following graph at edge $vw=31$, is shown in figure \ref{ELCLoop}.
\begin{figure}[!h]
\cb
\includegraphics[width=.5\hsize]{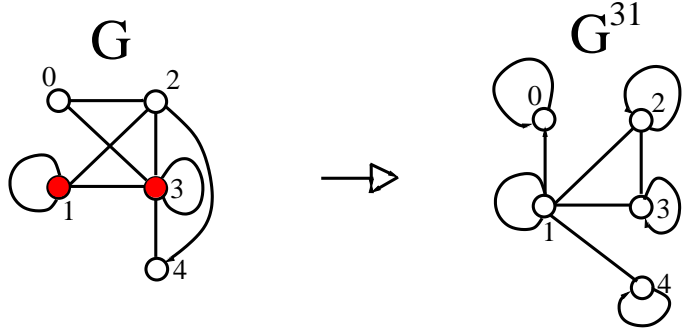}
\ce
\caption{The action of ELC$^{\odot}$ at edge $31$}
\label{ELCLoop}
\end{figure}

\noindent {\bf{Remark: }} From definition \ref{ELC}, even when $G$ is a simple graph, $\Gamma_{G_{vv}} = \Gamma_{G_{ww}} = 0$,
we see that possible loops can still be produced from term $K_{{\cal{B}}_v^G,{\cal{B}}_w^G}$.
The ELC
operation, which acts only on simple graphs, can be recovered from ELC$^{\odot}$ by applying ELC$^{\odot}$ to a simple graph, then
deleting any resultant loops from the output. 

\vspace{3mm}

The {\em{Pauli matrix group}} is generated by
$X = \begin{tiny} \left ( \begin{array}{rr}
0 & 1 \\
1 & 0
\end{array} \right ) \end{tiny}$,
$Z = \begin{tiny} \left ( \begin{array}{rr}
1 & 0 \\
0 & -1
\end{array} \right ) \end{tiny}$,
and $i$.
Let $S = \begin{tiny} \left ( \begin{array}{rr}
1 & 0 \\
0 & i
\end{array} \right ) \end{tiny}$.

\begin{df}
The {\em{local Clifford matrix group}}, ${\bf{C_1}}$,
is the group of 192 $2 \times 2$ matrices that normalise the Pauli group,
and can be decomposed as ${\bf{C_1}} = {\bf{D_1}} \times {\bf{T_1}}$ where
we call ${\bf{D_1}}$ the {\em{diagonal group}} and ${\bf{T_1}}$ the {\em{transform group}}.
${\bf{T_1}} = \{I,\lambda,\lambda^2\}$ is a cyclic subgroup generated by
$\lambda = \omega^5N$, where $i = \sqrt{-1}$ and $\omega = \sqrt{i}$,
${\bf{D_1}} = {\bf{C_1}} / {\bf{T_1}}=\left< S,X,\omega \right>$,
and comprises only diagonal or antidiagonal $2 \times 2$
matrices, and
$|{\bf{C_1}}| = 192$, $|{\bf{T_1}}| = 3$, and $|{\bf{D_1}}| = 64$.
We call ${\bf{C_n}}$, ${\bf{T_n}}$, and ${\bf{D_n}}$, the groups formed by $n$-fold tensor
products of matrices from ${\bf{C_1}}$, ${\bf{T_1}}$, and ${\bf{D_1}}$, respectively, where
$|{\bf C_n}| = 8 \times 24^n$ and $|{\bf D_n}| = 8 \times 8^n$.
\label{groups}
\end{df}

Observe that $\lambda = \omega^5N$ and
$\lambda^2 = \omega^3S^{-1}H$ so, for any $U \in {\bf{T_n}}$, and any
$V \in \{I,H,N\}^{\otimes n}$, we have
$U = \omega DV$ for some $D \in {\bf{D_n}}$. For the rest of the paper we focus on the action of
$\{I,H,N\}^{\otimes n}$ on the graph state and, more generally, on the two-graph state, where the alternative action
of ${\bf{T_n}}$ on the state can be derived easily (see section \ref{lambda}).

\begin{df} \cite{Raus:QC}
Given a graph, $P$, on $n$ vertices with adjacency matrix, $\Gamma_P$, define $n$ commuting Pauli operators
$$ {\cal{K}}_{P_j} = X_j\prod_{k \in {\cal{N}}_j} Z_k = X_j\prod_{k=0}^{n-1} Z_k^{\Gamma_{P_{jk}}}, $$
where ${\cal{N}}_j$ is the set of vertices in $P$ that are neighbours of vertex $j$.
The stabilizer, ${\cal{K}}_P$, is generated by $\langle{\cal{K}}_{P_0},{\cal{K}}_{P_1},\ldots,{\cal{K}}_{P_{n-1}}\rangle$, and $\left| \psi \right >$
is a {\em{graph state}} iff
$ {\cal{K}}_P\left| \psi \right > = \left| \psi \right >$,
for some simple graph, $P$. Explicitly,
in the computational basis, \cite{RP:BC1,VanD:Gr},
$$ \left| \psi \right >_{\bf x} = (-1)^{\sum_{i < j} \Gamma_{P_{ij}}x_ix_j}. $$
Any state
$\left| \psi \right >' = U\left| \psi \right >$, $U \in {\bf{C_n}}$,
is a {\em{stabilizer state}} locally equivalent to $\left| \psi \right >$.
\label{graphstate}
\end{df}

\section{The Two-Graph State}
\label{TGS}

\begin{df}
A {\em{two-graph state}} is a pure quantum state, $\left| \psi \right >$, of $n$ qubits that
can be defined by a graph, $G = ({\cal{V}},{\cal{E}})$, and a bipartition, $({\cal{L}},{\cal{R}})$,
where ${\cal{V}} = {\cal{L}} \cup {\cal{R}}$ and ${\cal{L}} \cap {\cal{R}} = \emptyset$, and where
$G_{\cal{L}}$ is the empty graph apart from possible loops. The pair, $(G,{\cal{R}})$, explicitly encodes a
two-graph object, $(M,P)$, where $P = G_{\cal{R}}$, and $M = G - P$ is a bipartite graph.
The state, $\left| \psi \right >$,
is defined by its algebraic polar form, $\left| \psi \right > = cm({\bf{x}})(-1)^{p({\bf{x}})}$, where
$c\in{\mathbb C}$, $m({\bf{x}}) : {\mathbb{Z}}_2^n \rightarrow {\mathbb{Z}}_2$ is a product of affine functions of the form,
$$ m({\bf{x}}) = \prod_{i \in {\cal{L}}} (\Gamma_{M_{ii}} + 1 + x_i + \sum_{j \in {\cal{R}}} \Gamma_{M_{ij}}x_j), $$
such that $m = 1$ when ${\cal{L}} = \emptyset$, and where
$p({\bf{x}}) : {\mathbb{Z}}_2^n \rightarrow {\mathbb{Z}}_2$ is a
quadratic function of the form,
$$ p({\bf{x}}) = \sum_{i,j \in {\cal{R}}, i < j} \Gamma_{P_{ij}}x_ix_j
 + \sum_{j \in {\cal{R}}} \Gamma_{P_{jj}}x_j. $$
\label{df:TGS}
\end{df}

\noindent
{\bf{Remark: }} For $(G,{\cal{R}}) \equiv (M,P)$ a two-graph state, $M$ and $P$ cannot contain loops at vertices in 
${\cal{R}}$ and ${\cal{L}}$, respectively. Also,  although at first it seems that we don't distinguish between for instance 
$m=(x_0+x_1+x_2+1)$ and $m=(x_0+x_1+1)(x_0+x_2+1)$, we do: by definition \ref{df:TGS}, the form 
$m=(x_0+x_1+x_2+1)$ can be represented, non-uniquely, by ${\cal L}=\{0\}$ and 
${\cal R}=\{1,2\}$, while the form
$m = (x_0+x_1+1)(x_0+x_2+1)$ can be represented, non-uniquely, by ${\cal L}=\{1,2\}$ and 
${\cal R}=\{0\}$. The factorization of $m$ into a product of affine terms of the form shown
in definition \ref{df:TGS} reflects the fact that $m$ represents a binary
linear coset code, ${\cal C}$, where each affine factor of $m$ represents a row of a systematic parity check matrix,
${\cal H}$, for ${\cal C}$, where ${\cal L}$ is an information set for ${\cal C}$.
For instance, with ${\cal R} = \{0,1,4\}$,
$m = (x_2 + x_0 + x_1 + 1)(x_3 + x_1 + x_4)(x_5 + x_0  + x_4 + 1)$ represents the systematic parity check matrix,
$ {\cal H} = \left ( \begin{tiny} \begin{array}{cccccc}
1 & 1 & 1 & 0 & 0 & 0 \\
0 & 1 & 0 & 1 & 1 & 0 \\
1 & 0 & 0 & 0 & 1 & 1
\end{array} \end{tiny} \right ), $ for the binary linear coset code, ${\cal C}$, with coset leader $000100$.

\indent 

We first describe the action of `$\m{swp}$' on the two-graph state at edge $vw$.
\begin{df} Let $\left| \psi \right > = m(-1)^p$ be a two-graph state over $n$ qubits, represented by the graph-set
$(G,{\cal{R}}) \equiv (M,P)$.
Let $v \in {\cal{L}}$ and $w \in {\cal{N}}_v^G$.
Then the action
of $\m{swp}$ at edge $vw$ is the operation that interchanges the roles of $v$ and $w$; i.e. the operation that
takes ${\cal R}$ to ${\cal{R}}' = {\cal{R}} \cup \{v\} \setminus \{w\}$, and results in a two-graph state, 
$m(-1)^{p'}$, where
$\left| \psi \right > = m(-1)^{p'} = m(-1)^p$.
\end{df}

\noindent {\bf{Remark:}}
The action of `swp' does not change $\left| \psi \right >$ or $m$, but it changes the
graphical representation $(G,{\cal{R}}) \equiv (M,P)$ to $(G',{\cal{R}}') \equiv (M',P')$.
In coding-theoretic terms, `swp' at $vw$ updates the information set, ${\cal L}$, to
${\cal L}'$, corresponding to an update of the systematic parity-check matrix for the code,
${\cal C}$, represented by $M$. The update in parity-check matrix induces a corresponding
modification of $P$ to $P'$.

\begin{lem}
Let $\left| \psi \right >$ be a two-graph state over $n$ qubits, represented by the graph-set
$(G,{\cal{R}}) \equiv (M,P)$.
Let $v \in {\cal{L}}$ and $w \in {\cal{N}}_v^G$.
Then the action
of $\m{swp}$ at edge $vw$ results in the two-graph state with associated graph-set, $(G',{\cal{R}}') \equiv (M',P')$,
which is obtained from $(G,{\cal{R}}) \equiv (M,P)$ as follows: \\

\noindent $(G',{\cal{R}}') = \m{\bf{swp}}(G,{\cal{R}},v,w)$:

\vspace{3mm}

$$ \begin{array}{l}
\left \{
\begin{array}{l}
{\cal{R}}' = {\cal{R}} \cup \{v\} \setminus \{w\}, \\
G' = G^{vw}.
\end{array}
\right .
\end{array} $$
\label{SwpAction}
\end{lem}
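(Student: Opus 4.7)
The plan is to verify that the substitution underlying \emph{swp} — replacing $x_w$ (previously a parity coordinate in $\mathcal{R}$) by a linear expression in the new independent coordinates, and simultaneously swapping the role of $v$ — produces exactly the adjacency update prescribed by $\mathrm{ELC}^{\odot}$ at edge $vw$. Since $v \in \mathcal{L}$ and $G_{\mathcal{L}}$ contains only loops, the hypothesis $w \in \mathcal{N}_v^G$ forces $w \in \mathcal{R}$ and $\Gamma_{M_{vw}} = 1$. Hence the affine factor of $m$ indexed by $v$ is
$$ f_v({\bf x}) = 1 + \Gamma_{M_{vv}} + x_v + x_w + \sum_{j \in \mathcal{R}\setminus\{w\}} \Gamma_{M_{vj}} x_j, $$
and whenever $m({\bf x})\neq 0$ one has the codeword relation $x_w = \alpha({\bf x})$ with $\alpha := \Gamma_{M_{vv}} + x_v + \sum_{j \in \mathcal{R}\setminus\{w\}} \Gamma_{M_{vj}} x_j$. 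I would first set $\mathcal{R}' = \mathcal{R}\cup\{v\}\setminus\{w\}$ and take this relation as the defining modification of the parity-check matrix.

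Next I would compute the new magnitude graph $M'$ coding-theoretically: the $v$-row of the parity-check matrix is re-solved for $x_w$, making $w \in \mathcal{L}'$ the new pivot and $v \in \mathcal{R}'$ the new information coordinate, so the row becomes $x_w + x_v + \sum_{j\in\mathcal{R}\setminus\{w\}} \Gamma_{M_{vj}} x_j = \Gamma_{M_{vv}}$. For every other $v' \in \mathcal{L}\setminus\{v\}$ with $\Gamma_{M_{v'w}} = 1$, we must XOR the new $v$-row into the $v'$-row to restore the systematic form, which produces the entrywise updates $\Gamma_{M'_{v'j}} = \Gamma_{M_{v'j}} + \Gamma_{M_{v'w}}\Gamma_{M_{vj}}$, $\Gamma_{M'_{v'v}} = \Gamma_{M_{v'w}}$, and $\Gamma_{M'_{v'v'}} = \Gamma_{M_{v'v'}} + \Gamma_{M_{v'w}}\Gamma_{M_{vv}}$. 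This is precisely the bipartite portion of $K_{\mathcal{B}_v^G,\mathcal{B}_w^G}$ restricted to rows indexed by $\mathcal{L}\setminus\{v\}$, together with the loop correction $\Gamma_{G_{vv}}\Delta_{\mathcal{B}_w^G}$ (note $\Gamma_{G_{vv}}=\Gamma_{M_{vv}}$ since $v\in\mathcal{L}$), and the $\Delta_{\{v,w\}}$-term reflecting that the pivot-diagonal entry switches from $v$ to $w$.

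In parallel I would track the phase: substitute $x_w = \alpha$ into
$$ p({\bf x}) = \sum_{i<j,\, i,j\in \mathcal{R}} \Gamma_{P_{ij}} x_i x_j + \sum_{j\in\mathcal{R}} \Gamma_{P_{jj}} x_j, $$
use $x_a^2 = x_a$, and collect terms in the variables of $\mathcal{R}'$. The three pieces of $p$ that involve $x_w$ produce, respectively, (i) the cross term $\bigl(\sum_j \Gamma_{P_{jw}} x_j\bigr)\cdot \bigl(\sum_k \Gamma_{M_{vk}} x_k\bigr)$, whose bilinear expansion is exactly the XOR of the $v$-neighborhood in $M$ onto the $w$-neighborhood in $P$ (the $\mathcal{R}\times\mathcal{R}$-block of $K_{\mathcal{B}_v^G,\mathcal{B}_w^G}$, and in particular the edges $\{va : a \in \mathcal{N}_w^G\setminus\{v\}\}$ creating the new row $v\in\mathcal{R}'$), together with a loop-XOR at each $x_a$; (ii) a linear piece $\Gamma_{M_{vv}}\sum_j \Gamma_{P_{jw}}x_j$ contributing $\Gamma_{G_{vv}}\Delta_{\mathcal{B}_w^G}$ loops; and (iii) the piece $\Gamma_{P_{ww}}\alpha$ contributing $\Gamma_{G_{ww}}\Delta_{\mathcal{B}_v^G}$ loops, together with the $\Gamma_{P_{ww}} x_v$ new loop at $v$.

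Finally I would assemble all five contributions and check that they sum, entry by entry in $\Gamma$, to
$$ \Gamma_{G'} = \Gamma_G + \Gamma_{K_{\mathcal{B}_v^G,\mathcal{B}_w^G}} + \Gamma_{\Delta_{\{v,w\}}} + \Gamma_{G_{vv}}\Gamma_{\Delta_{\mathcal{B}_w^G}} + \Gamma_{G_{ww}}\Gamma_{\Delta_{\mathcal{B}_v^G}}, $$
as claimed. The main obstacle is Step 4: the five summands touch overlapping vertex sets and one has to case-split on whether each endpoint lies in $\mathcal{L}$ or $\mathcal{R}$, whether it lies in $\mathcal{N}_v^G$, $\mathcal{N}_w^G$, both or neither, and on the presence or absence of the two incident loops $\Gamma_{G_{vv}}$ and $\Gamma_{G_{ww}}$. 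The loop terms in particular are easy to miss, since they originate from the constant $\Gamma_{M_{vv}}$ in $\alpha$ and from the $\Gamma_{P_{ww}}$ loop interacting with $\alpha$; once these are correctly attributed to the $\Delta_{\mathcal{B}_w^G}$ and $\Delta_{\mathcal{B}_v^G}$ summands of Definition~\ref{ELC}, the identification $(G',\mathcal{R}') = (G^{vw},\, \mathcal{R}\cup\{v\}\setminus\{w\})$ is complete.
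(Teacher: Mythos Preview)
Your proposal is correct and follows essentially the same route as the paper: both arguments set ${\cal R}'={\cal R}\cup\{v\}\setminus\{w\}$, use the codeword relation $x_w=x_v+h_v+1$ (your $\alpha$) coming from the affine factor of $m$ indexed by $v$, substitute into the remaining factors of $m$ and into $p$, and then read off the graph updates for $M'$ and $P'$ separately before combining into $G'=G^{vw}$. Your coding-theoretic phrasing of the $M'$ update (row-reduce the systematic parity-check matrix by XORing the new pivot row into the others) is exactly the paper's re-factorisation $m'=r(x_w+x_v+h_v)\prod_{t}(x_t+x_v+h_v+h_t+1)$ said in different language, and your decomposition of the $p'$ contributions into the bilinear cross term, the $\Gamma_{G_{vv}}$-loop piece, and the $\Gamma_{G_{ww}}$-loop piece matches the paper's expansion of $p'=(x_v+h_v+1)(x_{{\cal N}_w^P}+\Gamma_{P_{ww}})+p|_{x_w=0}$.

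One small point of care: in your item (i) the cross term should be $\alpha\cdot\bigl(\sum_j \Gamma_{P_{jw}}x_j\bigr)$ with $\alpha$ containing $x_v$ with coefficient $1$ (not $\Gamma_{M_{vv}}$), so be sure your $\sum_k \Gamma_{M_{vk}}x_k$ notation does not accidentally drop or mis-weight the $x_v$ contribution; you clearly intend it correctly since you mention the new edges $\{va:a\in{\cal N}_w^G\setminus\{v\}\}$, but the final bookkeeping in Step~4 will go more smoothly if you write $\alpha=x_{{\cal B}_v^M\setminus\{w\}}+\Gamma_{M_{vv}}$ as the paper does.
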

\begin{proof}
Section \ref{proofs}.
\end{proof}

We now describe the action of $H_v$ on a two-graph state.

\begin{thm}
Let $\left| \psi \right >$ be a two-graph state over $n$ qubits, represented by the graph-set
$(G,{\cal{R}}) \equiv (M,P)$.
Let $v \in \{0,1,\ldots,n-1\}$. 
Then $\left| \psi' \right > = H_v\left| \psi \right >$ is also a two-graph state
and can be described by the graph-set $(G',{\cal{R}}') \equiv (M',P')$, where \\

\noindent $(G',{\cal{R}}') = H_v(G,{\cal{R}})$:

\vspace{3mm}

$$
\begin{array}{ll}
\left \{
\begin{array}{l}
{\cal{R}}' = {\cal{R}} \cup \{v\}, \\
G' = G,
\end{array} \right . & \mf \mbox{ if }v \in {\cal{L}}, \\ \\
\left \{
\begin{array}{l}
{\cal{R}}' = {\cal{R}} \setminus \{v\}, \\
G' = G,
\end{array} \right . & \mf \mbox{ if }{\cal{B}}_v^G \subseteq {\cal{R}} \\ \\
\left \{
\begin{array}{l}
\m{assign } w  \in {\cal{N}}_v^M, \\
(G',{\cal{R}}'') = \m{swp}(G,{\cal{R}},w,v), \\
{\cal{R}}' = {\cal{R}}'' \cup \{v\},
\end{array} \right . & \mf \mbox{ if }v \in {\cal{R}}, {\cal{B}}_v^G \not\nsubseteq {\cal{R}}.
\end{array} $$
\label{TGSAction}
\end{thm}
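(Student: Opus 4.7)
The plan is to verify each of the three cases of Theorem~\ref{TGSAction} by explicitly computing $H_v\left|\psi\right>$ in algebraic polar form, using the basic identity
$$ H_v\left|\psi\right>_{\bf x}=\frac{1}{\sqrt2}\sum_{y_v\in{\mathbb Z}_2}(-1)^{x_v y_v}\left|\psi\right>_{{\bf x}[v\leftarrow y_v]}, $$
where ${\bf x}[v\leftarrow y_v]$ denotes ${\bf x}$ with the $v\thh$ coordinate replaced by $y_v$, and then matching the resulting expression to the APF prescribed by definition~\ref{df:TGS} for the claimed $(G',{\cal R}')$.

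For Case 1 ($v\in{\cal L}$), the variable $x_v$ appears in exactly one affine factor of $m$, namely $f_v=x_v+L_v+c_v$ with $L_v=\sum_{j\in{\cal R}}\Gamma_{M_{vj}}x_j$ and $c_v=\Gamma_{M_{vv}}+1$, and does not appear in $p$ because $G$ has no ${\cal L}$-to-${\cal L}$ edges. Since the $\{0,1\}$-embedding of $f_v$ is the indicator of $x_v=L_v+c_v+1$, the Hadamard sum selects this value of $y_v$ uniquely and yields $\left|\psi'\right>\propto m'(-1)^{p+x_v(L_v+c_v+1)}$ with $m'=m/f_v$. Using $c_v+1=\Gamma_{G_{vv}}$ (because $v\in{\cal L}$ forces $\Gamma_{M_{vv}}=\Gamma_{G_{vv}}$), the extra phase rewrites as $\sum_{j\in{\cal R}}\Gamma_{G_{vj}}x_vx_j + \Gamma_{G_{vv}}x_v$, which is exactly the additional bilinear and loop contributions produced when $v$ is moved from ${\cal L}$ to ${\cal R}$ inside the unchanged graph $G$, confirming $(G',{\cal R}')=(G,{\cal R}\cup\{v\})$.

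Case 2 (${\cal B}_v^G\subseteq{\cal R}$) is essentially the inverse of Case 1. Here $x_v$ does not occur in $m$, since $v$ has no ${\cal L}$-neighbour, and $p$ contains $x_v$ only as $x_v(\Gamma_{P_{vv}}+L'_v)$ with $L'_v=\sum_{j\in{\cal R}\setminus\{v\}}\Gamma_{P_{vj}}x_j$. The Hadamard sum then collapses to a Kronecker delta, producing the new affine factor $x_v+\Gamma_{P_{vv}}+L'_v+1$; by $\Gamma_{M'_{vv}}=\Gamma_{G_{vv}}=\Gamma_{P_{vv}}$ and $\Gamma_{M'_{vj}}=\Gamma_{P_{vj}}$ for $j\in{\cal R}\setminus\{v\}$, this is exactly the factor for $v\in{\cal L}'$ prescribed by definition~\ref{df:TGS} on $(G,{\cal R}\setminus\{v\})$, while the terms of $p$ not involving $x_v$ survive unchanged as $p'$.

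Case 3 reduces to Case 1: by Lemma~\ref{SwpAction}, applying $\m{swp}$ at edge $wv$ for any $w\in{\cal N}_v^M$ leaves $\left|\psi\right>$ unchanged while replacing $(G,{\cal R})$ by $(G^{wv},{\cal R}\cup\{w\}\setminus\{v\})$, in which $v\in{\cal L}$; Case 1 then yields the final graph-set $(G^{wv},({\cal R}\cup\{w\}\setminus\{v\})\cup\{v\})$, matching the statement. The main obstacle throughout is the careful bookkeeping of loops and bilinear terms: the loop at $v$, when present, must always live in the part ($M$ or $P$) to which $v$ currently belongs, and each change of side must transfer the associated $x_v$ and $x_vx_j$ contributions correctly between $m$ and $p$. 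Once this correspondence is made explicit via definition~\ref{df:TGS}, all three cases follow directly from the two elementary identities above.
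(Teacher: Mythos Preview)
Your proof is correct and follows essentially the same route as the paper. The identity $H_v\left|\psi\right>_{\bf x}=\frac{1}{\sqrt2}\sum_{y_v}(-1)^{x_vy_v}\left|\psi\right>_{{\bf x}[v\leftarrow y_v]}$ that you use is exactly the content of the paper's Lemma~\ref{hvmp}, and your treatment of the three cases (selecting the unique $y_v$ from the affine factor in Case~1, collapsing to a Kronecker delta in Case~2, and reducing Case~3 to Case~1 via $\m{swp}$) mirrors the paper's argument; your additional explicit verification that the resulting $m',p'$ match definition~\ref{df:TGS} on $(G,{\cal R}')$ is a useful clarification that the paper leaves somewhat implicit.
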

\begin{proof}
Section \ref{proofs}.
\end{proof}

\noindent {\bf Example:} Let $\left|\psi \right >=m(-1)^p$ be a two-graph state, with $n=5$,
$m=(x_0+x_2+x_3+1)(x_1+x_2+x_3)$, $p=x_2x_3 + x_2x_4 + x_3x_4 + x_3$, and
graph $(G,{\cal{R}}) \equiv (M,P)$, where $G$ has edge set
${\cal{E}} = \{02,03,12,13,23,24,34,11,33\}$ and ${\cal{R}} = \{2,3,4\}$.
Then the action of $H_3$ on $\left|\psi \right >$ can be detailed as follows. Observe that
${\cal{B}}_3^G = \{0,1,2,3,4\} \not\subset {\cal{R}}$. Therefore, from
theorem \ref{TGSAction}, we can, arbitrarily, choose $w = 1$, as $1 \in {\cal{N}}_3^M$.
Then $(G',{\cal{R}}'') = \m{swp}(G,{\cal{R}},1,3)$, where
$G'$ has edge set ${\cal{E}} = \{01,12,13,14,23,00,11,22,33,44\}$ and
${\cal{R}}'' = \{1,2,4\}$. Finally we update ${\cal{R}}''$ to obtain
${\cal{R}}' = \{1,2,3,4\}$. The resulting graph, $(G',{\cal{R}}') \equiv (M',P')$,
represents the two-graph state $\left|\psi' \right >=m'(-1)^{p'}$, where
$m' = (x_0 + x_1)$ and $p' = x_1x_2 + x_1x_3+ x_1x_4 + x_2x_3 + x_1 + x_2 + x_3 + x_4$.
This example is illustrated in figure \ref{figHv}.

\begin{figure}[!h]
\cb
\includegraphics[width=5in]{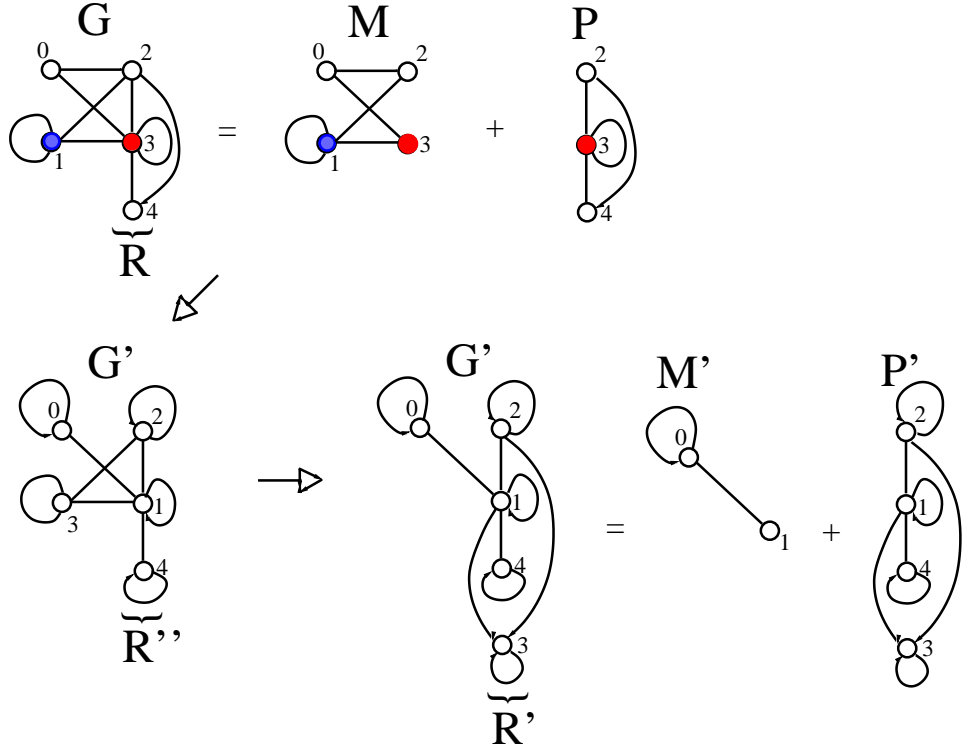}
\ce
\caption{The action of $H_3$ on a two-graph state}
\label{figHv}
\end{figure}


\begin{thm} Let $\left| \psi \right > =m(-1)^p$ be a two-graph state over $n$ qubits. Then
there always exists a graph state, $\left| \psi \right >'$, such that
$\left| \psi \right >' = DU\left| \psi \right >$, where $U \in \{I,H\}^{\otimes n}$,
and $D \in {\bf{D_n}}$.
\label{twographeq}
\end{thm}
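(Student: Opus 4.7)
The plan is to peel off the magnitude part of $|\psi\rangle = m(-1)^p$ by applying Hadamards on the vertices in ${\cal L}$, and then kill the residual linear terms of the resulting quadratic phase by a diagonal correction from $\mathbf{D_n}$. Recall that a graph state (in the two-graph formalism) is precisely the case ${\cal L} = \emptyset$ together with no loops in $P$, so that $m = 1$ and $p$ is a homogeneous quadratic.

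First I would set $U = \bigotimes_{v=0}^{n-1} u_v$, where $u_v = H$ if $v \in {\cal L}$ and $u_v = I$ otherwise; clearly $U \in \{I,H\}^{\otimes n}$. Since the factors $H_v$ for $v \in {\cal L}$ act on distinct qubits they commute, so I would apply Theorem \ref{TGSAction} to them one at a time. Each such $v$ lies in the current ${\cal L}$, so the first case of Theorem \ref{TGSAction} applies: $v$ migrates from ${\cal L}$ to ${\cal R}$ and $G$ is untouched. After processing all vertices of ${\cal L}$, the resulting two-graph state is described by $(G,{\cal V}) \equiv (M',P')$ with $M'$ empty (so $m'$ is the empty product, i.e.\ $m' = 1$) and $P' = G$. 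Therefore
$$ U|\psi\rangle = (-1)^{p'}, \qquad p' = \sum_{i<j} \Gamma_{G_{ij}} x_i x_j + \sum_j \Gamma_{G_{jj}} x_j, $$
which agrees with a graph-state phase apart from the linear terms coming from loops of $G$ (whether inherited from loops of the original $M$ or of the original $P$).

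Next I would define $D = \prod_{j : \Gamma_{G_{jj}} = 1} Z_j$, where $Z = S^2 \in \mathbf{D_1}$, so that $D \in \mathbf{D_n}$. Since $Z_j$ multiplies the ${\bf x}$-th coordinate of a state by $(-1)^{x_j}$, its action on $U|\psi\rangle$ adds $\sum_{j:\Gamma_{G_{jj}}=1} x_j$ to the phase and exactly cancels the linear part of $p'$. The resulting vector $DU|\psi\rangle = (-1)^{\sum_{i<j} \Gamma_{G_{ij}} x_i x_j}$ is, by Definition \ref{graphstate}, the graph state of the simple graph obtained from $G$ by deleting all loops, which completes the proof.

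I do not anticipate a serious obstacle. The only delicate point is to justify the iteration: successive invocations of the first case of Theorem \ref{TGSAction} never accidentally fall into one of the other cases, because each such step leaves $G$ unchanged and only removes the processed vertex from ${\cal L}$, so every still-unprocessed $w \in {\cal L}$ continues to satisfy the hypothesis $w \in {\cal L}$ when its turn comes. Everything else is a direct reading-off of the resulting APF from the final graph-set data $({\cal V},G)$.
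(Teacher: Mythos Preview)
Your proof is correct and follows essentially the same approach as the paper: apply $H_v$ for each $v \in {\cal L}$ using the first case of Theorem~\ref{TGSAction} to obtain ${\cal L} = \emptyset$ with $G$ unchanged, and then remove the loop-induced linear terms with a diagonal element of ${\bf D_n}$. Your version is slightly more explicit---you write out the exact $D$ as a product of $Z_j$'s and spell out why subsequent applications of $H_v$ remain in the first case---but the argument is the same.
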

\begin{proof}
Select an arbitrary $v \in {\cal{L}}$, and apply $H_v$ to $\left| \psi \right >$. Then, by applying the algorithm
of theorem \ref{TGSAction}, we obtain
$\left| \psi \right >^{(1)} = H_v\left| \psi \right >$, where ${\cal{L}}^{(1)} = {\cal{L}} \setminus \{v\}$.
Select an arbitrary $v' \in {\cal{L}}^{(1)}$ and repeat the above process by applying $H_{v'}$ to
$\left| \psi \right >^{(1)}$ so as to obtain $\left| \psi \right >^{(2)}$, and so on.
After $k = |{\cal{L}}|$ such recursions one
obtains ${\cal{L}}^{(k)} = \emptyset$, which implies that $\left| \psi \right >^{(k)}$ is a graph state to
within loops in $P$, as $m = 1$. The loops in $P$ can then be eliminated via the action of matrices from ${\bf{D_n}}$.
\end{proof}

\begin{cor} (of theorem \ref{twographeq})
\label{corthm2}
The two-graph state is a stabilizer state.
\end{cor}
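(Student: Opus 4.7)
The plan is to deduce the corollary directly from Theorem \ref{twographeq} together with Definition \ref{graphstate}. The content of the theorem is that every two-graph state $\left|\psi\right>$ is connected to some graph state $\left|\psi\right>'$ by a unitary of the form $DU$ with $U \in \{I,H\}^{\otimes n}$ and $D \in \mathbf{D_n}$. The content of the definition is that a stabilizer state is, by construction, any state of the form $V\left|\psi\right>'$ where $\left|\psi\right>'$ is a graph state and $V \in \mathbf{C_n}$. So the corollary should fall out by simply inverting the relation from the theorem and observing that the inverse unitary still lies in $\mathbf{C_n}$.

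Concretely, the steps I would write down are as follows. First, apply Theorem \ref{twographeq} to obtain a graph state $\left|\psi\right>'$ and unitaries $U \in \{I,H\}^{\otimes n}$, $D \in \mathbf{D_n}$ with $\left|\psi\right>' = DU\left|\psi\right>$. Second, observe that both $U$ and $D$ lie in $\mathbf{C_n}$: $D \in \mathbf{D_n} \subset \mathbf{C_n}$ by Definition \ref{groups}, and $H_v \in \mathbf{C_1}$ is one of the generators of the local Clifford group, so any tensor product of $I$ and $H$ matrices sits in $\mathbf{C_n}$. Hence $DU \in \mathbf{C_n}$, and since $\mathbf{C_n}$ is a group, $(DU)^{-1} \in \mathbf{C_n}$ as well. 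Third, invert to get $\left|\psi\right> = (DU)^{-1}\left|\psi\right>'$, which is the action of a unitary from $\mathbf{C_n}$ on a graph state, and therefore a stabilizer state by Definition \ref{graphstate}.

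There is essentially no obstacle here beyond bookkeeping; the whole corollary is a one-line consequence of the theorem once one notes the group structure of $\mathbf{C_n}$. The only subtlety worth flagging is to make sure that the constructive procedure in the proof of Theorem \ref{twographeq}, which produces $\left|\psi\right>'$ up to loops in $P$ eliminated by $\mathbf{D_n}$, really delivers $DU$ in the correct factored order; but since $\mathbf{D_n}$ and $\{I,H\}^{\otimes n}$ are both subsets of $\mathbf{C_n}$, the precise ordering of diagonal and Hadamard factors is immaterial for the purpose of concluding membership of $(DU)^{-1}$ in $\mathbf{C_n}$. Thus the corollary is established.
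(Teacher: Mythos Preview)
Your proposal is correct and follows essentially the same approach as the paper: both arguments observe that Theorem \ref{twographeq} places the two-graph state in the local Clifford orbit of a graph state and then invoke reversibility of local Clifford equivalence. The only cosmetic difference is that the paper cites external results \cite{Sch:QG,Gras:QG} for the equivalence of stabilizer states and graph states, whereas you appeal directly to Definition \ref{graphstate}, which already builds that equivalence into the paper's own definition of a stabilizer state; your version is therefore slightly more self-contained.
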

\begin{proof}
It is known that a stabilizer state is locally-equivalent to a graph state \cite{Sch:QG,Gras:QG}, and local-equivalence is
reversible.
\end{proof}

\section{The Generalised Two-Graph State}\label{TGSgen}

For a set of integers, ${\cal{Q}}$, let ${\cal{Q}}_j = 1$ iff $j \in {\cal{Q}}$, otherwise ${\cal{Q}}_j = 0$.

\begin{df}
A {\em{generalised two-graph state}} is a pure quantum state, $\left| \psi \right >$, of $n$ qubits that
can be defined by the graph-set-set, $(G,{\cal{R}},{\cal{Q}})$,
where $G = ({\cal{V}},{\cal{E}})$ is an $n$-vertex graph,
with bipartition, $({\cal{L}},{\cal{R}})$,
where ${\cal{V}} = {\cal{L}} \cup {\cal{R}}$ and ${\cal{L}} \cap {\cal{R}} = \emptyset$, where
$G_{\cal{L}}$ is the empty graph apart from possible loops, and where ${\cal{Q}} \subset {\cal{R}}$.
The triple, $(G,{\cal{R}},{\cal Q})$, explicitly encodes a generalised
two-graph, $(M,P,{\cal Q})$, where $P = G_{\cal{R},{\cal Q}}$, and $M = G - P$ is bipartite.
The state, $\left| \psi \right >$,
is defined by its algebraic polar form, $\left| \psi \right > = cm({\bf{x}})i^{p({\bf{x}})}$, where
$c\in{\mathbb C}$, $m({\bf{x}}) : {\mathbb{Z}}_2^n \rightarrow {\mathbb{Z}}_2$ is a product of affine functions of the form,
$$ m({\bf{x}}) = \prod_{i \in {\cal{L}}} (\Gamma_{M_{ii}} + 1 + x_i + \sum_{j \in {\cal{R}}}
\Gamma_{M_{ij}}x_j), $$
such that $m = 1$ when ${\cal{L}} = \emptyset$, and where
$p({\bf{x}}) : {\mathbb{Z}}_2^n \rightarrow {\mathbb{Z}}_4$ is a
quadratic function of the form,
$$ p({\bf{x}}) = \sum_{i,j \in {\cal{R}}, i < j} 2\Gamma_{P_{ij}}x_ix_j
 + \sum_{j \in {\cal{R}}} (2\Gamma_{P_{jj}} + {\cal{Q}}_j)x_j. $$
\label{TGS1}
\end{df}

\noindent
{\bf Remark}: The generalised two-graph state can alternatively and, perhaps, more naturally,
be viewed as a graph with weighted ${\mathbb{Z}}_4$
loops and a set ${\cal{R}}$. But we choose the equivalent $(G,{\cal{R}},{\cal{Q}})$ representation for notational convenience.
When ${\cal{Q}} = \emptyset$, then the generalised two-graph state,
defined by $(G,{\cal{R}},{\cal{Q}})$,
reduces to the two-graph state, defined by $(G,{\cal{R}})$,
and non-empty ${\cal{Q}}$ introduces linear
terms over ${\mathbb{Z}}_4$ to the state.

Let $A \ominus B$ be the symmetric difference of sets $A$ and $B$, that is $A \ominus B=(A\setminus B)\cup
(B\setminus A)$.

\begin{df}
Let $(G,{\cal{Q}})$ be the graph-set pair, extracted from the generalised two-graph state $(G,{\cal{R}},{\cal{Q}})$,
with $G$ an $n$-vertex graph with possible loops and
${\cal{Q}} \subset \{0,1,\ldots,n-1\}$.
Then $(G,{\cal{Q}})^v = (G^v,{\cal{Q}}^v)$ is defined to be the graph-set pair resulting from
the action of {\em{local complementation}}$^{\odot}$ (LC$^{\odot}$) on vertex $v$ of $(G,{\cal{Q}})$, where
$$ \begin{array}{l}
G^{v} = G + C_{{\cal{N}}_v^G} + \Gamma_{G_{vv}}\Delta_{{\cal{N}}_v^G} + \Delta_{{\cal{Q}} \cap {\cal{N}}_v^G}, \\
{\cal{Q}}^v = {\cal{Q}} \ominus {{\cal{B}}_v^G}.
\end{array} $$
\label{LC}
\end{df}

\noindent
{\bf Example: } The action of LC$^{\odot}$ on the following graph-set, $(G,{\cal{Q}})$, at vertex $v=3$, is shown in figure \ref{LCLoop}.
\begin{figure}[!h]
\cb
\includegraphics[width=.5\hsize]{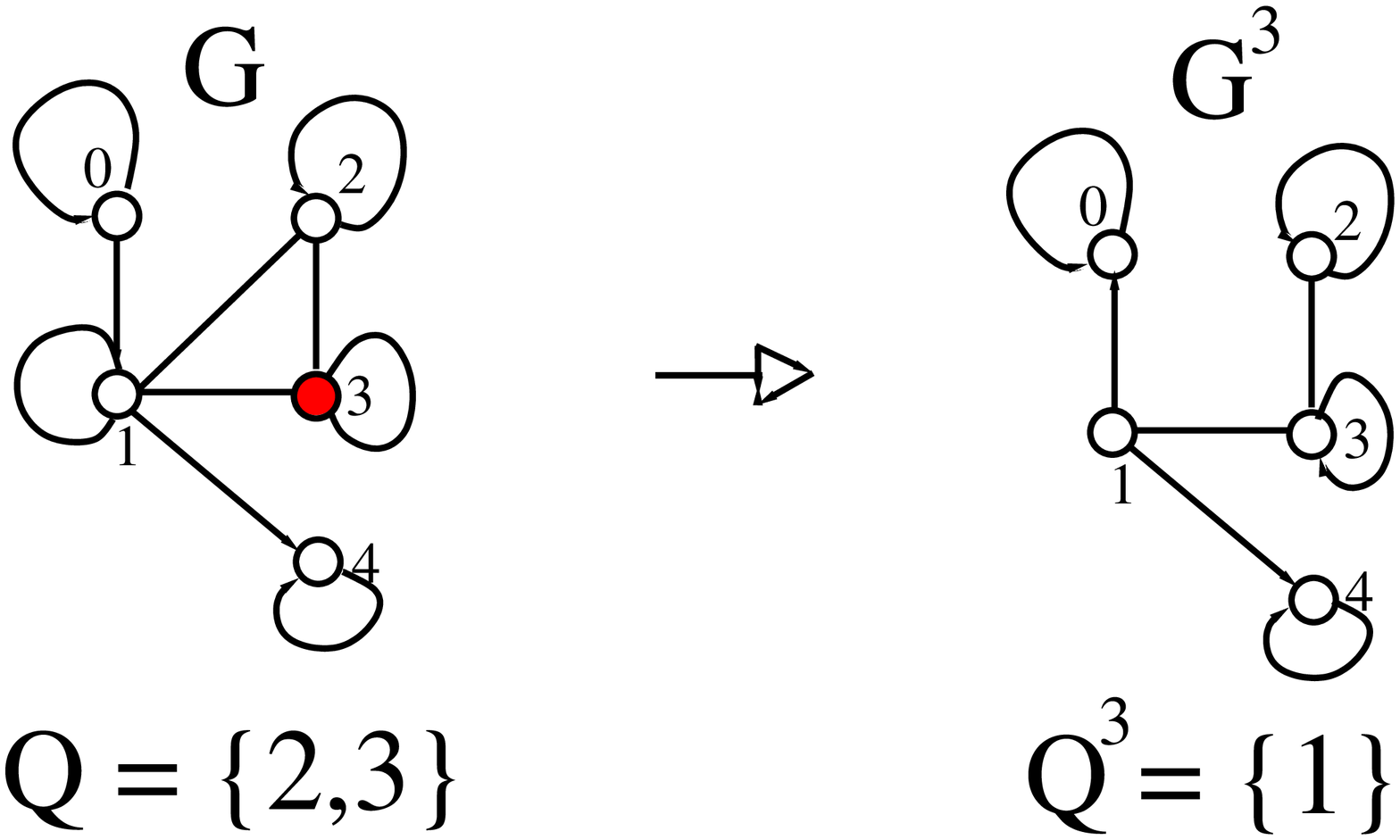}
\ce
\caption{The action of LC$^{\odot}$ at vertex $3$}
\label{LCLoop}
\end{figure}

\noindent
{\bf{Remark: }} From definition \ref{LC}, even when $G$ is a simple graph and ${\cal{Q}} = \emptyset$,
we see that possible loops can still be produced at the output. The LC
operation, which acts only on simple graphs, can be recovered from LC$^{\odot}$ by applying LC$^{\odot}$ to a simple graph, then
deleting any resultant loops from the output.

We now describe the action of `$\m{swp}$' on the generalised two-graph state at edge $vw$, as a natural
extension of `swp' on a two-graph state.

\begin{df} Let $\left| \psi \right >  = mi^p$ be a generalised two-graph state over $n$ qubits, represented by the graph-set-set
$(G,{\cal{R}},{\cal{Q}}) \equiv (M,P,{\cal{Q}})$. Let $v \in {\cal{L}}$ and $w \in {\cal{N}}_v^G$.
Then the action
of $\m{swp}$ at edge $vw$ is the operation that interchanges the roles of $v$ and $w$; i.e. the operation that
takes ${\cal R}$ to ${\cal{R}}' = {\cal{R}} \cup \{v\} \setminus \{w\}$, and results in a generalised two-graph state,
$m'i^{p'}$, where $\left| \psi \right > = m'i^{p'} = mi^p$.
\end{df}

\noindent {\bf{Remark:}} `swp' does not change $\left| \psi \right >$.

\begin{lem}
Let $\left| \psi \right >$ be a generalised two-graph state over $n$ qubits, represented by the graph-set-set
$(G,{\cal{R}},{\cal{Q}}) \equiv (M,P,{\cal{Q}})$.
Let $v \in {\cal{L}}$ and $w \in {\cal{N}}_v^G$. 
Then the action
of $\m{swp}$ at edge $vw$ results in the generalised two-graph state with associated graph-set-set,
$(G',{\cal{R}}',{\cal{Q}}') \equiv (M',P',{\cal{Q}}')$,
and is obtained from $(G,{\cal{R}},{\cal{Q}})$ as follows: \\

\noindent Let $v \in {\cal{L}}, w \in {\cal{N}}_v^G$. Then, $(G',{\cal{R}}',{\cal{Q}}') = 
\m{\bf{swp}}(G,{\cal{R}},{\cal{Q}},v,w)$ can be expressed as:

\vspace{3mm}

$$ \begin{array}{l}
\left \{
\begin{array}{l}
{\cal{R}}' = {\cal{R}} \cup \{v\} \setminus \{w\}, \\
G'' = G^{vw}, \\
\m{if } {\cal{Q}}_w = 1 \\
\mf (G',{\cal{Q}}') = (G'',{\cal{Q}})^w \\
\m{else } \\
\mf (G',{\cal{Q}}') = (G'',{\cal{Q}}).
\end{array} \right .
\end{array} $$
\label{SwpActiongen}
\end{lem}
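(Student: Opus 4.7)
The plan is to reduce the generalised case to the two-graph case established in Lemma~\ref{SwpAction}, and then layer on the corrections coming from the $\mathbb{Z}_4$-linear ${\cal Q}$-part of the phase. On the support of $m$ the affine factor $f_v$ equals $1$, and I would solve this for $x_w$ to obtain $x_w \equiv \epsilon + \sum_{k\in S} x_k \mo 2$, with $\epsilon = \Gamma_{M_{vv}}$ and $S = \{v\}\cup({\cal N}_v^G\setminus\{w\}) = {\cal B}_v^G\setminus\{w\}$. Substituting this into $m$ and into the factor-of-$2$ portion of $p$ (the $2\Gamma_{P_{ij}}x_ix_j$ quadratic terms and the $2\Gamma_{P_{jj}}x_j$ loop terms) is unambiguous in $\mathbb{Z}_4$, since multiplication by $2$ sends $a\oplus b = a+b-2ab$ to $2a+2b$ modulo $4$. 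Hence, by Lemma~\ref{SwpAction} applied to this purely $\mathbb{Z}_2$ content, the combined effect on $G$ is ELC$^{\odot}$ at edge $vw$, producing the intermediate graph $G''=G^{vw}$ together with ${\cal R}'={\cal R}\cup\{v\}\setminus\{w\}$. This already settles the branch ${\cal Q}_w=0$.

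When ${\cal Q}_w = 1$, the phase carries an additional unmultiplied term $x_w$ whose full $\mathbb{Z}_4$ lift, obtained by iterating $a\oplus b \equiv a+b-2ab \mo 4$, is $x_w \equiv \epsilon + (1-2\epsilon)\sum_{k\in S} x_k - 2\sum_{k<l\in S} x_k x_l \mo 4$. Subtracting off the $\mathbb{Z}_2$ lift that was already absorbed into the two-graph step leaves a three-part correction which I would match term-by-term to the three additive contributions of LC$^{\odot}$ at $w$ applied to $(G^{vw},{\cal Q})$ from Definition~\ref{LC}: (i)~the pairwise piece $-2\sum_{k<l\in S}x_k x_l$ toggles $\Gamma_{P'_{kl}}$ on every pair in $S$, matching $C_{{\cal N}_w^{G^{vw}}}$ once one verifies ${\cal N}_w^{G^{vw}}=S$ directly from Definition~\ref{ELC}; (ii)~the diagonal piece $-2\epsilon\sum_k x_k$ toggles the loop at each $k\in S$ exactly when $\epsilon = 1$, matching $\Gamma_{G^{vw}_{ww}}\Delta_{{\cal N}_w^{G^{vw}}}$ after computing $\Gamma_{G^{vw}_{ww}} = \Gamma_{G_{vv}} = \epsilon$; (iii)~the remaining linear piece $+\sum_k x_k$ increments each coefficient $2\Gamma_{P_{kk}}+{\cal Q}_k$ by $1$ in $\mathbb{Z}_4$, which always toggles ${\cal Q}_k$ and, via the $\mathbb{Z}_4$ carry, additionally toggles $\Gamma_{P'_{kk}}$ precisely when the old ${\cal Q}_k = 1$; these two effects match ${\cal Q} \ominus {\cal B}_w^{G^{vw}}$ restricted to $S$ and the $\Delta_{{\cal Q}\cap{\cal N}_w^{G^{vw}}}$ term respectively. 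A separate consistency check at $k = v$ (the new information variable in ${\cal R}'$) and at $k = w$ (where ${\cal Q}'_w = 0$ is forced because $w\in{\cal B}_w^{G^{vw}}$) then closes the argument.

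The main obstacle is the carry analysis in piece~(iii): the $\mathbb{Z}_4$ increment by $+1$ couples the ${\cal Q}_k$ and $\Gamma_{P_{kk}}$ bits nontrivially, and this coupling is exactly what the $\Delta_{{\cal Q}\cap{\cal N}_w^{G^{vw}}}$ term of LC$^{\odot}$ encodes. It is also the reason the lemma applies LC$^{\odot}$ with the \emph{original} ${\cal Q}$ rather than an updated one; any attempt to update ${\cal Q}$ before the LC$^{\odot}$ call would mis-count these carries and produce the wrong loop toggles on $S$.
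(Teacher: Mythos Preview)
Your approach is correct and essentially identical to the paper's own proof: reduce to Lemma~\ref{SwpAction} for the factor-of-$2$ part of the phase, then for ${\cal Q}_w=1$ expand the single $\mathbb{Z}_4$-valued term $[x_w]=[x_{{\cal B}_v^M\setminus\{w\}}+\Gamma_{M_{vv}}]$ via Lemma~\ref{lem2} and match the three resulting pieces to the three additive contributions of LC$^{\odot}$ at $w$ on $(G^{vw},{\cal Q})$, using ${\cal N}_w^{G^{vw}}={\cal B}_v^G\setminus\{w\}$ and $\Gamma_{G^{vw}_{ww}}=\Gamma_{G_{vv}}$. One phrasing nit: nothing is ``subtracted off'' from the two-graph step, since the ${\cal Q}_w x_w$ term is entirely absent there; the three-part correction you analyse is simply the full $\mathbb{Z}_4$ lift of $[x_w]$ (with the constant $\epsilon$ absorbed as a global phase), and your carry analysis in~(iii) makes explicit exactly the mechanism the paper leaves implicit in its $\Delta_{{\cal Q}\cap{\cal N}_w^{G^{vw}}}$ term.
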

\begin{proof}
Section \ref{proofs}.
\end{proof}

We now describe the action of $H_v$ on a generalised two-graph state.

\begin{thm}
Let $\left| \psi \right >$ be a generalised two-graph state over $n$ qubits, represented by the graph-set-set
$(G,{\cal{R}},{\cal{Q}}) \equiv (M,P,{\cal{Q}})$.
Let $v \in \{0,1,\ldots,n-1\}$. 
Then $\left| \psi' \right > = H_v\left| \psi \right >$ is also a generalised two-graph state
and can be described by the graph-set-set $(G',{\cal{R}}',{\cal{Q}}') \equiv (M',P',{\cal{Q}}')$, where \\

\noindent $(G',{\cal{R}}',{\cal{Q}}') = H_v(G,{\cal{R}},{\cal{Q}})$:

\vspace{3mm}

$$
\begin{array}{ll}
\left \{
\begin{array}{l}
{\cal{R}}' = {\cal{R}} \cup \{v\}, \\
(G',{\cal{Q}}') = (G,{\cal{Q}}),
\end{array} \right . & \mf \mbox{ if }v \in {\cal{L}}, \\ \\
\left \{
\begin{array}{l}
\m{if } {\cal{Q}}_v = 0 \\
\mf {\cal{R}}' = {\cal{R}} \setminus \{v\}, \\
\mf (G',{\cal{Q}}') = (G,{\cal{Q}}), \\
\m{else } \\
\mf {\cal{R}}' = {\cal{R}}, \\
\mf (G'',{\cal{Q}}'') = (G,{\cal{Q}})^v\\
\mf G' = G'' + \Delta_{{\cal{B}}_v^G}\\
\mf {\cal Q}'={\cal Q}''\cup\{v\} \enspace,
\end{array} \right . & \mf \mbox{ if }{\cal{B}}_v^G \subseteq {\cal{R}} \\ \\
\left \{
\begin{array}{l}
\m{assign } w  \in {\cal{N}}_v^M, \\
(G',{\cal{R}}'',{\cal{Q}}') = \m{swp}(G,{\cal{R}},{\cal{Q}},w,v), \\
{\cal{R}}' = {\cal{R}}'' \cup \{v\},
\end{array} \right . & \mf \mbox{ if }v \in {\cal{R}}, {\cal{B}}_v^G \not\nsubseteq {\cal{R}}.
\end{array} $$
\label{HvAction}
\end{thm}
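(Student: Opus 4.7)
The plan is to verify the theorem case by case by direct computation on the algebraic polar form $|\psi\rangle = mi^p$, isolating the dependence on $x_v$ in each case and treating $H_v$ as a one-bit discrete Fourier transform. Case~3 reduces cleanly to Case~1 via Lemma~\ref{SwpActiongen}, and the entire argument eventually turns on the Case~2 subcase $\mathcal{Q}_v = 1$, where a careful $\mathbb{Z}_4$-reduction of a Boolean sum is required.

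For Case~1 ($v\in\mathcal{L}$), the variable $x_v$ appears only in the single affine factor $(x_v + g_v)$ of $m$, with $g_v = \Gamma_{M_{vv}} + 1 + \sum_{j\in\mathcal{R}} \Gamma_{M_{vj}} x_j$, while $p$ is independent of $x_v$. A one-line Hadamard computation gives $H_v[(x_v + g_v)\cdot F]_{y_v} \propto (-1)^{y_v(g_v+1)}F$, and since $g_v + 1 \equiv \Gamma_{M_{vv}} + \sum_j\Gamma_{M_{vj}} x_j \pmod 2$, the new phase acquires exactly $2\Gamma_{M_{vv}} x_v + 2\sum_j\Gamma_{M_{vj}} x_v x_j$. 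These are the loop at $v$ and the edges $vj$ of $G$ now reinterpreted as $P'$-contributions once $v$ is reclassified into $\mathcal{R}'$, which matches $G' = G$, $\mathcal{Q}' = \mathcal{Q}$, $\mathcal{R}' = \mathcal{R}\cup\{v\}$.

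For Case~2 ($\mathcal{B}_v^G\subseteq\mathcal{R}$), $x_v$ does not appear in $m$ and we write $p = p^{(0)} + (2\tilde{h}_v + \mathcal{Q}_v)x_v$ with $\tilde{h}_v = (\Gamma_{P_{vv}} + \sum_{j\in\mathcal{N}_v^G} x_j)\bmod 2$. If $\mathcal{Q}_v = 0$, the Hadamard sum $\sum_{x_v'}(-1)^{x_v'(y_v + \tilde{h}_v)} = 2[y_v = \tilde{h}_v]$ converts the phase into the magnitude factor $(x_v + \tilde{h}_v + 1)$, moving $v$ into $\mathcal{L}'$ with the $P$-incidences at $v$ reappearing as $M'$-incidences, so $G' = G$ and $\mathcal{Q}' = \mathcal{Q}$. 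If $\mathcal{Q}_v = 1$, the sum is $1 + i^{2(y_v+\tilde{h}_v)+1}$, equal to $\sqrt{2}\,\omega \cdot i^{3\epsilon}$ with $\epsilon = (y_v+\tilde{h}_v)\bmod 2$, so modulo the global phase $\omega$ the state picks up an $i^{3\epsilon}$ factor. Expanding $3\epsilon \equiv 3 x_v + 3\tilde{h}_v + 2 x_v\tilde{h}_v \pmod 4$ and reducing $3\tilde{h}_v$ via the identity
$$\Bigl(\textstyle\sum_i y_i\Bigr)\bmod 2 \;\equiv\; \textstyle\sum_i y_i - 2\sum_{i<j} y_i y_j \pmod 4$$
for Boolean $y_i$, applied with summands $\Gamma_{G_{vv}}$ and $\{x_j : j\in\mathcal{N}_v^G\}$, produces four families of corrections that match the theorem term by term: the quadratic $2\sum_{j<k}x_jx_k$ contributes the complete graph $C_{\mathcal{N}_v^G}$; the $2\Gamma_{G_{vv}}\sum x_j$ contributes the loop summand $\Gamma_{G_{vv}}\Delta_{\mathcal{N}_v^G}$; the cross-term $2x_v\tilde{h}_v$ preserves the existing $vj$ edges and the loop adjustment at $v$; and the linear $3x_v$ and $3\sum x_j$ pieces combine with the $\mathcal{Q}_j$ entries already present in $p$ to toggle $\mathcal{Q}_j$ for every $j\in\mathcal{B}_v^G$ while inserting $v$ into $\mathcal{Q}'$, with the leftover $2$-bits producing the simultaneous loop flip $\Delta_{\mathcal{B}_v^G}$ and the $\Delta_{\mathcal{Q}\cap\mathcal{N}_v^G}$ summand inside LC$^\odot$. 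Assembling these yields $G' = G^v + \Delta_{\mathcal{B}_v^G}$, $\mathcal{Q}' = (\mathcal{Q}\ominus\mathcal{B}_v^G)\cup\{v\}$, as required.

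For Case~3 ($v\in\mathcal{R}$ with $\mathcal{B}_v^G\not\subseteq\mathcal{R}$), pick any $w \in \mathcal{N}_v^M$ (non-empty by hypothesis) and invoke Lemma~\ref{SwpActiongen}: $\m{swp}(G,\mathcal{R},\mathcal{Q},w,v)$ leaves $|\psi\rangle$ invariant while producing a representation in which $v$ has been moved to $\mathcal{L}$. Applying the already-verified Case~1 to $H_v$ on that representation yields $\mathcal{R}' = \mathcal{R}''\cup\{v\}$ with $(G',\mathcal{Q}')$ unchanged from the swp output, which is the claim. The main obstacle, as flagged above, is Case~2 with $\mathcal{Q}_v = 1$: essentially all of the theorem's nontrivial combinatorial content, including the precise interaction between LC$^\odot$, the $\Delta_{\mathcal{B}_v^G}$ correction, and the symmetric-difference update of $\mathcal{Q}$, is forced by matching the $\mathbb{Z}_4$-expansion of $3\tilde{h}_v + 2x_v\tilde{h}_v$ against the normal-form $(2\Gamma_{P'_{jk}}, \,2\Gamma_{P'_{jj}} + \mathcal{Q}'_j)$ pattern required by Definition~\ref{TGS1}.
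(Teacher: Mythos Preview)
Your proof is correct and follows essentially the same case-by-case algebraic verification as the paper: both use the one-bit Hadamard identity $\sqrt{2}H_v[m]i^p=[m_0]i^{p_0}+[m_1]i^{p_1+2x_v}$, the factorisation of the single affine factor of $m$ for Case~1, the magnitude-creating collapse for Case~2 with $\mathcal{Q}_v=0$, and the reduction to Case~1 via $\mathrm{swp}$ for Case~3.

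The one organisational difference is in Case~2 with $\mathcal{Q}_v=1$. The paper does not expand this subcase directly; instead it invokes Lemma~\ref{interchange} (the matrix identity $H\,\mathrm{diag}(1,\pm i)=N\,\mathrm{diag}(1,\mp 1)$) to reduce the $H_v$, $\mathcal{Q}_v=1$ situation to the $N_v$, $\mathcal{Q}_v=0$ computation carried out in Lemma~\ref{nonboolnotinm}, which is proved separately en route to Theorem~\ref{NvAction}. Your direct expansion of $3\epsilon$ via the $\mathbb{Z}_4$ identity $[\sum y_i]\equiv \sum y_i-2\sum_{i<j}y_iy_j\pmod 4$ is exactly the paper's Lemma~\ref{lem2}, and Lemma~\ref{nonboolnotinm} performs precisely this expansion for $N_v$; so the two routes are the same computation in different packaging. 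The paper's indirection is slightly more economical because one hard subcase then feeds both Theorems~\ref{HvAction} and~\ref{NvAction}, while your version is more self-contained for this theorem in isolation.
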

\begin{proof}
Section \ref{proofs}.
\end{proof}

\noindent {\bf Example:} Let $\left|\psi \right >=mi^p$ be a generalised two-graph state, with $n=5$,
$m=(x_0+x_2+x_3+1)(x_1+x_2+x_3)$, $p=2x_2x_3 + 2x_2x_4 + 2x_3x_4 + x_2 + 3x_3$, and
graph $(G,{\cal{R}},{\cal{Q}}) \equiv (M,P,{\cal{Q}})$, where $G$ has edge set
${\cal{E}} = \{02,03,12,13,23,24,34,11,33\}$, ${\cal{R}} = \{2,3,4\}$, and ${\cal{Q}} = \{2,3\}$.
Then the action of $H_3$ on $\left|\psi \right >$ can be detailed as follows where
we, arbitrarily, choose $w = 1$.
Then ${\cal{R}}' = \{1,2,3,4\}$ and ${\cal{Q}}' = \{1,2,3,4\}$. The resulting graph,
$(G',{\cal{R}}',{\cal{Q}}') \equiv (M',P',Q')$,
represents the generalised two-graph state $\left|\psi' \right >=m'i^{p'}$, where
$m' = (x_0 + x_1)$ and $p' = 2x_1x_3 + 2x_1x_4 + 2x_2x_3 + x_2 + x_3 + x_4$.
This example is illustrated in figure \ref{figHvQ}.

\begin{figure}[!h]
\cb
\includegraphics[width=5in]{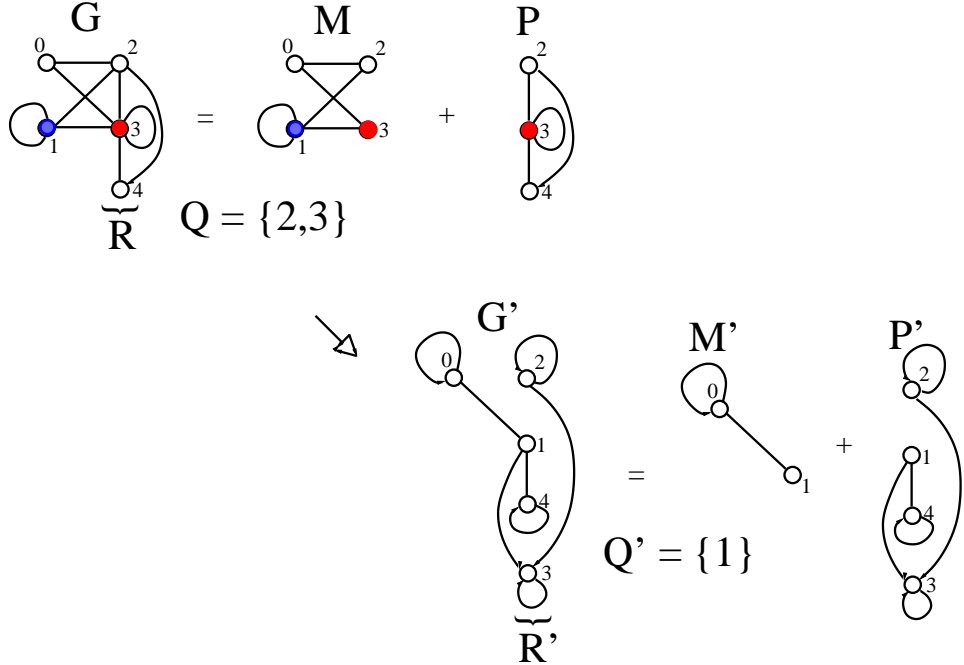}
\ce
\caption{The action of $H_3$ on a generalised two-graph state}
\label{figHvQ}
\end{figure}

We now describe the action of $N_v$ on a generalised two-graph state.

\begin{thm}
Let $\left| \psi \right >$ be a generalised two-graph state over $n$ qubits, represented by the graph-set-set
$(G,{\cal{R}},{\cal{Q}}) \equiv (M,P,{\cal{Q}})$.
Let $v \in \{0,1,\ldots,n-1\}$. 
Then $\left| \psi' \right > = N_v\left| \psi \right >$ is also a generalised two-graph state
and can be described by the graph-set-set $(G',{\cal{R}}',{\cal{Q}}') \equiv (M',P',{\cal{Q}}')$, where \\

\noindent $(G',{\cal{R}}',{\cal{Q}}') = N_v(G,{\cal{R}},{\cal{Q}})$:
$$
\begin{array}{ll}
\left \{
\begin{array}{l}
{\cal{R}}' = {\cal{R}} \cup \{v\}, \\
(G',{\cal{Q}}'') = (G,{\cal{Q}})^v, \\
{\cal{Q}}' = {\cal{Q}}'' \setminus \{v\},
\end{array} \right . & \mf \mbox{ if }v \in {\cal{L}}, \\ \\
\left \{
\begin{array}{l}
\m{if } {\cal{Q}}_v = 1 \\
\mf {\cal{R}}' = {\cal{R}} \setminus \{v\}, \\
\mf G' = G + \Delta_{\{v\}}, \\
\mf {\cal{Q}}' = {\cal{Q}} \setminus \{v\}, \\
\m{else } \\
\mf {\cal{R}}' = {\cal{R}}, \\
\mf (G'',{\cal{Q'}}) = (G,{\cal{Q}})^v \\
\mf G' = G'' + \Delta_{{\cal{B}}_v^G}
\enspace,
\end{array} \right . & \mf \mbox{ if }{\cal{B}}_v^G \subseteq {\cal{R}} \\ \\
\left \{
\begin{array}{l}
\m{assign } w  \in {\cal{N}}_v^M, \\
(G'',{\cal{R}}'',{\cal{Q}}'') = \m{swp}(G,{\cal{R}},{\cal{Q}},w,v), \\
{\cal{R}}' = {\cal{R}}'' \cup \{v\}, \\
(G',{\cal{Q}}''') = (G'',{\cal{Q}}'')^v, \\
{\cal{Q}}' = {\cal{Q}}''' \setminus \{v\},
\end{array} \right . & \mf \mbox{ if }v \in {\cal{R}}, {\cal{B}}_v^G \not\nsubseteq {\cal{R}}.
\end{array} $$
\label{NvAction}
\end{thm}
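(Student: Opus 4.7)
The plan is to exploit the matrix identity $N = HS$, so that $N_v = H_v S_v$. The proof then reduces to computing the action of the diagonal operator $S_v$ on a generalised two-graph state and composing with Theorem \ref{HvAction} for $H_v$.

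First I work out $S_v$ directly. If $v \in {\cal R}$, then $S_v(m i^p) = m i^{p+x_v}$ merely increments the coefficient of $x_v$ in $p$ by $1$ modulo $4$; splitting the new coefficient back into the form $2\Gamma_{P'_{vv}} + {\cal Q}'_v$, this toggles ${\cal Q}_v$ and, when a carry occurs (i.e.\ ${\cal Q}_v$ was $1$), also toggles the loop $\Gamma_{P_{vv}}$. If $v \in {\cal L}$, then on the support of $m$ we have $x_v = \Gamma_{M_{vv}} \oplus \bigoplus_{j \in {\cal N}_v^M} x_j$, and the identity
$$ y_0 \oplus \cdots \oplus y_k \;\equiv\; \sum_{i=0}^k y_i + 2\!\!\sum_{0 \le i < j \le k} y_i y_j \pmod 4, $$
proved by an easy induction on $k$, converts $i^{x_v}$ into a quadratic exponent over ${\mathbb Z}_4$. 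Discarding the global phase $i^{\Gamma_{M_{vv}}}$, this adds $(1+2\Gamma_{M_{vv}})\sum_j x_j + 2\sum_{j<j'} x_j x_{j'}$ to $p$, with sums over ${\cal N}_v^M$. The quadratic part is the XOR-addition of $C_{{\cal N}_v^M}$ to $P$, and the linear part is precisely the loop-and-${\cal Q}$ update produced by the terms $\Gamma_{G_{vv}}\Delta_{{\cal N}_v^G} + \Delta_{{\cal Q}\cap {\cal N}_v^G}$ of Definition \ref{LC}. So, up to its effect on vertex $v$ itself, $S_v$ acts on $(G,{\cal Q})$ exactly as LC$^\odot$ at $v$.

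With $S_v$ in hand I compose with Theorem \ref{HvAction} and match each of the three cases. For $v \in {\cal L}$: $S_v$ performs LC$^\odot$ at $v$ (except at vertex $v$ itself), then the case-1 branch of Theorem \ref{HvAction} moves $v$ into ${\cal R}$, and the terminal ${\cal Q}' = {\cal Q}'' \setminus \{v\}$ cancels the toggle of ${\cal Q}_v$ that LC$^\odot$ would have introduced (consistent with discarding the $i^{\Gamma_{M_{vv}}}$ global phase). For ${\cal B}_v^G \subseteq {\cal R}$: if ${\cal Q}_v = 1$, $S_v$ toggles both ${\cal Q}_v$ and $\Gamma_{P_{vv}}$, after which the ``${\cal Q}_v = 0$'' branch of Theorem \ref{HvAction} simply deletes $v$ from ${\cal R}$; if ${\cal Q}_v = 0$, $S_v$ flips ${\cal Q}_v$ to $1$, after which the ``${\cal Q}_v = 1$'' branch of Theorem \ref{HvAction} applies the stated LC$^\odot$ together with the $\Delta_{{\cal B}_v^G}$ adjustment. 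For $v \in {\cal R}$ with ${\cal B}_v^G \not\subseteq {\cal R}$, a preliminary $\mbox{swp}$ at edge $wv$ with $w \in {\cal N}_v^M$ does not change $\left|\psi\right>$ (by Lemma \ref{SwpActiongen}) and reduces the problem to the $v \in {\cal L}$ case just handled.

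The main obstacle is the bookkeeping in the $v \in {\cal L}$ step: one must verify that the coefficient update $2 \Gamma_{P'_{jj}} + {\cal Q}'_j \equiv 2 \Gamma_{P_{jj}} + {\cal Q}_j + 1 + 2\Gamma_{M_{vv}} \pmod 4$ for each $j \in {\cal N}_v^M$, split over ${\cal Q}_j \in \{0,1\}$, yields exactly the loop-toggle pattern $\Gamma_{P'_{jj}} = \Gamma_{P_{jj}} \oplus \Gamma_{M_{vv}} \oplus {\cal Q}_j$ predicted by $\Gamma_{G_{vv}}\Delta_{{\cal N}_v^G} + \Delta_{{\cal Q}\cap{\cal N}_v^G}$ in Definition \ref{LC}. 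Once this carry analysis is checked against the definition of LC$^\odot$, all the remaining manipulations reduce to routine set arithmetic between ${\cal Q}$, ${\cal B}_v^G$ and ${\cal N}_v^M$.
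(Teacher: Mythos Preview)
Your proof is correct and takes a genuinely different route from the paper's. The paper establishes Theorem~\ref{NvAction} by direct spectral computation: it first derives the analogue of equation~(\ref{hjgen}) for $N_v$, namely $\sqrt{2}N_v[m]i^p = [m_0]i^{p_0} + [m_1]i^{p_1+2x_v+1}$, and then works out each case separately (Lemmas~\ref{nonboolinmN} and~\ref{nonboolnotinm} handle $v\in{\cal L}$ and ${\cal B}_v^G\subseteq{\cal R}$ with ${\cal Q}_v=0$ respectively), invoking the matrix identity of Lemma~\ref{interchange} only for the sub-case ${\cal Q}_v=1$. You instead use the single factorisation $N=HS$ uniformly: you compute the action of $S_v$ on the generalised two-graph state (which, as you correctly observe, is LC$^\odot$ at $v$ away from $v$ itself when $v\in{\cal L}$, and a simple coefficient increment when $v\in{\cal R}$), and then compose with the already-proven Theorem~\ref{HvAction}. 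Your approach is more economical in that it reuses Theorem~\ref{HvAction} wholesale rather than re-deriving parallel spectral lemmas; the paper's approach, by contrast, makes the algebraic structure of the $N_v$ transform explicit in its own right, which is useful if one wants the intermediate formulas (e.g.\ Lemma~\ref{nonboolnotinm}) independently. Both routes rely on the same mod-$4$ carry identity (Lemma~\ref{lem2}) for the key bookkeeping step, and your carry analysis for the $v\in{\cal L}$ case is exactly what is needed.
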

\begin{proof}
Section \ref{proofs}.
\end{proof}

We now describe the action of the inverse of $N_v$ on a generalised two-graph state. This is important for 
computational reasons, as it allows us
to compute spectral measures such as the $L_j$ norm and the
{\em{Clifford merit factor}} \cite{Par:CMF} of a 
graph state (section \ref{Spectral})
by using a Gray code ordering on successive actions of $H$ and $N$ on each qubit,
thereby avoiding the problem of having to
store all the graphs from every step.

\begin{lem} Let $\left| \psi \right >$ be a generalised two-graph state over $n$ qubits, represented by the graph-set-set
$(G,{\cal{R}},{\cal{Q}}) \equiv (M,P,{\cal{Q}})$.
Let $(G'',{\cal{R}}'',{\cal{Q}}'') \equiv (M'',P'',{\cal{Q}}'')$ be the
generalised two-graph state resulting from the
application of $H_v$ to $(G,{\cal{R}},{\cal{Q}})$.
Let $v \in \{0,1,\ldots,n-1\}$. Then the action of $N^{-1}$ on $v$ is the graph-set-set 
$(G',{\cal{R}}',{\cal{Q}}') \equiv (M',P',{\cal{Q}}')$, where:

$$
\begin{array}{ll}
\left\{\begin{array}{l}
G'=G''+C_{{\cal N}_v^{G''}}+\Delta_{{\cal N}_v^{G''}}+\Gamma_{vv}^{G''}\Delta_{{\cal N}_v^{G''}}+\Delta_{{\cal
Q}''\cap{\cal N}_v^{G''}}\\
{\cal Q}'={\cal Q}''\ominus{\cal N}_v^{G''} \end{array}\right. &  \m{ if } v\in{\cal L}''\\ \\
\left\{\begin{array}{l}
G'=G''+\Delta_{v}\\
{\cal Q}'={\cal Q}''\cup\{v\} \end{array}\right. & \m{ if } v\in{\cal R}'' \m{ and } {\cal Q}_v''=0\\ \\
\left\{\begin{array}{l}
G'=G''\\
{\cal Q}'={\cal Q}''\setminus\{v\} \end{array}\right. & \m{ if } v\in{\cal R}'' \m{ and } {\cal Q}_v''=1 \\
{\cal{R}}' = {\cal{R}}''. &
\end{array}$$
 \label{invN}\end{lem}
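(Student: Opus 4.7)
The plan is to reduce the computation of $N_v^{-1}\left|\psi\right>$ to a much simpler operation by exploiting the factorisation $N = HS$, which is immediate from the definitions of $H$, $N$, and $S$. Inverting gives $N^{-1} = S^{-1}H$, and since $H^2 = I$ we obtain $N_v^{-1}\left|\psi\right> = S_v^{-1}H_v\left|\psi\right> = S_v^{-1}\left|\psi''\right>$. Now $S^{-1}\left|0\right> = \left|0\right>$ and $S^{-1}\left|1\right> = -i\left|1\right>$, so the action on $\left|\psi''\right> = m''(\mathbf{x})i^{p''(\mathbf{x})}$ is simply $S_v^{-1}\left|\psi''\right> = m''(\mathbf{x})i^{p''(\mathbf{x})+3x_v}$. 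In particular the magnitude is untouched, which already forces $M' = M''$ and hence $\mathcal{R}' = \mathcal{R}''$, and the entire problem reduces to re-expressing $p'' + 3x_v \pmod{4}$ in the canonical form of definition \ref{TGS1}.

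The two subcases with $v \in \mathcal{R}''$ are essentially immediate. There $x_v$ is a free variable, so adding $3x_v$ only modifies the linear coefficient of $x_v$, which changes from $2\Gamma_{P''_{vv}}+\mathcal{Q}''_v$ to $(2\Gamma_{P''_{vv}}+\mathcal{Q}''_v+3)\bmod 4$. When $\mathcal{Q}''_v = 0$ this is $2(\Gamma_{P''_{vv}}\oplus 1)+1$, i.e.\ the loop at $v$ is toggled and $v$ is added to $\mathcal{Q}'$; when $\mathcal{Q}''_v = 1$ it is $2\Gamma_{P''_{vv}}$, so the loop at $v$ is left untouched and $v$ is removed from $\mathcal{Q}'$. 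These two computations reproduce the second and third branches of the statement.

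The main work is the case $v \in \mathcal{L}''$, where $x_v$ is bound by the parity check $x_v \equiv a + B \pmod 2$ with $a = \Gamma_{M''_{vv}} = \Gamma_{vv}^{G''}$ and $B = \sum_{j \in \mathcal{N}_v^{G''}} x_j$, so $3x_v$ must be re-expressed as a quadratic in the free variables. The key ingredient is the identity
\[ 3\bigl((a+B)\bmod 2\bigr) \equiv 3a + 3B + 2\,e_2(a,b_{j_1},\dots,b_{j_k}) \pmod{4}, \]
where $e_2$ is the second elementary symmetric polynomial in $a$ together with the bits $b_j = x_j$, $j \in \mathcal{N}_v^{G''}$; this follows from $x_v = (a+B) - 2\lfloor (a+B)/2\rfloor$ once one observes that the higher elementary symmetric polynomials contribute coefficients divisible by $4$ after multiplication by $-6$. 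Substituting the identity into $p''$ produces three groups of purely quadratic updates: a coefficient-$2$ term on every pair in $\mathcal{N}_v^{G''}$, realising $C_{\mathcal{N}_v^{G''}}$; a $\mathcal{Q}$-toggle at every $j \in \mathcal{N}_v^{G''}$, giving $\mathcal{Q}' = \mathcal{Q}''\ominus\mathcal{N}_v^{G''}$; and a shift of the loop coefficient at each such $j$ whose value depends on both $a$ and $\mathcal{Q}''_j$. The obstacle I expect to require the most care is this last step: showing that the $\bmod 4$ reduction of $(2\Gamma_{P''_{jj}}+\mathcal{Q}''_j)+(3+2a)$ agrees, term by term in the subcases $\mathcal{Q}''_j = 0$ and $\mathcal{Q}''_j = 1$, with the combined toggle $\Delta_{\mathcal{N}_v^{G''}} + \Gamma_{vv}^{G''}\Delta_{\mathcal{N}_v^{G''}} + \Delta_{\mathcal{Q}''\cap\mathcal{N}_v^{G''}}$ prescribed by the lemma.
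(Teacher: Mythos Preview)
Your approach is essentially the same as the paper's: the paper also factors $N^{-1}=S^3H$ (equivalently $S^{-1}H$), reduces the problem to the effect of adding $3x_v$ to $p''$, and then for $v\in\mathcal{L}''$ invokes the same $\bmod\,2$-to-$\bmod\,4$ expansion (its Lemma~\ref{lem2}, which is exactly your elementary-symmetric-polynomial identity) after substituting $x_v=\Gamma_{M''_{vv}}+x_{\mathcal{N}_v^{M''}}$. The two $v\in\mathcal{R}''$ subcases are handled by the same direct coefficient check you give, so there is no substantive difference between your argument and the paper's.
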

\begin{proof}
Section \ref{proofs}.
\end{proof}

\begin{thm} Let $\left| \psi \right > =mi^p$ be a generalised two-graph state over $n$ qubits. Then
there always exists a graph state, $\left| \psi \right >'$, such that
$\left| \psi \right >' = U\left| \psi \right >$, where $U \in {\bf{C_n}}$.
\label{tgrapheq}
\end{thm}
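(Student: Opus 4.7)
The plan is to reduce to the two-graph case already handled by Theorem \ref{twographeq}, by first using a single layer of diagonal Clifford gates to strip off the $\mathbb{Z}_4$-valued linear phases that distinguish a generalised two-graph state from an ordinary two-graph state. From Definition \ref{TGS1}, the only contribution to $p({\bf x})$ that is not an integer multiple of $2$ comes from the terms $\mathcal{Q}_j x_j$ with $j \in \mathcal{Q}$; each such $j$ produces precisely one factor $i^{x_j}$ in $\left| \psi \right>$, and these are exactly the obstructions to $\left| \psi \right>$ being a two-graph state.

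First I would apply, for every $j \in \mathcal{Q}$, the diagonal gate $S_j^{-1}$, where $S = \begin{tiny}\left(\begin{array}{rr}1 & 0 \\ 0 & i\end{array}\right)\end{tiny} \in {\bf D_1} \subset {\bf C_1}$. Acting on qubit $j$, $S_j^{-1}$ multiplies $\left| \psi \right>_{\bf x}$ by $i^{-x_j}$, which exactly cancels the $i^{x_j}$ factor, flips $\mathcal{Q}_j$ from $1$ to $0$, and leaves the magnitude function $m$ and all remaining (even-coefficient) terms of $p$ untouched. Since the different $S_j^{-1}$'s commute, their product $D_0 = \prod_{j \in \mathcal{Q}} S_j^{-1}$ lies in ${\bf D_n} \subset {\bf C_n}$, and $D_0 \left| \psi \right>$ is a generalised two-graph state with $\mathcal{Q}' = \emptyset$. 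All coefficients of its quadratic $\mathbb{Z}_4$-valued phase are then even, so $i^p = (-1)^{p/2}$ and $D_0 \left| \psi \right>$ is in fact an ordinary two-graph state in the sense of Definition \ref{df:TGS}.

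I would then invoke Theorem \ref{twographeq} on $D_0 \left| \psi \right>$ to obtain a graph state $\left| \psi \right>' = D U D_0 \left| \psi \right>$, where $U \in \{I,H\}^{\otimes n}$ and $D \in {\bf D_n}$. Both ${\bf D_n}$ and $\{I,H\}^{\otimes n}$ sit inside the group ${\bf C_n}$, so the composite $D U D_0$ itself lies in ${\bf C_n}$, which is precisely the statement of the theorem. There is essentially no obstacle here: the key observation is simply that $\mathcal{Q}$ encodes exactly the single-qubit fourth-root-of-unity linear phases of the state, each individually killable by the diagonal Clifford gate $S^{-1}$, so that the genuinely graph-theoretic work has already been carried out in Theorem \ref{twographeq}.
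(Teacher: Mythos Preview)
Your proposal is correct and takes essentially the same approach as the paper: first kill $\mathcal{Q}$ with a diagonal unitary from ${\bf D_n}$ to reduce to an ordinary two-graph state, then invoke Theorem~\ref{twographeq}. The only difference is that you are more explicit than the paper in naming the diagonal gate $\prod_{j \in \mathcal{Q}} S_j^{-1}$, whereas the paper simply asserts the existence of a suitable unitary in ${\bf D_n}$.
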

\begin{proof}
A generalised two-graph state, $(G,{\cal{R}},{\cal{Q}})$ is always locally-equivalent to a two-graph state,
$(G,{\cal{R}}) = (G,{\cal{R}},{\cal{Q}}')$, via the action of some unitary in ${\bf{D_n}}$, where
${\cal{Q}}' = \emptyset$. The theorem then follows from theorem \ref{twographeq}.
\end{proof}

\begin{cor} (theorem \ref{tgrapheq})
The generalised two-graph state is a stabilizer state, and vice-versa.
\label{corthm}
\end{cor}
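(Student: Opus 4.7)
The plan is to establish the two implications separately, since the corollary asserts an equivalence between the generalised two-graph states and stabilizer states. The forward direction is essentially an immediate consequence of theorem \ref{tgrapheq}. By that theorem, every generalised two-graph state $\left|\psi\right>$ satisfies $\left|\psi\right>'=U\left|\psi\right>$ for some $U\in{\bf C_n}$ and some graph state $\left|\psi\right>'$. Since graph states are stabilizer states by definition \ref{graphstate}, and the local Clifford group ${\bf C_n}$ normalises the Pauli group (definition \ref{groups}) and hence sends stabilizer states to stabilizer states, it follows that $\left|\psi\right>=U^{-1}\left|\psi\right>'$ is also a stabilizer state.

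For the reverse direction, the plan is to invoke the standard fact, already cited in the proof of corollary \ref{corthm2} via \cite{Sch:QG,Gras:QG}, that every stabilizer state is locally Clifford equivalent to a graph state. That is, given a stabilizer state $\left|\phi\right>$, there exists $U\in{\bf C_n}$ and a graph state $\left|\psi\right>'$ such that $\left|\phi\right>=U\left|\psi\right>'$. Since a graph state is trivially a generalised two-graph state (take ${\cal L}=\emptyset$, $m=1$, ${\cal Q}=\emptyset$, and $P$ the underlying simple graph), it suffices to show that the set of generalised two-graph states is closed under ${\bf C_n}$. As ${\bf C_n}={\bf D_n}\times{\bf T_n}$ and, up to an irrelevant global phase, any element of ${\bf T_n}$ is expressible as a product of a diagonal matrix with an element of $\{I,H,N\}^{\otimes n}$, it is enough to check closure under the local actions of $H_v$, $N_v$, and elements of ${\bf D_n}$. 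Closure under $H_v$ and $N_v$ is exactly what theorems \ref{HvAction} and \ref{NvAction} establish graphically.

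The only step requiring care is closure under the diagonal group ${\bf D_n}$. Elements of ${\bf D_n}$ are generated (per definition \ref{groups}) by $\omega$, $S$, and $X$, each acting on a single qubit. Multiplication by $\omega$ is a global phase and can be absorbed into the normalisation constant $c$. The action of $S_v$ on $\left|\psi\right>=mi^p$ multiplies each coefficient with $x_v=1$ by $i$; since $m$ is zero unless $x_v$ takes its allowed value as determined by the affine factors, this simply adds a linear term $x_v$ (resp.\ a constant) to $p$, which is absorbed by toggling ${\cal Q}_v$ when $v\in{\cal R}$, or propagated through the affine factors and reabsorbed after a $\m{swp}$ when $v\in{\cal L}$. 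The action of $X_v$ permutes coefficients, corresponding to the substitution $x_v\mapsto x_v+1$, which clearly preserves both the product-of-affines form of $m$ and the special ${\mathbb Z}_4$-quadratic form of $p$. Thus generalised two-graph states form a ${\bf C_n}$-invariant set containing the graph states, and hence contain all stabilizer states, completing the reverse direction. The main obstacle is simply bookkeeping of the ${\cal Q}$ set and of loops when absorbing $S_v$ actions for $v\in{\cal L}$, but nothing conceptually new is required beyond the mechanisms already employed in theorems \ref{HvAction} and \ref{NvAction}.
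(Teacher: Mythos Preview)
Your proof is correct and follows essentially the same route as the paper's: both directions go through the local Clifford equivalence with graph states, invoking theorem \ref{tgrapheq} for the forward inclusion and \cite{Sch:QG,Gras:QG} for the reverse. The paper's proof is terser and leaves the closure of generalised two-graph states under ${\bf C_n}$ implicit, whereas you spell it out via theorems \ref{HvAction} and \ref{NvAction}; this is a genuine improvement in rigour.

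One simplification you could make: since the introduction already records that $H$ and $N$ generate ${\bf C_1}$, closure under $H_v$ and $N_v$ (theorems \ref{HvAction} and \ref{NvAction}) alone gives closure under all of ${\bf C_n}$. Your separate treatment of ${\bf D_n}$ generators is therefore redundant, and in particular the slightly delicate case of $S_v$ for $v\in{\cal L}$ (which you correctly flag as needing the mod-$4$ lift of lemma \ref{lem2}) can be avoided entirely.
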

\begin{proof}
From \cite{Gras:QG,Sch:QG} and theorem \ref{tgrapheq},
all stabilizer states and all generalised two-graph states are graph states, via
the action of unitaries from the local Clifford group, and such action is reversible.
\end{proof}


\subsection{The actions of  $\lambda$ and $\lambda^2$}
\label{lambda}

We have described the action of $N$ and $H$ on the generalised two-graph state. It is trivial to convert these
actions to the actions of $\lambda$ and $\lambda^2$ on the state, respectively, remembering that, in this paper,
global multiplicative constants are ignored. Explicitly, \\

$$ \lambda_v(G,{\cal{R}},{\cal{Q}}) = N_v(G,{\cal{R}},{\cal{Q}}), \mf \mf
\lambda_v^2(G,{\cal{R}},{\cal{Q}}) = S_v^{-1}H_v(G,{\cal{R}},{\cal{Q}}) = N_v^2(G,{\cal{R}},{\cal{Q}}). $$

\section{Canonisation}
\label{canon}
For some (generalised) two-graph state, $mi^p$, as represented by $(G,{\cal{R}},{\cal{Q}})$ over $n$ qubits,
let ${\cal{L}} \ne \emptyset$. Then there is a set of equivalent representations for the same
state. For purposes of comparison,
it is desirable to find a canonical representative from each set of equivalent representations. In this section
we provide a simple algorithm to obtain, from an arbitrary (generalised) two-graph state, a canonical representative.

\begin{df}
A generalised two-graph state, $(G,{\cal{R}},{\cal{Q}})$, is defined to be {\em{canonised}} if
$v < u$, $\forall (v,u) \in ({\cal{L}},{\cal{N}}_v^G)$.
\label{canonised}
\end{df}

Observe that such a canonical form is unique and, given such a unique $M$ graph, the $P$ graph and $Q$ set are also
unambiguously fixed. We now describe the process of canonisation of a generalised two-graph state.

\begin{lem}
Let $\left| \psi \right >$ be a generalised two-graph state over $n$ qubits, as represented by
$(G,{\cal{R}},{\cal{Q}})$. Then we can obtain a canonical representation of 
 $\left| \psi \right >$, as represented by $(G_n,{\cal{R}}_n,{\cal{Q}}_n)$,
by following these steps, where $\min({\cal{A}})$ means the minimum integer in set ${\cal{A}}$: \\

\noindent $(G_n,{\cal{R}}_n,{\cal{Q}}_n) = $ {\bf{canon}}$(G,{\cal{R}},{\cal{Q}})$:

\vspace{3mm} 

$$ \begin{array}{l}
\left \{
\begin{array}{l}
\m{Set } (G_0,{\cal{R}}_0,{\cal{Q}}_0)=(G,{\cal{R}},{\cal{Q}}), \m{ and set } i = 0 \\
\m{while } \exists v \in {\cal L}_i \m{ such that } v > \min({\cal N}_v^{G_i}) \\
\mf \mf w = \min({\cal N}_v^{G_i}) \\
\mf \mf (G_{i+1},{\cal{R}}_{i+1},{\cal{Q}}_{i+1}) =  \m{swp}(G_i,{\cal{R}}_i,{\cal{Q}}_i,v,w) \\
\mf \mf i \leftarrow i + 1.
\end{array} \right .
\end{array} $$
Each call to {\bf canon} will have worst-case complexity $O(|{\cal L}|^2)$.
\label{norm}
\end{lem}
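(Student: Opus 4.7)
The plan is to establish three facts in turn for the procedure in Lemma \ref{norm}: every call to swp inside the loop is well-defined and preserves $\left| \psi \right>$, the loop halts, and at termination the output satisfies Definition \ref{canonised}. The complexity bound is then handled separately.

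First I would check that each call $\m{swp}(G_i,{\cal R}_i,{\cal Q}_i,v,w)$ meets the hypotheses of Lemma \ref{SwpActiongen}. The guard $v > \min({\cal N}_v^{G_i})$ forces ${\cal N}_v^{G_i}$ to be non-empty, so $w = \min({\cal N}_v^{G_i})$ is well defined and lies in ${\cal N}_v^{G_i}$. By Definition \ref{TGS1}, the induced subgraph $G_{\cal L}$ of any generalised two-graph state contains only loops, so every non-loop neighbour of $v \in {\cal L}_i$ must lie in ${\cal R}_i$; hence $w \in {\cal R}_i$. Lemma \ref{SwpActiongen} then yields that the output is again a generalised two-graph state encoding the same $\left| \psi \right>$, and its update rule ${\cal R}' = ({\cal R} \cup \{v\}) \setminus \{w\}$ translates into ${\cal L}_{i+1} = ({\cal L}_i \setminus \{v\}) \cup \{w\}$. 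By induction, every $(G_i,{\cal R}_i,{\cal Q}_i)$ represents the original $\left| \psi \right>$.

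For termination I would use the non-negative integer potential $\Phi_i = \sum_{u \in {\cal L}_i} u$. Combining ${\cal L}_{i+1} = ({\cal L}_i \setminus \{v\}) \cup \{w\}$ with the inequality $w < v$ forced by the guard gives $\Phi_{i+1} = \Phi_i - (v - w) \le \Phi_i - 1$, so $\Phi$ strictly decreases and the loop must halt after finitely many iterations. At termination the guard is false, so for every $v \in {\cal L}$ one has $v \le \min({\cal N}_v^G)$; since $v \notin {\cal N}_v^G$ by Definition \ref{graphdefs}, equality is excluded and we get $v < u$ for every $u \in {\cal N}_v^G$, which is precisely the canonisation condition of Definition \ref{canonised}.

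The main obstacle is tightening the iteration count to match the stated $O(|{\cal L}|^2)$ bound, since the potential argument alone only yields $O(n \cdot |{\cal L}|)$ iterations. The route I would follow is the parity-check-matrix interpretation in the remark after Definition \ref{df:TGS}: each swp corresponds to a pivot step updating the systematic parity-check matrix of the binary linear coset code encoded by $M$, and canonisation amounts to reaching the unique systematic form whose pivot-column set ${\cal L}$ is lex-minimal. A careful scheduling argument then shows that this target form is reached in at most $|{\cal L}|$ pivots, each costing $O(|{\cal L}|)$ elementary updates to the adjacency data, which when translated back to the graph representation $(G,{\cal R},{\cal Q})$ delivers the claimed $O(|{\cal L}|^2)$ worst-case complexity.
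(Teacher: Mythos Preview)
Your correctness and termination argument is sound and takes a genuinely different route from the paper. The paper does not use a potential function; instead it isolates an auxiliary lemma (their Lemma~\ref{biggerexists}) stating that whenever the state is not yet canonised there exists a $v\in{\cal L}$ with $v\notin{\cal L}_c$ (the canonical left set) and $v>\min({\cal N}_v^G)$, and proves it by a linear-independence count in the dual of the binary code encoded by $M$. From this they conclude that a valid swp always exists until canonisation is reached. Your approach sidesteps this entirely: you get well-definedness of each swp directly from the bipartite structure of $G_{\cal L}$, termination from the strictly decreasing potential $\Phi_i=\sum_{u\in{\cal L}_i}u$, and canonisation at termination immediately from the negated loop guard together with Definition~\ref{canonised}. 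This is more elementary and avoids any coding-theoretic detour.

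For the $O(|{\cal L}|^2)$ bound the two routes reconverge in spirit. The paper leverages the extra feature of Lemma~\ref{biggerexists} --- that the swappable $v$ can always be taken outside ${\cal L}_c$ --- and then asserts that at most $\binom{|{\cal L}_c|}{2}$ swps are needed. Your proposed route via the systematic parity-check matrix and pivot scheduling is essentially the same coding viewpoint, but be aware of one mismatch: your sketch bounds the work for a \emph{well-chosen} sequence of $|{\cal L}|$ pivots, whereas the algorithm as written leaves the choice of $v$ unspecified. The paper's $\binom{|{\cal L}_c|}{2}$ count is likewise stated without a detailed justification for arbitrary choices of $v$, so on this point neither argument is fully airtight; but if you want to match the paper you should phrase the bound in terms of the number of swp calls rather than ``$|{\cal L}|$ pivots times $O(|{\cal L}|)$ per pivot''.
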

\begin{proof}
Section \ref{proofs}.
\end{proof}

One can apply canonisation to the generalised two-graph state after each application of $N$ or $H$, if required.

\section{Spectral Analysis of the Graph State}
\label{Spectral}
We now briefly demonstrate the usefulness of the two-graph representation by
computing various {\em{$L_j$-norms}} of the graph
state wrt the local Clifford group. Let $\left| \psi \right >$ be a generalised two-graph state over $n$ qubits.
For $U \in {\bf{C_n}}$, let $\left| \psi_U \right > = U\left| \psi \right >$. The $L_j$-norm of $\left| \psi \right >$
is given by
$$ || \left| \psi \right > ||_j = \left ( 2^{-n}\sum_{{\bf{x}} \in {\mathbb{Z}}_2^n}
(2^{\frac{n}{2}}|\left| \psi \right >_{\bf{x}}|)^j \right )^{\frac{1}{j}}
= 2^{n(\frac{1}{2} - \frac{1}{j})} \left ( \sum_{{\bf{x}} \in {\mathbb{Z}}_2^n} |\left| \psi \right >_{\bf{x}}|^j \right )^{\frac{1}{j}}. $$

We wish to compute the $L_j$-norm over every state generated by the action of the local Clifford group on
$\left| \psi \right >$. However, as these norms only depend on a summary of
powers of magnitudes, it suffices to compute
the $L_j$-norm over every state generated by the action of $\{I,H,N\}^{\otimes n}$ on $\left| \psi \right >$, as
the action of matrices from ${\bf D_n}$ on the state does
not affect coefficient magnitudes; that is, let
$U=U_DU_T$, with $U_D\in {\bf D_n}$ and $U_T\in \{I,H,N\}^{\otimes n}$: then $|| \left| \psi_U \right > ||=
|| \left| \psi_{U_DU_T} \right > ||=|| \left| \psi_{U_T} \right > ||$. Thus
$$ || \left| \psi \right > ||_{{\bf{C_n}},j} =\left(\frac{24^{-n}}{8}\sum_{U\in {\bf{C_n}}}|| \left| \psi_U \right >
||_j^j\right)^{\frac{1}{j}}=
\left ( 3^{-n} \sum_{U \in \{I,H,N\}^{\otimes n}} || \left| \psi_U \right > ||_j^j \right )^{\frac{1}{j}}
 = 2^{\frac{n}{2}} \left (6^{-n}
\sum_{\begin{tiny} \begin{array}{c} {\bf{x}} \in {\mathbb{Z}}_2^n \\ U \in \{I,H,N\}^{\otimes n} \end{array}\end{tiny}}
|\left| \psi_U \right >_{\bf{x}}|^j \right )^{\frac{1}{j}}. $$

Normalisation of the pure state ensures that
$ || \left| \psi \right > ||_2 = || \left| \psi \right > ||_{{\bf{C_n}},2} = 1$, by Parseval's theorem.

Let $\left| \psi_U \right >$ be represented by the graph-set-set
$(G_U,{\cal{R}}_U,{\cal{Q}}_U)$, where
${\cal{L}}_U = {\cal{V}} \setminus {\cal{R}}_U$. Then one can show that,
$$ || \left| \psi_U \right > ||_j = 2^{\frac{(j-2)|{\cal{L}}_U|}{2j}}. $$
Therefore,
$$ || \left| \psi \right > ||_{{\bf{C_n}},j} =
\left ( 3^{-n} \sum_{U \in \{I,H,N\}^{\otimes n}} 2^{\frac{(j-2)|{\cal{L}}_U|}{2}} \right )^{\frac{1}{j}}. $$
In other words, $|| \left| \psi \right > ||_{{\bf{C_n}},j}$ can be efficiently computed by keeping track of the size of
${\cal{L}}_U$ after each successive action of $H$ and $N$ on the two-graph state. In particular, although the evaluation
is theoretically over all $24^n \times 8$ transforms represented by the local Clifford group, we obtain the same evaluation by only
considering the $3^n$ transforms represented by $\{I,H,N\}^{\otimes n}$, which is an exponential improvement in computational complexity.

Using the {\em{Database of Self-Dual Quantum Codes}} \cite{DanDat}
we classify all inequivalent graph states according to
their $L_j$ norms wrt ${\bf{C_n}}$, up to $n = 7$ qubits, as
shown in table 1
for $j = 3$ and $j = 4$, where the norm is
$|| \left| \psi \right > ||_{{\bf{C_n}},j}$.
One can expect the entanglement of the graph state to be higher if
$|| \left| \psi \right > ||_{{\bf{C_n}},j}$ is lower.
In \cite{Par:CMF}, the so-called {\em{Clifford merit factor}} (${\cal{CMF}}$) was
proposed as a suitable measure of entanglement for a graph state, where
$$ {\cal{CMF}}(\left| \psi \right >) = \frac{1}{|| \left| \psi \right > ||_{{\bf{C_n}},4}^4 - 1}. $$
One can expect the entanglement of a graph state to be higher if the
CMF of a graph state is higher.
Moreover, it was proved in \cite{Par:CMF} that the expected value of
$\frac{1}{\cal{CMF}}$ for a random graph state,
as $n \rightarrow \infty$, is $1$
\begin{footnote}{
Assumes all graphs are equally likely.}
\end{footnote}. This is suggested as, at least, reasonable
by the results of table 1
as $|| \left| \psi \right > ||_{{\bf{C_n}},4}^4$
for a random graph state
could well approach $2$ from below as $n \rightarrow \infty$.

\begin{table}[htb]
\begin{tiny}
\begin{center}
$ \begin{array}{cc}
\begin{array}{|c|c|c|}
\hline
n & || \left| \psi \right > ||_{{\bf{C_n}},3} & \m{frequency} \\ \hline
\hline
1 & 1.000000 & 1 \\
\hline
\m{average} & 1.000000 & 1 \\ \hline
\hline
2 & 1.000000 & 1 \\
\hline
\m{average}  & 1.000000 & 1 \\ \hline
\hline
3 & 1.079871 & 1 \\
\hline
\m{average}  & 1.079871 & 1 \\ \hline
\hline
4 & 1.035744 & 1 \\
\cline{2-3}
& 1.020167 & 1 \\
\hline
\m{average}  & 1.027955 & 2 \\ \hline
\hline
5 & 1.067977 & 1 \\
\cline{2-3}
& 1.040834 & 1 \\
\cline{2-3}
& 1.030604 & 1 \\
\cline{2-3}
& 1.020167 & 1 \\
\hline
\m{average}  & 1.039895 & 4 \\ \hline
\hline
6 & 1.106649 & 1 \\
\cline{2-3}
& 1.071174 & 1 \\
\cline{2-3}
& 1.059898 & 1 \\
\cline{2-3}
& 1.053345 & 1 \\
\cline{2-3}
& 1.047544 & 1 \\
\cline{2-3}
& 1.046710 & 1 \\
\cline{2-3}
& 1.040834 & 2 \\
\cline{2-3}
& 1.034036 & 1 \\
\cline{2-3}
& 1.027148 & 1 \\
\cline{2-3}
& 1.020167 & 1 \\
\hline
\m{average}  & 1.049849 & 11 \\ \hline
\hline
7 & 1.150213 & 1 \\
\cline{2-3}
& 1.108636 & 1 \\
\cline{2-3}
& 1.089457 & 1 \\
\cline{2-3}
& 1.085332 & 1 \\
\cline{2-3}
& 1.078038 & 1 \\
\cline{2-3}
& 1.075408 & 1 \\
\cline{2-3}
& 1.073824 & 1 \\
\cline{2-3}
& 1.067977 & 1 \\
\cline{2-3}
& 1.063683 & 2 \\
\cline{2-3}
& 1.059898 & 1 \\
\cline{2-3}
& 1.059355 & 1 \\
\cline{2-3}
& 1.056085 & 1 \\
\cline{2-3}
& 1.055538 & 1 \\
\cline{2-3}
& 1.051694 & 2 \\
\cline{2-3}
& 1.051143 & 1 \\
\cline{2-3}
& 1.047266 & 1 \\
\cline{2-3}
& 1.043080 & 1 \\
\cline{2-3}
& 1.042800 & 1 \\
\cline{2-3}
& 1.038578 & 1 \\
\cline{2-3}
& 1.034036 & 3 \\
\cline{2-3}
& 1.033751 & 1 \\
\cline{2-3}
& 1.029455 & 1 \\
\hline
\m{average}  & 1.060719 & 26 \\ \hline
\hline
\end{array} &
\begin{array}{|c|c|c|c|}
\hline
n & || \left| \psi \right > ||_{{\bf{C_n}},4} & \m{CMF} & \m{frequency} \\ \hline
\hline
1 & 1.074570 & 3.000000 & 1 \\
\hline
\m{average} 1 & 1.074570 & & 1 \\ \hline
\hline
2 & 1.074570 & 3.000000 & 1 \\
\hline
\m{average}  & 1.074570 & & 1 \\ \hline
\hline
3 & 1.240806 & 0.729730 & 1 \\
\hline
\m{average}  & 1.240806 & & 1 \\ \hline
\hline
4 & 1.154701 & 1.285714 & 1 \\
\cline{2-4}
& 1.121195 & 1.723404 & 1 \\
\hline
\m{average}  & 1.137948 & & 2 \\ \hline
\hline
5 & 1.223202 & 0.807309 & 1 \\
\cline{2-4}
& 1.165247 & 1.185366 & 1 \\
\cline{2-4}
& 1.143857 & 1.404624 & 1 \\
\cline{2-4}
& 1.121195 & 1.723404 & 1 \\
\hline
\m{average}  & 1.163375 & & 4 \\ \hline
\hline
6 & 1.304643 & 0.527115 & 1 \\
\cline{2-4}
& 1.229154 & 0.779679 & 1 \\
\cline{2-4}
& 1.204803 & 0.903346 & 1 \\
\cline{2-4}
& 1.192052 & 0.981157 & 1 \\
\cline{2-4}
& 1.178878 & 1.073638 & 2 \\
\cline{2-4}
& 1.165247 & 1.185366 & 2 \\
\cline{2-4}
& 1.151120 & 1.323049 & 1 \\
\cline{2-4}
& 1.136453 & 1.496920 & 1 \\
\cline{2-4}
& 1.121195 & 1.723404 & 1 \\
\hline
\m{average}  & 1.184334 & & 11 \\ \hline
\hline
7 & 1.396589 & 0.356595 & 1 \\
\cline{2-4}
& 1.307925 & 0.519108 & 1 \\
\cline{2-4}
& 1.266787 & 0.634833 & 1 \\
\cline{2-4}
& 1.259527 & 0.659331 & 1 \\
\cline{2-4}
& 1.244619 & 0.714472 & 1 \\
\cline{2-4}
& 1.236959 & 0.745653 & 2 \\
\cline{2-4}
& 1.221198 & 0.816959 & 1 \\
\cline{2-4}
& 1.213084 & 0.857984 & 2 \\
\cline{2-4}
& 1.204803 & 0.903346 & 2 \\
\cline{2-4}
& 1.196347 & 0.953772 & 2 \\
\cline{2-4}
& 1.187709 & 1.010162 & 3 \\
\cline{2-4}
& 1.178878 & 1.073638 & 1 \\
\cline{2-4}
& 1.169844 & 1.145626 & 2 \\
\cline{2-4}
& 1.160595 & 1.227962 & 1 \\
\cline{2-4}
& 1.151120 & 1.323049 & 4 \\
\cline{2-4}
& 1.141405 & 1.434098 & 1 \\
\hline
\m{average}  & 1.207200 & & 26 \\ \hline
\hline
\end{array}
\end{array}$
\label{NormsTab}
\end{center}
\caption{$|| \left| \psi \right > ||_{{\bf{C_n}},3}$ and
$|| \left| \psi \right > ||_{{\bf{C_n}},4}$ norms for graph states of $n = 1$ to $7$ vertices}
\end{tiny}
\end{table}

We can also compute the $L_{\infty}$ norm of a graph state wrt the local Clifford group, where,
$$ || \left| \psi \right > ||_{{\bf{C_n}},\infty} = 2^{\left ( \sup_{U \in \{I,H,N\}^{\otimes n}}(|{\cal{L}}_U|) \right )/2}, $$
and (potentially) ranges from $1$ to $2^{n/2}$ (although, for connected graphs, neither the `ideal' lower bound
or the worst-case upper-bound are ever reached). In
\cite{DanQECC} the PAR$_{IHN}$, of a graph state is computed, where
PAR$_{IHN}(\left| \psi \right >) = || \left| \psi \right > ||_{{\bf{C_n}},\infty}^2$,
and where $n - \log_2(\m{PAR}_{IHN})$ gives a lower bound on
the entanglement of the graph state as measured by the log form of the geometric measure
\cite{Geom}, which is an {\em entanglement monotone} \cite{Vidal:Mon}.
This lower bound is shown to be tight for a graph state with a bipartite graph in its
LC orbit \cite{Par:QE,DanQECC}.
The method used in
\cite{DanQECC} to compute PAR$_{IHN}$ looked for the independent set of largest size over the set of graphs in the
LC orbit of $\left| \psi \right >$. It is evident that $\sup_{U \in \{I,H,N\}^{\otimes n}}(|{\cal{L}}_U|)$ is equal
to the size of this largest independent set. So we do not strictly need the two-graph form to compute the
$L_{\infty}$-norm of the graph state, but can make do with LC over the graph state. However,
we then require to search for the largest independent set in each graph in the LC orbit.
In contrast, if we use the two-graph representation to compute the $L_{\infty}$-norm of the graph state then we
identify an independent set in the current graph wrt $U$ as being the set ${\cal{L}}_U$. Thus the two-graph
representation implicitly encodes and keeps track of the independent sets in the graphs in
the LC orbit
of the graph state. The search techniques of \cite{DanQECC} and this
paper are of approximately equal computational complexity. Results for PAR$_{IHN}$ for graph states
are provided in \cite{DanQECC}.
In figure \ref{PARIHNTab} we plot the expected PAR$_{IHN}$ of a graph state of varying
density, where the `density' indicates the percentage probability that a given
edge exists. From figure \ref{PARIHNTab} we conclude that very dense and very sparse graphs represent
graph states with relatively high values of PAR$_{IHN}$, which translates to a relatively low
lower bound on the geometric measure of entanglement. Therefore, as one might expect, it
appears that graph states of density around $0.5$ should maximise the lower bound on the geometric measure of entanglement.

\begin{figure}[!h]
\cb
\includegraphics[width=5in]{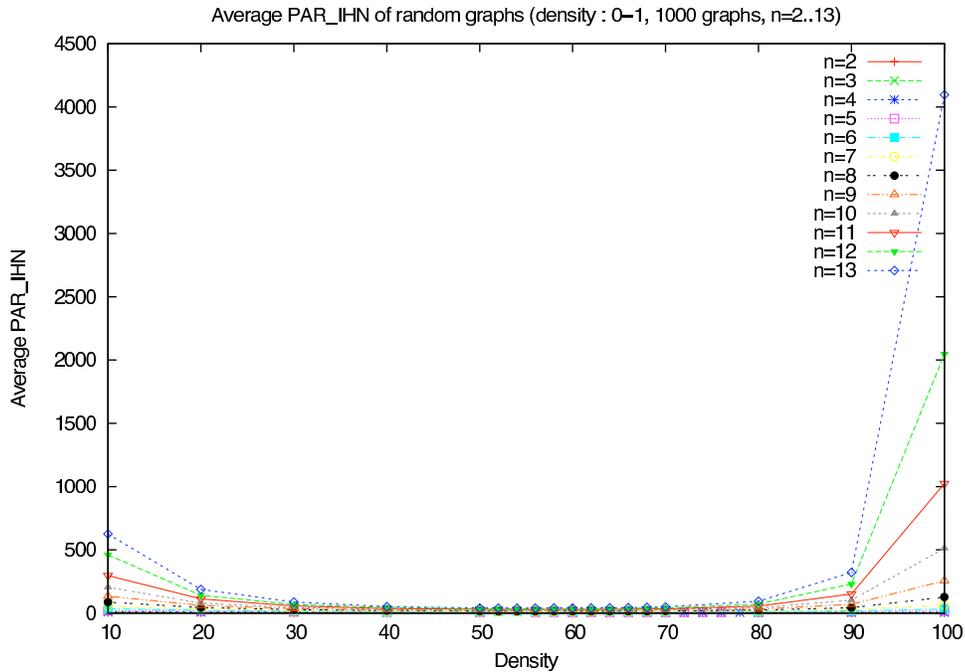}
\ce
\caption{Expected PAR$_{IHN}$ of random graph states of $n = 2$ to $13$ vertices
versus graph density, $10 - 100$\%}
\label{PARIHNTab}
\end{figure}

In figure \ref{manynorms}
we compute the expected value for an $L_j$ norm, $2 \le j < 16$, for a random graph
state of density 50\%. The horizontal lines are the results for the $L_{\infty}$ norm,
where $|| \left| \psi \right > ||_{{\bf{C_n}},\infty} = \sqrt{\m{PAR}_{IHN}}$. The results indicate that the
$L_{\infty}$ norm is approached from below by the $L_j$ norm as $j \rightarrow \infty$ (it is not so
difficult to prove this). The results also indicate that the relationship between
expected
$|| \left| \psi \right > ||_{{\bf{C_n}},\infty}$ and $n$ is marginally superlinear, at least for small numbers
of vertices.

\begin{figure}[!h]
\cb
\includegraphics[width=5in]{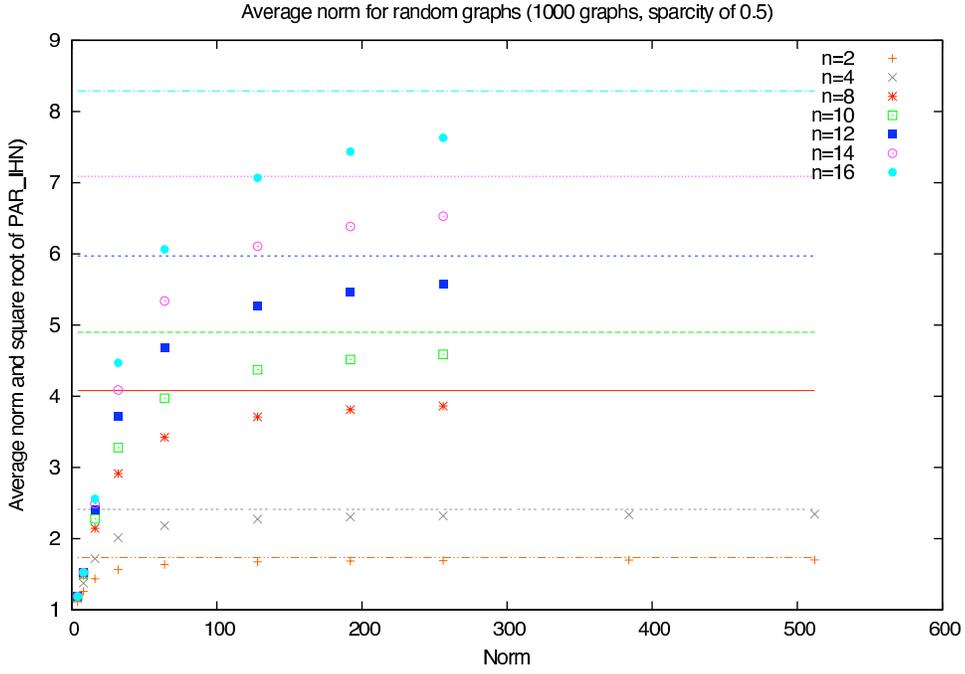}
\ce
\caption{Expected $L_j$ norm and $\sqrt{\m{PAR}_{IHN}}$ versus $j$ for
$n$-vertex random graph states}
\label{manynorms}
\end{figure}

\section{Appendix: Proofs}\label{proofs}


For an integer set, ${\cal S}$,
denote $x_{\cal S} = \sum_{j \in {\cal S}} x_j$, and let $x_a \in f$, $x_a \not \in f$, mean
that $f$ is or is not dependent on $x_a$, respectively.
We denote $p_a = p_{x_v = a}$, $m_a = m|_{x_v = a}$, for $a \in {\mathbb{Z}}_2$.

\subsection{Proofs for section \ref{TGS}}

\begin{proof} (of lemma \ref{SwpAction})
We write $m=r(x_v+x_w+h_v)\
\prod_{t \in {\cal N}_w^M \setminus \{v\}} (x_t+x_w+h_t)$, with $x_v,x_w\notin r$,
for $h_k= \Gamma_{M_{kk}}+1+ \sum_{j \in {\cal N}_k^M \setminus \{w\}} x_j$.
We also write
$p=x_w(x_{{\cal N}_w^P} + \Gamma_{P_{ww}}) +p|_{x_w=0}$. We want to
interchange the roles of $v$ and $w$ by re-factoring $m$ and by substituting $x_w=x_v+h_v+1$ in the remaining terms
that involve $x_w$. Thus
$m' = r(x_w+x_v+h_v)\prod_{t \in {\cal N}_w^M \setminus \{v\}}(x_t+x_v+h_v+h_t+1)$,
and $p'=(x_v+h_v+1)(x_{{\cal N}_w^P} + \Gamma_{P_{ww}}) +p|_{x_w=0}$.
From the form of $m$, $p$, and $m'$, $p'$, where ${\cal R}' = {\cal R} \cup \{v\} \setminus \{w\}$, we
obtain the graph equations,
$$ \begin{array}{l}
M' = M - K_{\{v\},{\cal N}_v^M} - K_{{\cal N}_w^M \setminus \{v\},\{w\}} - \Gamma_{M_{vv}}\Delta_{\{v\}}
 + K_{\{w\},{\cal B}_v^M \setminus \{w\}}
 + K_{{\cal N}_w^M \setminus \{v\},{\cal B}_v^M \setminus \{w\}}
  + \Gamma_{M_{vv}} ( \Delta_{\{w\}} + \Delta_{{\cal N}_w^M \setminus \{v\}}), \\
P' = P - K_{\{w\},{\cal N}_w^P} - \Gamma_{P_{ww}}\Delta_{\{w\}} + K_{{\cal B}_v^M \setminus \{w\},{\cal N}_w^P}
 + \Gamma_{M_{vv}}\Delta_{{\cal N}_w^P} + \Gamma_{P_{ww}}\Delta_{{\cal B}_v^M \setminus \{w\}}.
\end{array} $$
Rearranging,
$$ \begin{array}{l}
M' = M + K_{\{v\},{\cal N}_v^M} + K_{{\cal N}_w^M \setminus \{v\},{\cal B}_v^M} + \Gamma_{M_{vv}}\Delta_{{\cal B}_w^M \setminus \{v\}}, \\
P' = P + K_{{\cal B}_v^M,{\cal N}_w^P} + K_{\{w\},{\cal B}_v^M \setminus \{w\}} + \Gamma_{M_{vv}}\Delta_{{\cal N}_w^P}
 + \Gamma_{P_{ww}}\Delta_{{\cal B}_v^M \setminus \{w\}}.
\end{array} $$
Combining and simplifying,
$$ \begin{array}{ll}
G' = M' + P' & = G + K_{{\cal N}_w^M,{\cal B}_v^M} + \Delta_{\{v\}} + K_{{\cal B}_v^M,{\cal B}_w^P}
 + \Delta_{\{w\}} + \Gamma_{G_{vv}}\Delta_{{\cal B}_w^G} + \Gamma_{G_{ww}}\Delta_{{\cal B}_v^G} \\
   & = G + K_{{\cal B}_v^G,{\cal B}_w^G} + \Delta_{\{v,w\}} + \Gamma_{G_{vv}}\Delta_{{\cal B}_w^G} + \Gamma_{G_{ww}}\Delta_{{\cal B}_v^G} = G^{vw}.
\end{array} $$
\end{proof}

To prove theorem \ref{TGSAction} we require the following lemma.

\begin{lem} Let $m$ and $p$ be Boolean functions. Then, 
\beg  \sqrt{2}H_v m(-1)^p= m_0(-1)^{p_0}+m_1(-1)^{p_1+x_v}.
\label{hj}\eeg
\label{hvmp}\end{lem}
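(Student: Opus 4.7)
The plan is a direct coordinate-wise computation: apply $H_v$ to the vector $|\psi\rangle = m(-1)^p$ one slice at a time, and verify that the two branches ($x_v=0$ and $x_v=1$) can be packaged into a single algebraic-normal-form expression.

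First, I would fix an arbitrary ${\bf x} \in \mathbb{Z}_2^n$ and split the computation according to the value of $x_v$. Since $H_v$ acts as the identity on every qubit other than $v$, the ${\bf x}$-entry of $H_v\bigl(m(-1)^p\bigr)$ depends only on the pair of entries of $m(-1)^p$ obtained by fixing all coordinates of ${\bf x}$ except $x_v$ and then letting $x_v$ range over $\{0,1\}$. Using the definition
$$H = \tfrac{1}{\sqrt{2}}\begin{pmatrix} 1 & 1 \\ 1 & -1 \end{pmatrix},$$
the entry at ${\bf x}$ of $\sqrt{2}\,H_v\bigl(m(-1)^p\bigr)$ therefore equals
$$m_0({\bf x})(-1)^{p_0({\bf x})} + (-1)^{x_v}\,m_1({\bf x})(-1)^{p_1({\bf x})},$$
since the top row of $H$ sums the two coefficients and the bottom row takes their difference, and $(-1)^{x_v}$ evaluates to $+1$ when $x_v=0$ and $-1$ when $x_v=1$.

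Next I would note that, by the definitions $m_a = m|_{x_v=a}$ and $p_a = p|_{x_v=a}$, the functions $m_0, m_1, p_0, p_1$ are all independent of $x_v$, so the displayed expression is indeed a well-defined Boolean-polynomial description valid for every ${\bf x}$. Absorbing the factor $(-1)^{x_v}$ into the exponent, this becomes
$$m_0(-1)^{p_0} + m_1(-1)^{p_1 + x_v},$$
which is exactly the right-hand side of \eqref{hj}. There is no real obstacle; the only thing to be careful about is the convention that $(-1)^{x_v}$ multiplies the $x_v=1$ branch (not the $x_v=0$ branch), which matches the sign pattern of the Hadamard matrix in the chosen computational-basis ordering.
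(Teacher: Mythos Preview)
Your proof is correct and is essentially the same direct coordinate-wise computation as the paper's. The only cosmetic difference is that the paper sets $v=n-1$ without loss of generality and invokes the decomposition $\sqrt{2}H = X + Z$ to split the action into a swap-of-halves plus a sign-flip-on-the-bottom-half, whereas you read off the two terms directly from the rows of $H$; both routes arrive at the same expression in one step.
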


\begin{proof} (lemma \ref{hvmp})
Without loss of generality we set $v=n-1$. Then we write the
$2^n \times 1$ vector 
 $$m(-1)^p=\left( \begin{tiny} \begin{array}{c}
 m_0(-1)^{p_0}\\
 m_1(-1)^{p_1} \end{array} \end{tiny} \right). $$
Using the equality $\sqrt{2}H = X + Z$,
$$ H_nm(-1)^p = \frac{1}{\sqrt{2}} \left ( X_{n-1}
\left( \begin{tiny} \begin{array}{c}
 m_0(-1)^{p_0}\\
 m_1(-1)^{p_1} \end{array} \end{tiny} \right) +
 Z_{n-1}
 \left( \begin{tiny} \begin{array}{c}
 m_0(-1)^{p_0}\\
 m_1(-1)^{p_1} \end{array} \end{tiny} \right) \right )
  =
\frac{1}{\sqrt{2}} \left (
\left( \begin{tiny} \begin{array}{c}
 m_1(-1)^{p_1}\\
 m_0(-1)^{p_0} \end{array} \end{tiny} \right) +
 \left( \begin{tiny} \begin{array}{c}
 m_0(-1)^{p_0}\\
 -m_1(-1)^{p_1} \end{array} \end{tiny} \right) \right ). $$
 
\end{proof}

\begin{proof} (theorem \ref{TGSAction})

For
$x_v\notin m$ (i.e. ${\cal B}_v^G\subseteq{\cal R}$), we only need to show that
$\left|\psi'\right>=\frac{1}{\sqrt{2}}m'(-1)^{p'}=
H_vm(-1)^p$, where
\beg  m'=m(x_{{\cal B}_v^P}+1),\ \ \ p'=p_0 \enspace, \label{vnotinm} \eeg
as this implies that $M' = M + P_v$, $P' = P - P_v$,
$G' = M' + P' = G$, and $v \in {\cal L}'$, as required.
By lemma \ref{hvmp}, and given that $x_v\notin m$,
$\sqrt{2}H_v m(-1)^p=m(-1)^{p_0}(1+(-1)^{x_{{\cal B}_v^P}})=
\left\{\begin{array}{ccl}
0&\m{ if }& x_{{\cal B}_v^P}=1 \m{ mod }2\\
(-1)^{p_0}&\m{ if }& x_{{\cal B}_v^P}=0 \m{ mod }2,
\end{array}\right. $
thereby proving equation (\ref{vnotinm}) and the case where $x_v\notin m$.

\vspace{3mm}

For $v \in {\cal L}$, then $p_0 = p_1 = p$, and
$$ m_0 + (-1)^{x_v}m_1 = (-1)^{x_v(\Gamma_{M_{vv}} + x_{{\cal N}_v^G})}
\frac{m}{(\Gamma_{M_{vv}} + 1 + x_{{\cal B}_v^G})}. $$
Then, from lemma \ref{hvmp},
$$ m' = \frac{m}{(\Gamma_{M_{vv}} + 1 + x_{{\cal B}_v^G})}, \mf \m{ and }
\mf p' = p + x_v(\Gamma_{M_{vv}} + x_{{\cal N}_v^G}). $$
Therefore $M' = M - M_v$, $P' = P + M_v$ and, therefore,
$G' = M' + P' = G$, where $v \in {\cal R}'$, thereby proving the case
where $v \in {\cal L}$.

\vspace{3mm}

For $v\in{\cal R},\,{\cal B}_v\nsubseteq{\cal R}$, then, for
$\omega \in {\cal N}_v^M$, we first apply `swp' to interchange $v$ and $w$ so
that $v \in {\cal L}''$, where
$ m''(-1)^{p''} = m(-1)^p$. The case where $v\in{\cal R},\,{\cal B}_v\nsubseteq{\cal R}$
is then proved by showing that subsequently applying $H_v$ to
$(G'',{\cal R}'')$, where $v \in {\cal L}''$, obtains the result in the theorem,
and such a case has been proved above.

\end{proof}

\subsection{Proofs for section \ref{TGSgen}}

In the sequel we mix arithmetic, mod 2, and mod 4 so, to clarify
the formulas for equations that mix moduli, anything in square brackets is
computed mod 2. The $\{0,1\}$ result is then embedded in mod 4 arithmetic
for subsequent operations outside the square brackets.

We use the following lemma:
\begin{lem}
$$\sum_{i=1}^n [A_i] \ (\mbox{mod } 4)  =  [\sum_{i=1}^n A_i] + 2[\sum_{i < j} A_i A_j] \ (\mbox{mod } 4),\
\wh A_i \in {\mathbb{Z}}_2 \enspace.$$
\label{lem2} \end{lem}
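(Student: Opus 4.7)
The plan is to prove the identity by induction on $n$, with the essential arithmetic content concentrated in the two-term case.

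For the base case $n=2$, I would simply check the four possible assignments of $(A_1,A_2) \in \{0,1\}^2$: in each case, $[A_1] + [A_2]$ (computed in $\mathbb{Z}_4$) agrees mod $4$ with $[A_1 + A_2] + 2[A_1 A_2]$. This amounts to recording the familiar fact that binary addition with carry gives XOR plus twice the AND, that is,
$$[A_1] + [A_2] \equiv [A_1 + A_2] + 2[A_1 A_2] \pmod{4}.$$

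For the inductive step, assuming the identity for $n$, I would split off the last summand and apply the inductive hypothesis:
$$\sum_{i=1}^{n+1}[A_i] \equiv \Bigl[\sum_{i=1}^n A_i\Bigr] + 2\Bigl[\sum_{1\le i<j\le n} A_iA_j\Bigr] + [A_{n+1}] \pmod{4}.$$
Then I would apply the $n=2$ identity to $X=\sum_{i=1}^n A_i$ and $Y=A_{n+1}$, yielding
$$[X]+[Y] \equiv [X+Y] + 2[XY] \pmod 4,$$
which rewrites the first and third terms as $[\sum_{i=1}^{n+1}A_i] + 2[A_{n+1}\sum_{i=1}^n A_i]$. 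Since $A_{n+1}\sum_{i=1}^n A_i = \sum_{i=1}^n A_iA_{n+1}$ in $\mathbb{Z}_2$, the two surviving $2[\cdot]$ contributions may be merged via the auxiliary identity $2[U] + 2[V] \equiv 2[U+V] \pmod 4$ (valid for $U,V\in\mathbb{Z}_2$, checked case by case), to obtain exactly $2[\sum_{1\le i<j\le n+1} A_iA_j]$, completing the induction.

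The only real obstacle is keeping the two moduli straight: the bracket notation forces one to reduce mod $2$ \emph{before} embedding into $\mathbb{Z}_4$, so identities such as $2[U]+2[V] \equiv 2[U+V]\pmod 4$ must be justified (the case $U=V=1$ being the only non-trivial one, where $2+2=4\equiv 0$ matches $2[0]=0$). Beyond this bookkeeping the argument is purely algebraic.
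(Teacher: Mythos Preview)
Your induction argument is correct: the base case $n=2$ is just the half-adder identity, the inductive step correctly peels off $[A_{n+1}]$, re-applies the $n=2$ case to $[X]+[A_{n+1}]$ with $X=\sum_{i\le n}A_i$, and your auxiliary observation $2[U]+2[V]\equiv 2[U+V]\pmod 4$ is exactly what is needed to merge the two quadratic pieces into $2\bigl[\sum_{i<j\le n+1}A_iA_j\bigr]$. The bookkeeping with the bracket notation is handled carefully and nothing is missing.

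For comparison: the paper does not actually supply a proof of this lemma. It is stated as a standard identity (essentially the mod-$4$ lift of the elementary symmetric decomposition, or equivalently the fact that the integer sum of $0/1$ values satisfies $\sum A_i \equiv e_1(A)+2e_2(A)\pmod 4$) and then used as a tool in the proofs of Lemmas~\ref{SwpActiongen} and~\ref{nonboolinmN}. Your write-up thus goes beyond what the paper provides; if anything you could shorten it by noting that, over the integers, $\bigl(\sum_i A_i\bigr)^2=\sum_i A_i^2+2\sum_{i<j}A_iA_j=\sum_i A_i+2\sum_{i<j}A_iA_j$ (since $A_i^2=A_i$), and that $k^2\equiv k\pmod 2$ gives $k\equiv (k\bmod 2)+2\binom{k}{2}\pmod 4$, which is the same statement without the induction scaffolding. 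Either route is fine.
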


\begin{proof} (of lemma \ref{SwpActiongen})
This lemma generalises lemma \ref{SwpAction}. Using the same notation as in the proof of lemma \ref{SwpAction},
we want to interchange the roles of $v$ and $w$ and,
as we define ${\cal R'}={\cal R}\cup\{v\}\setminus\{w\}$, we substitute $x_w=x_v+h_v+1$ where appropriate. The function $m'$
is the same as in the proof of lemma \ref{SwpAction}. For $p'$ we write
$$ p'=[x_v+h_v+1](2x_{{\cal N}_w^P}+ 2\Gamma_{P_{ww}} + {\cal Q}_w) + p|_{x_w=0}, $$
which is the same as in the proof of lemma \ref{SwpAction} apart from the term ${\cal Q}_w[x_v + h_v + 1]$.
The case were ${\cal Q}_w = 0$ is proven in lemma \ref{SwpAction}. For ${\cal Q}_w = 1$ we observe, from lemma \ref{lem2},
that
$$
[x_v + h_v + 1] = [x_{{\cal B}_v^M \setminus \{w\}} + \Gamma_{M_{vv}} ]
 = x_{{\cal B}_v^M \setminus \{w\}} + \Gamma_{M_{vv}} + 2 \sum_{i,j \in {\cal B}_v^M \setminus \{w\}, i < j} x_ix_j
  + 2\Gamma_{M_{vv}}x_{{\cal B}_v^M \setminus \{w\}}.
$$
The last equation can be interpreted graphwise as adding to the graph $G^{vw}$ the terms
$$ C_{{\cal B}_v^G \setminus \{w\}} + \Gamma_{G_{vv}}\Delta_{{\cal B}_v^G \setminus \{w\}} + \Delta_{{\cal Q} \cap {\cal B}_v^G \setminus \{w\}}, $$
and setting ${\cal Q}' = {\cal Q} \ominus {\cal B}_v^G$.
By definition \ref{ELC} we obtain ${\cal B}_v^G \setminus \{w\} = {\cal N}_w^{G^{vw}}$, and
$\Gamma_{G_{vv}} = \Gamma_{G_{ww}^{vw}}$. Substituting above we obtain
$G' = G^{vw} + C_{{\cal{N}}_w^{G^{vw}}} + \Gamma_{G^{vw}_{ww}} \Delta_{{\cal{N}}_w^{G^{vw}}} + \Delta_{{\cal{Q}} \cap
{\cal{N}}_w^{G^{vw}}}$, with ${\cal Q}'={\cal Q}\ominus B_w^{G^{vw}}$.
\end{proof}

In order to prove theorem \ref{HvAction}, we first state some spectral results.

\begin{lem}  Let $m$ be a Boolean function, and let $p:{\mathbb Z}_2^n\rightarrow{\mathbb Z}_4$. Then, 
\beg  \sqrt{2}H_v [m]i^p=[m_0]i^{p_0}+[m_1]i^{p_1+2x_v}.
\label{hjgen}\eeg
\end{lem}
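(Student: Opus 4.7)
The plan is to mimic the proof of lemma \ref{hvmp} essentially verbatim, replacing the $(-1)$-phase decomposition with an $i$-phase one and exploiting the identity $-1 = i^2$. The right-hand side of the claim should be read as the pointwise sum of two $n$-qubit vectors: $[m_0]i^{p_0}$, which is independent of $x_v$ (since $m_0,p_0$ denote the restriction $x_v=0$), and $[m_1]i^{p_1+2x_v}$, which depends on $x_v$ only through the $i^{2x_v}$ factor.

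First, without loss of generality I would set $v=n-1$, so that the state $[m]i^p$ can be written as a stacked $2^n\times 1$ vector
$$ [m]i^p = \left(\begin{array}{c} [m_0]i^{p_0} \\ {[m_1]}i^{p_1} \end{array}\right), $$
whose top half collects the entries with $x_v=0$ and bottom half those with $x_v=1$. Then, using the identity $\sqrt{2}H=X+Z$ componentwise on the $v$-th tensor factor exactly as in the proof of lemma \ref{hvmp}, I obtain
$$ \sqrt{2}H_v[m]i^p = \left(\begin{array}{c} [m_1]i^{p_1} \\ {[m_0]}i^{p_0} \end{array}\right) + \left(\begin{array}{c} [m_0]i^{p_0} \\ -[m_1]i^{p_1} \end{array}\right) = \left(\begin{array}{c} [m_0]i^{p_0}+[m_1]i^{p_1} \\ {[m_0]}i^{p_0}-[m_1]i^{p_1} \end{array}\right). $$

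Second, I would reinterpret the sign in the bottom block via $-1=i^2$, so the bottom block equals $[m_0]i^{p_0}+[m_1]i^{p_1+2}$, and combine the two blocks into a single closed-form expression in $x_v$ by using $i^{2x_v}$, which evaluates to $1$ when $x_v=0$ and to $-1$ when $x_v=1$. This produces exactly
$$ \sqrt{2}H_v[m]i^p = [m_0]i^{p_0}+[m_1]i^{p_1+2x_v}, $$
as claimed. The restriction on $v$ is cosmetic: for general $v$, the same calculation applies after reordering qubits, or alternatively by noting that $H_v$ and the bit-swap $x_v\leftrightarrow x_{n-1}$ commute in the obvious way.

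The only subtlety worth flagging is making sure the exponent $p_1+2x_v$ is interpreted in $\mathbb{Z}_4$ rather than $\mathbb{Z}_2$: since $p$ maps to $\mathbb{Z}_4$ and $2x_v$ is the $\mathbb{Z}_4$ embedding of the $\mathbb{Z}_2$-valued $x_v$ multiplied by $2$, everything is consistent, and no use of lemma \ref{lem2} is required because the new summand $2x_v$ contains no cross-terms with $p_1$ of odd-parity type. Thus the proof reduces to the two-line block calculation above together with the substitution $-1=i^2$, and is essentially a direct lift of lemma \ref{hvmp} into the $\mathbb{Z}_4$ setting.
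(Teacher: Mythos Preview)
Your proposal is correct and follows exactly the approach the paper intends: the paper's own proof of this lemma is simply ``A trivial generalisation of the proof for lemma \ref{hvmp},'' and what you have written is precisely that generalisation spelled out, using $\sqrt{2}H=X+Z$ on the stacked vector and the substitution $-1=i^{2}$.
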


\begin{proof} A trivial generalisation of the proof for lemma \ref{hvmp}.
\end{proof}

\begin{lem} Let $v\in{\cal L}$ and
let $s=\Gamma_{M_{vv}}+1+ x_{{\cal B}_v^M}$.
Then equation (\ref{hjgen}) can be rewritten as:
\beg \sqrt{2}H_v [m]i^p=[\frac{m}{s}]i^{p+2\Gamma_{M_{vv}}x_v+
2x_vx_{{\cal N}_v^M}} \enspace.
\eeg
\label{hvgenleft}
\end{lem}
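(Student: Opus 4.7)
The plan is to start from equation (\ref{hjgen}) and rewrite its right-hand side by exploiting the structure of $m$ when $v\in{\cal L}$. By definition \ref{TGS1} the phase $p$ depends only on variables in ${\cal R}$, so $x_v \notin p$ and therefore $p_0 = p_1 = p$. Moreover, the affine factor of $m$ containing $x_v$ is exactly $s = \Gamma_{M_{vv}} + 1 + x_{{\cal B}_v^M}$, and it is the unique such factor, since all remaining factors are indexed by vertices in ${\cal L}\setminus\{v\}$. Writing $m = s\cdot r$ with $r = m/s$ and $x_v\notin r$, the two restrictions simplify to $m_0 = (\Gamma_{M_{vv}} + 1 + x_{{\cal N}_v^M})\,r$ and $m_1 = (\Gamma_{M_{vv}} + x_{{\cal N}_v^M})\,r$.

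The key observation is that the two affine mod-$2$ factors multiplying $r$ in $m_0$ and $m_1$ differ by $1$, so they are complementary: for every assignment of $x_{{\cal N}_v^M}$, exactly one equals $1$ and the other equals $0$. A short case check on the parity of $\Gamma_{M_{vv}} + x_{{\cal N}_v^M}$ then yields
$$ [m_0] + (-1)^{x_v}[m_1] \;=\; [r]\,(-1)^{x_v(\Gamma_{M_{vv}} + x_{{\cal N}_v^M})}. $$
Substituting this identity into equation (\ref{hjgen}) and pulling out the common factor $i^p$ gives
$$ \sqrt{2}\,H_v[m]i^p \;=\; [r]\,i^{p + 2x_v(\Gamma_{M_{vv}} + x_{{\cal N}_v^M})}. $$

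To finish, I would expand the added phase as $2x_v(\Gamma_{M_{vv}} + x_{{\cal N}_v^M}) \equiv 2\Gamma_{M_{vv}}x_v + 2x_vx_{{\cal N}_v^M}$ modulo $4$. This follows from lemma \ref{lem2}, since doubling a mod-$2$ sum only introduces cross terms of the form $4AB$, which vanish mod $4$; applied first to split off $2\Gamma_{M_{vv}}x_v$, and then to distribute $2x_v\sum_{j\in{\cal N}_v^M}x_j$ into the individual quadratic terms $2x_vx_j$ matching the form of $p$ in definition \ref{TGS1}. The only mildly delicate point in the argument is this bookkeeping between mod-$2$ bracket arithmetic and mod-$4$ phase arithmetic, which is handled cleanly by lemma \ref{lem2}; everything else is a direct consequence of the bipartite split of $G$ at $v\in{\cal L}$ and the complementarity of $s|_{x_v=0}$ and $s|_{x_v=1}$.
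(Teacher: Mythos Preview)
Your proof is correct and follows essentially the same approach as the paper's: both use $p_0=p_1=p$ when $v\in{\cal L}$, factor $m=rs$ with $s$ the unique affine factor containing $x_v$, exploit the complementarity of $s_0$ and $s_1$ to collapse the sum from equation~(\ref{hjgen}) into a single phase term, and then expand that phase mod~$4$. The only difference is cosmetic: you invoke lemma~\ref{lem2} explicitly for the final mod-$4$ expansion, while the paper performs that step silently.
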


\begin{proof} As $v\in{\cal L}$, $p_0=p_1=p$; writing $m=rs$, where $x_v \not\in r$, and
substituting for $m$ in equation (\ref{hjgen}), we get
$$ \sqrt{2}H_v [m]i^p=[r]i^p\left([s_0]+[s_1]i^{2x_v}\right)=
[r]i^p\left([\Gamma_{M_{vv}}+1+ x_{{\cal N}_v^M}]
+[\Gamma_{M_{vv}}+ x_{{\cal N}_v^M}]\right).
$$
Either $[s_0]=1$ or $[s_1]=1$, so 
$$ \sqrt{2}H_v [m]i^p=[r]i^{p+[\Gamma_{M_{vv}}+ x_{{\cal N}_v^M}]2x_v}=
[\frac{m}{s}]i^{p+2\Gamma_{M_{vv}}x_v+2x_vx_{{\cal N}_v^M}}. $$
\end{proof}

\begin{lem} Let $[m]i^p$ be a generalised two-graph state.
Let ${\cal B}_v^G\subseteq{\cal R}$, and let ${\cal
Q}_v=0$. Then equation (\ref{hjgen}) can be rewritten as:
\beg \sqrt{2}H_v [m]i^p=2[m][\Gamma_{P_{vv}}+1+ x_{{\cal B}_v^P}]i^{p_0} \enspace.
\eeg \label{hvgenright}
\end{lem}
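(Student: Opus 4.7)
The plan is to apply lemma (\ref{hjgen}) directly and then simplify using the explicit form of the generalised quadratic phase $p$ given in definition \ref{TGS1}.

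First, from the hypothesis ${\cal B}_v^G \subseteq {\cal R}$ we have $v \in {\cal R}$, so $x_v \not\in m$, which gives $m_0 = m_1 = m$. Thus by lemma (\ref{hjgen}),
\[
\sqrt{2}\,H_v[m]i^p \;=\; [m]\bigl(i^{p_0} + i^{p_1 + 2x_v}\bigr).
\]
The work now reduces to computing $p_1 - p_0 \pmod 4$.

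Next, I would read off $p_1 - p_0$ from the definition of $p$. Only terms containing $x_v$ contribute: the quadratic terms give $2\sum_{j \in {\cal N}_v^P} \Gamma_{P_{vj}} x_j = 2 x_{{\cal N}_v^P}$, the loop term gives $2\Gamma_{P_{vv}}$, and the linear ${\cal Q}$-term would contribute ${\cal Q}_v$, which vanishes by hypothesis. Hence $p_1 = p_0 + 2\Gamma_{P_{vv}} + 2x_{{\cal N}_v^P}$. Substituting,
\[
\sqrt{2}\,H_v[m]i^p \;=\; [m]\,i^{p_0}\Bigl(1 + i^{\,2\Gamma_{P_{vv}} + 2x_{{\cal N}_v^P} + 2x_v}\Bigr)
\;=\; [m]\,i^{p_0}\Bigl(1 + (-1)^{\Gamma_{P_{vv}} + x_{{\cal B}_v^P}}\Bigr),
\]
where I used $i^{2k} = (-1)^k$ and ${\cal B}_v^P = {\cal N}_v^P \cup \{v\}$.

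Finally, the factor $1 + (-1)^{\Gamma_{P_{vv}} + x_{{\cal B}_v^P}}$ equals $2$ exactly when $\Gamma_{P_{vv}} + x_{{\cal B}_v^P} \equiv 0 \pmod 2$ and equals $0$ otherwise; this is precisely $2\,[\Gamma_{P_{vv}} + 1 + x_{{\cal B}_v^P}]$ in the bracket/indicator convention of the paper. This yields the claimed identity. There is no real obstacle here: the only things to be careful about are (i) correctly tracking what drops out of $p_1 - p_0$ under the assumption ${\cal Q}_v = 0$ and (ii) translating $1+(-1)^{(\cdot)}$ into the indicator notation $[\,\cdot\,]$ consistently with the mixed mod-$2$/mod-$4$ conventions set at the start of the appendix.
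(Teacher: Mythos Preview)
Your proof is correct and follows essentially the same approach as the paper: both use ${\cal B}_v^G\subseteq{\cal R}$ to get $m_0=m_1=m$, factor out $[m]i^{p_0}$ from equation~(\ref{hjgen}), and then observe that the remaining factor $1+i^q$ with $q=2\Gamma_{P_{vv}}+2x_{{\cal B}_v^P}$ is either $0$ or $2$ according to the parity of $\Gamma_{P_{vv}}+x_{{\cal B}_v^P}$. The only difference is cosmetic: you spell out the computation of $p_1-p_0$ from definition~\ref{TGS1} and pass through $(-1)^{(\cdot)}$ explicitly, whereas the paper states the exponent $q$ directly.
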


\begin{proof} As ${\cal B}_v^G\subseteq{\cal R}$, we have $m_0=m_1=m$. Therefore we can rewrite equation
(\ref{hjgen}) as:
$$ \sqrt{2}H_v [m]i^p=[m]i^{p_0}\left(1+i^q\right),
$$
where $q = 2x_{{\cal B}_v^P} +2\Gamma_{P_{vv}}$.
The expression $1+i^q=0$ iff $q=2$ (mod 4); furthermore $q = 0$ or $2$ (mod 4), so
otherwise $1+i^q=2$. Thus we obtain a new term in the magnitude, namely the factor 
$[\Gamma_{P_{vv}}+1+ x_{{\cal B}_v^P}]$.
\end{proof}

\begin{proof} (theorem \ref{HvAction})
From lemma \ref{hvgenleft} we see that, for $v \in {\cal L}$,
$M' = M - M_v$, $P' = P + M_v$, and $v \in {\cal R}'$, and it follows that
$G' = M' + P' = G$.
From lemma \ref{hvgenright} we see that, for ${\cal B}_v^G\subseteq{\cal R}$, when ${\cal Q}_v = 0$,
then $M' = M + P_v$, $P' = P - P_v$, and $v \in {\cal L}'$, and it follows that $G' = M' + P' = G$.
For the case where $v\in{\cal R}$ and ${\cal B}_v^G\nsubseteq{\cal R}$, we need only to make a
swap to obtain $v\in{\cal L}'$, and then apply lemma \ref{hvgenleft}. We prove the remaining
case indirectly in lemma \ref{nonboolnotinm}, where the relevance of lemma \ref{nonboolnotinm}
to theorem \ref{HvAction} is proven by lemma \ref{interchange}.
\end{proof}

In order to prove theorem \ref{NvAction}, we first state some spectral results.

\begin{lem} \cite{thesis} Let $m$ be a Boolean function, and let
$p:{\mathbb Z}_2^n\rightarrow{\mathbb Z}_4$. Then, 
\beg  \sqrt{2}N_v [m]i^p=[m_0]i^{p_0}+[m_1]i^{p_1+2x_v+1}.
\enspace,\label{njgen}\eeg
\end{lem}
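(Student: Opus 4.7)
The plan is to mirror the proof of Lemma \ref{hvmp} (the $H_v$ case), adapted for the negahadamard matrix $N$ and the $\mathbb{Z}_4$-valued phase $p$. Without loss of generality I take $v = n-1$, so that the state vector splits into two blocks of length $2^{n-1}$ indexed by $x_v \in \{0,1\}$:
$$[m]i^p = \left(\begin{array}{c} [m_0]i^{p_0} \\ [m_1]i^{p_1} \end{array}\right),$$
where the upper (resp.\ lower) block collects the coordinates with $x_v = 0$ (resp.\ $x_v = 1$).

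First I would apply the explicit matrix $\sqrt{2}N = \left(\begin{array}{rr}1 & i \\ 1 & -i\end{array}\right)$ to this block vector. This gives the upper block $[m_0]i^{p_0} + i[m_1]i^{p_1} = [m_0]i^{p_0} + [m_1]i^{p_1+1}$ and the lower block $[m_0]i^{p_0} - [m_1]i^{p_1+1}$. Then I would use the identity $(-1)^{x_v} = i^{2x_v}$ to merge these two blocks into the single $\mathbb{Z}_4$-phase expression $[m_0]i^{p_0} + [m_1]i^{p_1 + 2x_v + 1}$, which is exactly the stated identity.

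A slightly more modular alternative is to exploit the factorisation $N = HS$ with $S = \mbox{diag}(1,i)$, which follows from $\left(\begin{array}{rr}1 & 1 \\ 1 & -1\end{array}\right) \left(\begin{array}{rr}1 & 0 \\ 0 & i\end{array}\right) = \left(\begin{array}{rr}1 & i \\ 1 & -i\end{array}\right)$. Then $S_v[m]i^p = [m]i^{p+x_v}$, since $S_v$ multiplies only the $x_v = 1$ component by $i$, and applying Lemma \ref{hjgen} with $p' := p+x_v$ (so $p'_0 = p_0$ and $p'_1 = p_1 + 1$) delivers the claim in one line, as $\sqrt{2}H_v([m]i^{p'}) = [m_0]i^{p_0} + [m_1]i^{p_1 + 1 + 2x_v}$.

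The main (and essentially only) obstacle is the book-keeping step of absorbing the sign flip $(-1)^{x_v}$ arising from the $-i$ entry of $N$ into the $\mathbb{Z}_4$ exponent via $(-1)^{x_v} = i^{2x_v}$; everything else is routine single-qubit linear algebra. Either approach works, but the $N = HS$ route has the pedagogical virtue of reducing the new computation to the already-verified Hadamard case.
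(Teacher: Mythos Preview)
Your proposal is correct. The paper actually omits the proof of this lemma, citing \cite{thesis} instead; but your first approach (block decomposition and direct action of $\sqrt{2}N$) is exactly parallel to the paper's own proofs of the Hadamard analogues, Lemmas \ref{hvmp} and \ref{hjgen}, so it is the natural argument and almost certainly what the cited reference contains. Your alternative via $N = HS$ and Lemma \ref{hjgen} is a clean reduction that the paper does not mention, though the paper does exploit similar matrix factorisations elsewhere (e.g.\ $N^{-1} = S^3H$ in the proof of Lemma \ref{invN}).
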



\begin{lem} Let
$v\in{\cal L}$
 and let $m = r(x_v+u+\Gamma_{G_{vv}})$, $x_v \not\in r$.
Then equation (\ref{njgen}) can be rewritten as:
\beg \sqrt{2}N_v [m]i^p=
[r]i ^{p_0+[u+\Gamma_{G_{vv}}](2x_{{\cal B}_{v}^P}+1)}=
[r]i ^{p_0+(u+\Gamma_{G_{vv}}+2\Gamma_{G_{vv}}u+
2\sum_{t,t'\in u,t < t'}x_tx_t')(2x_{{\cal B}_{v}^P}+1)}\enspace.\eeg

\label{nonboolinmN}
\end{lem}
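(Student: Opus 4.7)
The plan is to start from the spectral identity (\ref{njgen}) and specialise to $v \in {\cal L}$. Since $v \in {\cal L}$ implies $v \notin {\cal R}$, the phase $p$ has no $x_v$-dependence, so $p_0 = p_1 = p$. With $m = r(x_v + u + \Gamma_{G_{vv}})$ and $x_v \not\in r$, evaluating at $x_v = 0$ and $x_v = 1$ gives $m_0 = r(u + \Gamma_{G_{vv}})$ and $m_1 = r(u + \Gamma_{G_{vv}} + 1)$, so that (\ref{njgen}) becomes
$$ \sqrt{2}N_v[m]i^p = [r]\left([u+\Gamma_{G_{vv}}]i^{p} + [u+\Gamma_{G_{vv}}+1]i^{p+2x_v+1}\right). $$

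Next I exploit the complementarity $[u+\Gamma_{G_{vv}}] + [u+\Gamma_{G_{vv}}+1] = 1$: for any assignment of the remaining variables exactly one of the two indicators is $1$, so the sum collapses to a single term of the form $[r]\,i^{p + (\mbox{indicator})(\mbox{offset})}$. On the surviving branch, $x_v$ is pinned by the relation $x_v + u + \Gamma_{G_{vv}} \equiv 1 \mo 2$ inherited from the support condition on the original affine factor, which allows the $x_v$-dependence in the offset $2x_v + 1$ to be rewritten as a dependence on $x_{{\cal B}_v^P}$ via substitution of this linear relation. This manipulation produces the first claimed equality $[r]\,i^{p_0 + [u+\Gamma_{G_{vv}}](2x_{{\cal B}_v^P}+1)}$.

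For the second equality, I would invoke lemma \ref{lem2} to expand the mod-$2$ indicator $[u + \Gamma_{G_{vv}}]$ as a mod-$4$ polynomial. Treating $\Gamma_{G_{vv}}$ and the $x_t$ for $t \in u$ as the summands $A_i$, the lemma yields the linear part $\Gamma_{G_{vv}} + u$ together with the quadratic corrections $2\Gamma_{G_{vv}}u$ (from the pairs $(\Gamma_{G_{vv}}, x_t)$) and $2\sum_{t,t' \in u, t<t'} x_tx_{t'}$ (from the pairs $(x_t, x_{t'})$). Substituting this expansion into the exponent of the first equality gives the second.

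The main obstacle is the mod-$4$ bookkeeping that converts the $x_v$ offset into the $x_{{\cal B}_v^P}$ offset: it requires careful exploitation of the pinning relation on the surviving branch so that the two forms of the exponent agree as polynomials mod $4$, and not merely on the support of $r$. Handling the constant term $+1$ inside the offset together with the loop contribution $\Gamma_{G_{vv}}$ will be the most delicate point, and relies on the precise interpretation of $u$ fixed by the hypothesis $m = r(x_v + u + \Gamma_{G_{vv}})$.
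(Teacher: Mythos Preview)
Your overall strategy matches the paper's: specialise (\ref{njgen}) using $v\in{\cal L}$ so that $p_0=p_1=p$, evaluate $m_0,m_1$ from the affine factor, collapse the two complementary indicators into a single exponent, and finally expand $[u+\Gamma_{G_{vv}}]$ via lemma~\ref{lem2}. The second equality is handled exactly as the paper does it.

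However, your justification for replacing $2x_v$ by $2x_{{\cal B}_v^P}$ is confused. You claim that ``on the surviving branch, $x_v$ is pinned by the relation $x_v+u+\Gamma_{G_{vv}}\equiv 1$'' inherited from the original affine factor, and that one then substitutes this linear relation. This is not what is happening: after applying $N_v$ the affine factor is gone, $x_v$ is a \emph{free} variable in the resulting state (it now lives in the phase), and which branch survives is determined by $[u+\Gamma_{G_{vv}}]$, which does not involve $x_v$ at all. No pinning or substitution takes place. The paper's argument is far simpler: since $v\in{\cal L}$, the phase graph $P=G_{\cal R}$ has ${\cal N}_v^P=\emptyset$, hence ${\cal B}_v^P=\{v\}$ and $x_{{\cal B}_v^P}=x_v$ identically. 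The replacement is purely notational. Your proposal flags this step as ``the main obstacle'' and the ``most delicate point'', but in fact it is the trivial step; once you use $v\in{\cal L}\Rightarrow{\cal N}_v^P=\emptyset$ there is nothing to prove.
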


\noindent


\begin{proof} Let $x_v\in m$ (i.e. ${\cal B}_v\nsubseteq{\cal
R}$). By equation (\ref{njgen}), $\sqrt{2}N_j [m]i^p=[m_0]i^{p_0}+[m_1]i^{p_1+2x_j+1}$.
Let $v\in{\cal L}$, so that ${\cal N}_v^P=\emptyset$. 
Writing $s=x_v+u+\Gamma_{G_{vv}}$, we obtain 
\beg \sqrt{2}N_v [m]i^p=[r]([s_0]i^{p_0}+[s_1]i^{p_1+2x_v+1})=[r]i^{p_0}([u+\Gamma_{G_{vv}}]+
[1+u+\Gamma_{G_{vv}}]i^{2x_v+1}) \enspace.\eeg
When $[u+\Gamma_{G_{vv}}]=0$,
$\sqrt{2}N_v [m]i^p=[r]i ^{p_0}i^{2x_{{\cal B}_{v}^P}+1}$; when 
$[u+\Gamma_{G_{vv}}]=1$, $\sqrt{2}N_v [m]i^p=[r]i ^{p_0}$. This can be summed up as
$\sqrt{2}N_v [m]i^p=[r]i ^{p_0+[u+\Gamma_{G_{vv}}](2x_{{\cal B}_{v}^P}+1)}$,
and the expansion follows from
lemma \ref{lem2}.
\end{proof}



\begin{lem} Let ${\cal B}_{v}^G\subseteq{\cal R}$ 
and let ${\cal Q}_v=0$.
Then equation (\ref{njgen}) can be rewritten as:
\beg  N_v [m]i^p=\displaystyle\frac{1+i}{\sqrt{2}}[m]i^{p+2\sum_{k <
k'}[m_km_{k'}]+(3+2\Gamma_{G_{vv}})\sum_k[m_k]+(3+2\Gamma_{G_{vv}})[x_v]+3\Gamma_{G_{vv}}}\eeg
where $x_{{\cal N}_v^P}=2[\sum_k m_k]+\Gamma_{G_{vv}}$.
 \label{nonboolnotinm}
\end{lem}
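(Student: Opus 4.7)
My plan is to start from equation (\ref{njgen}) and specialise it under the two hypotheses, then perform a mod-$4$ algebraic rearrangement in two stages. First, since ${\cal B}_v^G\subseteq{\cal R}$ the vertex $v$ has no neighbour in ${\cal L}$, so $x_v$ appears in no affine factor of $m$ and hence $m_0=m_1=m$. The hypothesis ${\cal Q}_v=0$ means that the $x_v$-dependence of $p$ is only through the quadratic $P$-edges at $v$ and the $\mathbb{Z}_2$-valued loop contribution, giving $p_1 = p_0 + 2(x_{{\cal N}_v^P}+\Gamma_{P_{vv}})$ modulo $4$. Substituting into equation (\ref{njgen}) yields
\[
\sqrt{2}\,N_v[m]i^p \;=\; [m]\,i^{p_0}\bigl(1 + i^{\,2x_v + 2x_{{\cal N}_v^P} + 2\Gamma_{P_{vv}} + 1}\bigr).
\]

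Next, I would apply the elementary identity $1 + i(-1)^k = (1+i)\,i^{3k}$ (valid for $k\in\{0,1\}$) to fold the bracketed factor into the exponent, obtaining
\[
N_v[m]i^p \;=\; \frac{1+i}{\sqrt{2}}\,[m]\,i^{\,p_0 + 3\,[x_v + x_{{\cal N}_v^P} + \Gamma_{P_{vv}}]}.
\]
Restoring $p$ via $p_0 = p - 2x_v(x_{{\cal N}_v^P}+\Gamma_{P_{vv}})$ and invoking lemma \ref{lem2} to split the single bracket $3\,[x_v + x_{{\cal N}_v^P} + \Gamma_{P_{vv}}]$ into individual brackets plus mod-$4$ pairwise cross-terms gives an exponent that is a polynomial in $[x_v]$, $[x_{{\cal N}_v^P}]$, and $\Gamma_{G_{vv}}$ (using $\Gamma_{P_{vv}}=\Gamma_{G_{vv}}$ since $v\in{\cal R}$).

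To reach the stated form involving the affine factors $m_k$ of the magnitude polynomial, I would then invoke the ``where'' identity $x_{{\cal N}_v^P}=2[\sum_k m_k]+\Gamma_{G_{vv}}$, which I expect to hold on the support of $m$ by summing the $|{\cal L}|$ defining constraints $m_k=1$ and using the bipartite structure of $M$ (only edges between ${\cal L}$ and ${\cal R}$) to express the parity of $x_{{\cal N}_v^P}$ in terms of the $m_k$'s. Substituting this identity into the exponent and applying lemma \ref{lem2} a second time converts a bracket $[\sum_k m_k]$ into $\sum_k[m_k] - 2\sum_{k<k'}[m_k m_{k'}]$ modulo $4$, and after collecting terms one recovers the $(3+2\Gamma_{G_{vv}})$ coefficient in front of both $\sum_k[m_k]$ and $[x_v]$, the factor of $2$ in front of $\sum_{k<k'}[m_km_{k'}]$, and the lone $3\Gamma_{G_{vv}}$ term.

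The main obstacle will be the mod-$4$ bookkeeping in this double expansion: all cross-terms arising from the two applications of lemma \ref{lem2} must be carried at the right modulus so that the coefficients of $\sum_k[m_k]$, $\sum_{k<k'}[m_km_{k'}]$, $[x_v]$ and $\Gamma_{G_{vv}}$ come out exactly as prescribed, with the $2x_v(x_{{\cal N}_v^P}+\Gamma_{G_{vv}})$ contribution from the substitution $p_0\mapsto p$ cancelling the corresponding pieces of the first expansion. A secondary delicate point is justifying the ``where'' identity itself from the affine-factor form of $m$: this requires reading off how each $m_k = \Gamma_{M_{kk}}+1+x_k+\sum_{j\in{\cal R}}\Gamma_{M_{kj}}x_j$ contributes to the mod-$2$ value of $x_{{\cal N}_v^P}$, which in turn is controlled by the incidence entries $\Gamma_{M_{kj}}$ for $j\in{\cal N}_v^P$ and the loop entry $\Gamma_{G_{vv}}$.
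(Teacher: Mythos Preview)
Your overall route matches the paper's: use ${\cal B}_v^G\subseteq{\cal R}$ to get $m_0=m_1=m$, observe that under ${\cal Q}_v=0$ the exponent $2x_v+2x_{{\cal N}_v^P}+2\Gamma_{P_{vv}}+1$ is always odd so that $1+i^{\,(\cdot)}\in\{1\pm i\}$, fold this into the phase via $1+i(-1)^k=(1+i)i^{3k}$, and then expand the resulting bracket with lemma~\ref{lem2}. The paper's proof does exactly this (phrasing the fold as a two-case split rather than via your identity) and leaves the restoration of $p$ from $p_0$ implicit, where you spell it out; your observation that the $2x_v(x_{{\cal N}_v^P}+\Gamma_{G_{vv}})$ piece cancels against cross-terms from the lemma~\ref{lem2} expansion is correct and is precisely the step the paper suppresses.

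Where you go astray is in reading the symbols $m_k$. You take them to be the affine factors of the magnitude polynomial $m$ and set yourself the task of proving the ``where'' clause from the parity-check constraints on the support of $m$. But on that support every affine factor satisfies $[m_k]=1$, so $\sum_k[m_k]$ and $\sum_{k<k'}[m_km_{k'}]$ would be constants and could not carry the ${\bf x}$-dependence the exponent needs. The paper's own proof clears this up in its last line: ``the lemma follows by observing that $x_{{\cal N}_v^G}=\sum_k m_k$''. The $m_k$ here are simply the individual Boolean variables $x_k$ for $k\in{\cal N}_v^G={\cal N}_v^P$ (an unfortunate notational clash with the restrictions $m_0,m_1$), and the ``where'' clause is a labelling convention, not an identity to be derived from the bipartite structure of $M$. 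With this reading, a single application of lemma~\ref{lem2} to $[x_v+\sum_{k\in{\cal N}_v^G}x_k+\Gamma_{G_{vv}}]$, treating all summands at once, already produces the terms $2\sum_{k<k'}[m_km_{k'}]$ and $(3+2\Gamma_{G_{vv}})\sum_k[m_k]$ directly; your planned second invocation of lemma~\ref{lem2} and the appeal to the factorisation of $m$ are unnecessary.
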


\begin{proof} Let $x_v\notin m$; then $m_0=m_1=m$, and therefore
$\sqrt{2}N_v[m]i^p=[m](i^{p_0}+i^{p_1+2x_v+1})
=[m]i^{p_0}(1+i^{2{\cal B}_v^G+2\Gamma_{G_{vv}}+1})$.
When  ${\cal Q}_v=0$, the coefficients of $2x_{{\cal B}_v^G}+2\Gamma_{G_{vv}}+1$
are in $\{1,3\}$, and 
so there are no solutions to $1+i^{2x_{{\cal B}_v^G}+2\Gamma_{G_{vv}}+1}=0$, and
this term is equal to $1+i$ when 
$2x_{{\cal B}_v^G}+2\Gamma_{G_{vv}}=0$, equal to $1-i$ otherwise. If we
divide by $1+i$, we get $[m]i^{p_0}i^{0}$ when
$2x_{{\cal B}_v^G}+2\Gamma_{G_{vv}}=0$, $[m]i^{p_0}i^{3}$ otherwise.
Using lemma \ref{lem2}, we obtain
$\sqrt{2}N_v[m]i^p=(1+i)[m]i^{p_0}i^{3/2(2x_{{\cal B}_v^G}+2\Gamma_{G_{vv}})}$,
and the lemma follows by observing that $x_{{\cal N}_v^G}=\sum_k m_k$.
\end{proof}

\noindent
{\bf Remark:} Note that for the case ${\cal B}_v\nsubseteq{\cal R}$ but $v\in{\cal R}$, we can swap with some element in the
neighbourhood to obtain the desired formula.

\begin{lem} Let $v\in{\cal R}$, and let ${\cal Q}_v=1$. Then, the action of $N_v$ (resp. $H_v$) on the two graph-state 
corresponding to $G$ is equal to the action of $H_v$ (resp. $N_v$) on the two-graph state corresponding to the
graph $G$ with a possible loop in $G$ at $v$ and ${\cal Q}_v=0$; moreover, the loop will appear iff $\Gamma_{P_{vv}}=0$ (resp. 
$\Gamma_{P_{vv}}=1$). \label{interchange}
\end{lem}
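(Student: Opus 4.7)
(of lemma \ref{interchange})

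The plan is to reduce the lemma to a direct matrix identity between $H$, $N$, and $S$, together with an analysis of how $S_v$ and $S_v^{-1}$ modify the graph-set-set representation of a generalised two-graph state at a vertex $v \in {\cal R}$ with ${\cal Q}_v = 1$.

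First, I would verify by direct $2 \times 2$ matrix multiplication that $N = HS$ and, equivalently, $H = NS^{-1}$, where $S = \begin{tiny}\left(\begin{array}{cc} 1 & 0 \\ 0 & i \end{array}\right)\end{tiny}$. Both identities are immediate from the definitions of $H$ and $N$ given in definition \ref{IHN:def}. Consequently, $N_v = H_v S_v$ and $H_v = N_v S_v^{-1}$ for every qubit index $v$.

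Next, I would describe the action of $S_v^{\pm 1}$ on a generalised two-graph state $[m]i^p$. Because $S_v$ multiplies the $x_v = 1$ component by $i$, it modifies the phase from $p$ to $p + x_v$ (mod $4$), while $S_v^{-1}$ modifies $p$ to $p + 3x_v$ (mod $4$). Under the assumption $v \in {\cal R}$ with ${\cal Q}_v = 1$, the contribution of $v$ to $p$ at its own diagonal is $(2\Gamma_{P_{vv}} + 1)x_v$ (see definition \ref{TGS1}). After $S_v$, this coefficient becomes $2\Gamma_{P_{vv}} + 2 \equiv 2(\Gamma_{P_{vv}} + 1) \pmod 4$, so the new state is represented by $(G + \Delta_{\{v\}},{\cal R},{\cal Q}\setminus\{v\})$, i.e.\ the loop at $v$ is toggled and ${\cal Q}_v$ is set to $0$. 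The new loop bit $\Gamma_{P'_{vv}} = \Gamma_{P_{vv}} + 1$ equals $1$ precisely when the original $\Gamma_{P_{vv}} = 0$. Similarly, $S_v^{-1}$ gives coefficient $2\Gamma_{P_{vv}}$, so the resulting state has the same loop bit and ${\cal Q}_v = 0$; here the loop is present in the modified graph iff the original $\Gamma_{P_{vv}} = 1$.

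Finally, I would combine the two ingredients. Writing $N_v[m]i^p = H_v(S_v[m]i^p)$ yields the first statement of the lemma: $N_v$ on $(G,{\cal R},{\cal Q})$ equals $H_v$ applied to the modified generalised two-graph state with loop bit $\Gamma_{P_{vv}} + 1$ at $v$ and ${\cal Q}_v = 0$, and the loop appears iff $\Gamma_{P_{vv}} = 0$. Writing $H_v[m]i^p = N_v(S_v^{-1}[m]i^p)$ yields the symmetric claim for $H_v$, with the loop appearing iff $\Gamma_{P_{vv}} = 1$. There is no serious obstacle; the only subtlety is checking that the phase modification by $S_v^{\pm 1}$ really stays within the ``special form'' allowed by definition \ref{TGS1} (coefficients of quadratic monomials in $\{0,2\}$, linear coefficients in $\{0,1,2,3\}$), which is immediate because we only shift a single linear coefficient by $\pm 1$ modulo $4$.
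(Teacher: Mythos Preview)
Your proof is correct and follows essentially the same approach as the paper: both reduce the lemma to the $2\times 2$ matrix identity $N = HS$ (equivalently $N\,\mathrm{diag}(1,\pm i) = H\,\mathrm{diag}(1,\mp 1)$ and $H\,\mathrm{diag}(1,\pm i) = N\,\mathrm{diag}(1,\pm 1)$), and then read off the effect of the diagonal factor on the loop bit and on ${\cal Q}_v$. The paper's own proof simply records these matrix identities and leaves the translation to the graph-set-set representation implicit, whereas you spell out explicitly how $S_v^{\pm 1}$ shifts the linear coefficient $2\Gamma_{P_{vv}}+1$ of $x_v$ in $p$ modulo $4$ and hence toggles $\Gamma_{P_{vv}}$ and clears ${\cal Q}_v$; this extra detail is helpful but not a different argument.
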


\begin{proof} $ \left(\begin{tiny}\begin{array}{rr}
1&i\\
1&-i
\end{array} \end{tiny}\right)\left(\begin{tiny}\begin{array}{rr}
1&0\\
0&\pm i
\end{array} \end{tiny}\right)=\left(\begin{tiny}\begin{array}{rr}
1&\mp 1\\
1&\pm 1
\end{array} \end{tiny}\right)=\left(\begin{tiny}\begin{array}{rr}
1&1\\
1&-1
\end{array} \end{tiny}\right)\left(\begin{tiny}\begin{array}{rr}
1&0\\
0&\mp 1
\end{array} \end{tiny}\right)$. \newline
Similarly,
$\begin{tiny} \left(\begin{array}{cc}
1&1\\
1&-1
\end{array}\right)\left(\begin{array}{rr}
1&0\\
0&\pm i
\end{array}\right)=\left(\begin{array}{rr}
1&\pm i\\
1&\mp i
\end{array}\right)=\left(\begin{array}{rr}
1&i\\
1&-i
\end{array}\right)\left(\begin{array}{rr}
1&0\\
0&\pm 1
\end{array}\right) \end{tiny}$.
\end{proof}

\begin{cor} Theorem \ref{NvAction}.
\end{cor}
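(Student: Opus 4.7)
The plan is to verify each of the three cases of Theorem \ref{NvAction} by combining the spectral expansions in Lemmas \ref{nonboolinmN} and \ref{nonboolnotinm} with the Hadamard/negahadamard interchange of Lemma \ref{interchange}, and then translating the resulting algebraic normal forms, term by term, into the graph-set-set language of Definitions \ref{LC} and \ref{TGS1}.

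First I would handle $v\in{\cal L}$. Factoring $m = r(x_v + u + \Gamma_{G_{vv}})$ with $u = \sum_{t\in{\cal N}_v^M} x_t$ and $x_v\notin r$, Lemma \ref{nonboolinmN} returns $r$ as the new magnitude (so $v$ moves into ${\cal R}'$) and produces a ${\mathbb Z}_4$ phase
$$ p + (u+\Gamma_{G_{vv}}) + 2\Gamma_{G_{vv}} u + 2\sum_{t<t',\,t,t'\in{\cal N}_v^G} x_t x_{t'}, $$
since ${\cal N}_v^P = \emptyset$ collapses the factor $(2x_{{\cal B}_v^P}+1)$ to $1$. The quadratic piece is precisely the edge set of $C_{{\cal N}_v^G}$; the term $2\Gamma_{G_{vv}} u$ contributes $\Gamma_{G_{vv}}\Delta_{{\cal N}_v^G}$; and the ${\mathbb Z}_4$-linear addition $u$ increments the coefficient of each $x_t$, $t\in{\cal N}_v^G$, by $1$. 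By Definition \ref{TGS1}, when ${\cal Q}_t=0$ this toggles ${\cal Q}_t$ and leaves $\Gamma_{P_{tt}}$ alone, whereas when ${\cal Q}_t=1$ it toggles both ${\cal Q}_t$ (back to $0$) and $\Gamma_{P_{tt}}$; the latter is exactly the $\Delta_{{\cal Q}\cap{\cal N}_v^G}$ correction in Definition \ref{LC}. Combining yields $(G',{\cal Q}'') = (G,{\cal Q})^v$, and since $p'$ has no $x_v$-dependence one removes $v$ from ${\cal Q}''$, giving ${\cal Q}' = {\cal Q}''\setminus\{v\}$ and ${\cal R}' = {\cal R}\cup\{v\}$.

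Second I would treat ${\cal B}_v^G\subseteq{\cal R}$ in two subcases. For ${\cal Q}_v=0$, Lemma \ref{nonboolnotinm} supplies the closed form directly, and the same term-by-term dictionary applies: $2\sum_{k<k'}[m_km_{k'}]$ produces $C_{{\cal N}_v^G}$; $(3+2\Gamma_{G_{vv}})\sum_k[m_k]$ injects ${\cal N}_v^G$ into ${\cal Q}$ while adding $\Gamma_{G_{vv}}\Delta_{{\cal N}_v^G}+\Delta_{{\cal Q}\cap{\cal N}_v^G}$ via the same ${\cal Q}$-interaction analysed above; and $(3+2\Gamma_{G_{vv}})[x_v]$ together with the constant $3\Gamma_{G_{vv}}$ toggles the loop at $v$ and accounts for the $\Delta_{{\cal B}_v^G}$ correction sitting on top of $(G,{\cal Q})^v$. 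For ${\cal Q}_v=1$, Lemma \ref{interchange} replaces $N_v$ by $H_v$ acting on the modified state in which ${\cal Q}_v$ is toggled to $0$ and $\Gamma_{P_{vv}}$ is flipped; the matching ${\cal B}_v^G\subseteq{\cal R},\,{\cal Q}_v=0$ branch of Theorem \ref{HvAction} then yields ${\cal R}' = {\cal R}\setminus\{v\}$, ${\cal Q}' = {\cal Q}\setminus\{v\}$, and $G'=G+\Delta_{\{v\}}$, since the loop toggle $G\mapsto G+\Delta_{\{v\}}$ absorbs both the original $\Gamma_{P_{vv}}$ and the flip introduced by the interchange.

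Finally, for $v\in{\cal R}$ with ${\cal B}_v^G\nsubseteq{\cal R}$, pick any $w\in{\cal N}_v^M$ and apply $\m{swp}$ at edge $wv$ via Lemma \ref{SwpActiongen}; this preserves $\left|\psi\right>$ while moving $v$ into ${\cal L}''$, and the first case then delivers $(G',{\cal R}'',{\cal Q}''')$. The final adjustments ${\cal R}' = {\cal R}''\cup\{v\}$ and ${\cal Q}' = {\cal Q}'''\setminus\{v\}$ match the theorem verbatim. The principal obstacle throughout is the bookkeeping for the ${\mathbb Z}_4$-linear contributions in the first two cases: because Lemma \ref{lem2} mixes ${\mathbb Z}_2$ and ${\mathbb Z}_4$ arithmetic, it is easy to misattribute a loop toggle between $G$ and ${\cal Q}$, particularly when $\Gamma_{G_{vv}}=1$ or ${\cal Q}\cap{\cal N}_v^G\ne\emptyset$. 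Once that identification with Definition \ref{LC} is pinned down, the remaining cases follow by $\m{swp}$ reduction and the matrix interchange of Lemma \ref{interchange}.
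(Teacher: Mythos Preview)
Your plan is precisely the paper's one-line proof — Lemma~\ref{nonboolinmN} for $v\in{\cal L}$, Lemma~\ref{nonboolnotinm} together with the $H$/$N$ interchange of Lemma~\ref{interchange} (the paper phrases this as citing Lemma~\ref{hvgenright}) for ${\cal B}_v^G\subseteq{\cal R}$, and a $\m{swp}$ reduction for the mixed case. The structure is fine; however, your execution of the first case has a bookkeeping slip that actually breaks the identification $(G',{\cal Q}'')=(G,{\cal Q})^v$.

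You claim that ${\cal N}_v^P=\emptyset$ collapses the factor $(2x_{{\cal B}_v^P}+1)$ to $1$. By Definition~\ref{graphdefs}, ${\cal B}_v^P={\cal N}_v^P\cup\{v\}=\{v\}$, so $x_{{\cal B}_v^P}=x_v$ and the factor is $2x_v+1$, not $1$; this is exactly how the proof of Lemma~\ref{nonboolinmN} uses it. Expanding, the phase picks up the additional contribution
\[
2x_v\bigl(u+\Gamma_{G_{vv}}\bigr)\;=\;2\sum_{t\in{\cal N}_v^G}x_vx_t+2\Gamma_{G_{vv}}x_v,
\]
i.e.\ precisely the edges $\{vt:t\in{\cal N}_v^G\}$ and the loop at $v$. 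These are the edges of $M_v$ that you have just deleted from $M$ by passing from $m$ to $m'=r$; they must reappear in $P'$ so that $G'=M'+P'$ still carries the star at $v$. Without them your combined graph is $G^v-G_v$, not $G^v$, because LC$^{\odot}$ at $v$ alters only $C_{{\cal N}_v^G}$ and loops on ${\cal N}_v^G$ and leaves the edges incident to $v$ untouched. Once you restore the factor $2x_v+1$, your three identifications ($C_{{\cal N}_v^G}$, $\Gamma_{G_{vv}}\Delta_{{\cal N}_v^G}$, $\Delta_{{\cal Q}\cap{\cal N}_v^G}$) are correct and the case goes through; the remaining cases are unaffected.
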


\begin{proof} By lemmas \ref{nonboolinmN},  \ref{nonboolnotinm},  lemma
\ref{hvgenright}, 
and the proof of lemma \ref{SwpAction}.
\end{proof}

\begin{proof} (lemma \ref{invN})
We first observe that
$S_v^3[m]i^p = [m]i^{p + 3x_v}$. Moreover, $N^{-1} = S^3H$.
Thus, applying $N_v^{-1}$ to $[m]i^p$ is the same as
first applying $H_v$, then $S_v^3$, to $[m]i^p$.
 
Let $v\in{\cal L}''$: then $m''=r(x_v+\Gamma_{M_{vv}''}+1+ q)$, where
$q = x_{{\cal N}_v^{M''}}$.
Then, $S_v^3[m'']i^{p''}=[m'']i^{p''+3x_v}=
[m'']i^{p''+3[\Gamma_{M_{vv}''}+ q]}$. On the other hand, in mod 4,
$[\Gamma_{M_{vv}''}+ q]=\Gamma_{M_{vv}''}+ q
+2\Gamma_{M_{vv}''}q +2\sum_{j,k \in {\cal N}_v^{M''}, j < k} x_jx_k$. 
Then, 
$$ S_v^3[m'']i^{p''}=[m'']i^{p''+3\Gamma_{M_{vv}''}+3q
+2\Gamma_{M_{vv}''}q+2\sum_{j,k \in {\cal N}_v^{M''}, j < k} x_jx_k}\enspace.
$$

Let $v\in{\cal R}'',\,{\cal Q}_v''=0$, then we obtain an extra
loop in $G$ at $v$ and ${\cal Q}'={\cal Q}\cup\{v\}$.

When $v\in{\cal R}'',\,{\cal Q}_v''=1$, then the term $x_v$ cancels with $3x_v$ and
makes ${\cal Q}'={\cal Q}''\setminus\{v\}$.
\end{proof}

\subsection{Proofs for section \ref{canon}}

\begin{proof} (lemma \ref{norm})
For any generalised two-graph state, $(G,{\cal R},Q)$ there is always one unique equivalent canonised form,
$(G_c,{\cal R}_c,Q_c)$, such that the indices in set ${\cal L}_c$ are as small as possible. We first state and prove the
following lemma.
\begin{lem} For any uncanonised generalised two-graph state, $(G,{\cal R},Q)$, there always exists
at least one $v \in {\cal L}$ such that $v \not \in {\cal L}_c$ and $v > \min({\cal N}_v^G)$.
\label{biggerexists}
\end{lem}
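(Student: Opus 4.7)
The plan is to exploit the matroid structure of the admissible choices of ${\cal L}$. Since the magnitude function $m$ is the indicator of a fixed binary linear coset code ${\cal C}$, the admissible sets ${\cal L}$ are precisely the bases of the column matroid of a parity-check matrix of ${\cal C}$, and the swp operation of Lemma~\ref{SwpActiongen} realises the corresponding basis exchange: for $v\in{\cal L}$, the set $({\cal L}\setminus\{v\})\cup\{w\}$ is admissible iff the pivot entry $\Gamma_{M_{vw}}$ equals $1$, iff $w\in{\cal N}_v^M$, which coincides with ${\cal N}_v^G$ because $G_{\cal L}$ is empty apart from loops.

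Next I would establish that ${\cal L}_c$ is the lexicographically smallest admissible set (compared as increasing tuples). If the lex-minimum admissible set ${\cal L}^*$ were not canonised, some $v\in{\cal L}^*$ would satisfy $v>\min({\cal N}_v^{G^*})$, and applying swp with $w=\min({\cal N}_v^{G^*})$ would yield an admissible set lex-smaller than ${\cal L}^*$, a contradiction. Hence ${\cal L}^*$ is canonised, and the uniqueness of the canonical form noted after Definition~\ref{canonised} forces ${\cal L}^*={\cal L}_c$.

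Now assume $(G,{\cal R},{\cal Q})$ is uncanonised, so that ${\cal L}\ne{\cal L}_c$ (otherwise, since the factorisation of $m$ dictated by Definition~\ref{df:TGS} is uniquely determined once ${\cal L}$ is fixed, we would have $G=G_c$ and the state would already be canonised). Set $v=\max({\cal L}\setminus{\cal L}_c)$ and apply the symmetric basis-exchange property of matroids to the bases ${\cal L}$ and ${\cal L}_c$ with distinguished element $v\in{\cal L}\setminus{\cal L}_c$; this produces some $w\in{\cal L}_c\setminus{\cal L}$ such that both $({\cal L}\setminus\{v\})\cup\{w\}$ and $({\cal L}_c\setminus\{w\})\cup\{v\}$ are admissible. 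If $w>v$, the latter set would be lex-smaller than ${\cal L}_c$, contradicting the lex-minimality already established; so $w<v$. Admissibility of $({\cal L}\setminus\{v\})\cup\{w\}$ gives $w\in{\cal N}_v^M\subseteq{\cal N}_v^G$, hence $\min({\cal N}_v^G)\le w<v$, and by construction $v\notin{\cal L}_c$, delivering the required witness.

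The main obstacle is the preliminary identification of admissible ${\cal L}$'s with bases of a matroid and of swp with the associated basis exchange; once that correspondence is in hand, the remainder is a standard symmetric-exchange-plus-lex-minimality manoeuvre.
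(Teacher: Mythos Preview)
Your proof is correct and takes a genuinely different route from the paper's. The paper argues by a direct linear-independence count in the dual code: it observes that a canonised ${\cal L}_c$ yields $|{\cal L}_c|$ dual codewords with pairwise distinct leftmost-$1$ positions (one per element of ${\cal L}_c$), and that an uncanonical $v\in{\cal L}$ with $v<\min({\cal N}_v^G)$ would produce one more such codeword with leftmost $1$ at $v\notin{\cal L}_c$, giving $|{\cal L}_c|+1$ independent vectors in a space of dimension $|{\cal L}_c|$. The paper only writes out the case of a single uncanonical element and waves at the general case.

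Your argument instead recognises the admissible ${\cal L}$'s as the bases of the column matroid of the parity-check matrix, identifies ${\cal L}_c$ with the lexicographically smallest basis, and then invokes the symmetric basis-exchange property to produce the required witness. This is cleaner in that it handles all cases uniformly and makes transparent why the canonical ${\cal L}_c$ is well-defined (it is the greedy basis). The price is importing the symmetric exchange theorem, which is heavier machinery than the paper's elementary rank count. One small remark: your choice $v=\max({\cal L}\setminus{\cal L}_c)$ is harmless but unnecessary; any $v\in{\cal L}\setminus{\cal L}_c$ works, since your lex-minimality contradiction only uses $v\notin{\cal L}_c$ and $v<w$.
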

\begin{proof} (lemma \ref{biggerexists})
By definition an uncanonised generalised two-graph state, $(G,{\cal R},Q)$, must contain at least
one $v \in {\cal L}$ such that $v \not \in {\cal L}_c$. We call such a $v$ `uncanonical'. Assume
that there is precisely one uncanonical element, $v$, contained in ${\cal L}$. We shall now assume
that $v < \min({\cal N}_v^G)$ and show, by contradiction, that such an assumption is impossible.
If $v < \min({\cal N}_v^G)$, then
there exists a codeword in the dual code associated with $m$ (ignoring loops in $M$) of the form
$ 00 \ldots 01 xx\ldots x$ where the leftmost $1$ occurs in position $v$ (numbering
positions from $0$ on the left). But we have assumed that $v \not \in {\cal L}_c$ so there must also
exist $|{\cal L}_c|$ other codewords in the dual code associated with $m$, also of the form
$ 00 \ldots 01 xx\ldots x$, where the left-most $1$ now occurs in position $u$, $\forall u \in {\cal L}_c$.
Thus, in total, we have $|{\cal L}_c| + 1$ codewords from the dual code. They are clearly pairwise linearly
independent so generate a linear space of size $2^{|{\cal L}_c| + 1}$. But the dual code associated with $m$
is only of size $2^{|{\cal L}_c|}$. This is a contradiction. The same argument can be generalised to the case
where more than one uncanonical element is contained in ${\cal L}$ and to the case where $M$ contains loops.
\end{proof}

By lemma \ref{biggerexists} we can always perform at least one `$swp$' at edge $vw$ on an uncanonised
generalised two-graph state, $(G,{\cal R},Q)$, where $v \in {\cal L}$, $w \in {\cal R}$, and $v > w$, so as to
produce a new generalised graph state, $(G',{\cal R}',Q')$, where $w \in {\cal L}'$ and $v \in {\cal R}'$.
It is then straightforward to see that one must obtain the canonised form after, at worst-case,
$|{\cal L}_c| \choose 2$ `$swp$s'.
\end{proof}


\begin{thebibliography}{99}

\bibitem{Bou:Tree}
A. Bouchet,
``Transforming trees by succesive local complementations'',
{\em J. Graph Theory},
{\bf 12}, pp. 195-207, 1988.

\bibitem{Bou:Grph}
A. Bouchet,
``Graphic Presentation of Isotropic Systems'',
{\em J. Combin. Theory B},
{\bf 45}, pp. 58--76, 1988.

\bibitem{Cald:Qua}
A. R. Calderbank and E. M. Rains and P. W. Shor and N. J. A. Sloane,
{"Quantum Error Correction Via Codes Over $\GF(4)$,"}
{\em IEEE Trans. Inform. Theory,}
{\bf 44}, pp. 1369--1387, 1998.

\bibitem{DanQECC}
L. E. Danielsen,
{``On Self-Dual Quantum Codes, Graphs, and Boolean Functions'',}
{\em Master's Thesis, University of Bergen},
March, 2005.
\href{http://arxiv.org/abs/quant-ph/0503236}
{http://arxiv.org/abs/quant-ph/0503236}.

\bibitem{DanDat}
L. E. Danielsen,
{``Database of Self-Dual Quantum Codes'',}
\href{http://www.ii.uib.no/~larsed/vncorbits/}
{http://www.ii.uib.no/\~{}larsed/vncorbits/},
2005.

\bibitem{DanAPC}
L. E. Danielsen and T. A. Gulliver and M. G. Parker,
{``Aperiodic Propagation Criteria for Boolean Functions'',}
{\em Inform. Comput.},
Sept. 2005.
\href{http://www.ii.uib.no/~matthew/apcpaper.pdf}
{http://www.ii.uib.no/\~{}matthew/apcpaper.pdf}.

\bibitem{DanPAR}
L. E. Danielsen and M. G. Parker,
{``Spectral Orbits and Peak-to-Average Power Ratio
of Boolean Functions with respect to the $\{I,H,N\}^{\otimes n}$ Transform'',}
{\em SETA'04, Sequences and their Applications, Seoul,
Proceedings of SETA04, October 2004,
Lecture Notes in Computer Science, Springer-Verlag},
LNCS 3486, 2005.
\href{http://www.ii.uib.no/~matthew/seta04-parihn.pdf}
{http://www.ii.uib.no/\~{}matthew/seta04-parihn.ps}.

\bibitem{DanClass}
L. E. Danielsen and M. G. Parker,
{``On the classification of all self-dual additive codes over GF(4)
of length up to 12''},
{\em Journal of Combinatorial Theory, Series A},
October 2005,
\href{http://arxiv.org/abs/math.CO/0504522}
{http://arxiv.org/abs/math.CO/0504522}.

\bibitem{Dan:Pivot}
Lars Eirik Danielsen and Matthew G. Parker,
{``Edge Local Complementation and Equivalence of Binary Linear Codes''},
{\em Workshop on Coding and Cryptography (WCC)},
\href{http://www.ii.uib.no/~matthew/pivot.pdf}
{http://www.ii.uib.no/\~{}matthew/pivot.pdf}
April 2007 (accepted for Designs, Codes and Cryptography).

\bibitem{Ell:CliffGraph}
Matthew B. Elliot and Bryan Eastin and Carlton M. Caves,
``Graphical Description of the Action of Clifford Operators on Stabilizer States'',
\href{http://arxiv.org/pdf/quant-ph/0703278}
{http://arxiv.org/pdf/quant-ph/0703278},
30 March, 2007.

\bibitem{Glynn:Graph}
D. G. Glynn,
``On Self-Dual Quantum Codes and Graphs'',
{\em Submitted to the Electronic Journal of Combinatorics},
\href{http://homepage.mac.com/dglynn/quantum_files/Personal3.html}
{http://homepage.mac.com/dglynn/quantum\_files/Personal3.html},'
\href{http://arxiv.org/pdf/quant-ph/0703278}
{http://arxiv.org/pdf/quant-ph/0703278},
April 2002.

\bibitem{Glynn:Geom}
David G Glynn and T. Aaron Gulliver and Johannes G. Maks and Manish K. Gupta,
{\bf The Geometry of Additive Quantum Codes},
submitted, 217 pages, Springer Lecture Notes,
Feb. 2007.

\bibitem{Gott:Stab}
D. Gottesman,
``Stabilizer Codes and Quantum Error Correction'',
{\em Caltech Ph.D. Thesis},
1997,
\href{http://arxiv.org/pdf/quant-ph/9705052}
{http://arxiv.org/pdf/quant-ph/9705052}.

\bibitem{Gras:QG}
M. Grassl and A. Klappenecker and M. Roetteler,
{"Graphs, Quadratic Forms, and Quantum Codes",}
{Proc. IEEE Int. Symp. on Inform. Theory, Lausanne, Switzerland},
June 30--July 5, 2002.

\bibitem{Gro:LULC}
D. Gross and M. van den Nest,
``The LU-LC conjecture, diagonal local operations and quadratic forms over GF$(2)$'',
\href{http://arxiv.org/pdf/quant-ph/0707.4000}
{http://arxiv.org/pdf/quant-ph/0707.4000},
2007.

\bibitem{Hein:Gr}
M. Hein and J. Eisert and H. J. Briegel,
``Multi-party entanglement in graph states'',
{\em Phys. Rev. A},
{\bf 69}, 6, 062311, 2004.
\href{http://arXiv:quant-ph/0307130}
{http://arXiv:quant-ph/0307130}.

\bibitem{Hein:Graph}
M. Hein and W. Dur and J. Eisert and R. Raussendorf and M. Van den Nest and H.-J. Briegel,
``Entanglement in Graph States and its Applications'',
{\em International School of Physics Enrico Fermi (Varenna, Italy), Quantum computers, algorithms and chaos 162 (Eds.: P. Zoller, G. Casati, D. Shepelyansky, G. Benenti)},
2006. 
\href{http://xxx.soton.ac.uk/abs/quant-ph/0602096}
{http://xxx.soton.ac.uk/abs/quant-ph/0602096}.

\bibitem{Hohn:SD}
G. H\"ohn,
``Self-dual codes over the Kleinian four group'',
{\em Math. Ann.},
{\bf 327}, 2, pp. 227--255, 2003.
\href{http://arXiv:math.CO/0005266}
{http://arXiv:math.CO/0005266}.

\bibitem{Ji:LULC}
Z. Ji and J. Chen and Z. Wei and M. Ying,
``The LU-LC conjecture is false'',
\href{http://arxiv.org/pdf/quant-ph/0709.1266}
{http://arxiv.org/pdf/quant-ph/0709.1266},
2007.

\bibitem{NC:QC}
M. Nielsen and I. Chuang,
{\bf Quantum computation and quantum information},
Cambridge University Press, Cambridge,
2000.

\bibitem{Par:QFG}
M. G. Parker,
{``Quantum Factor Graphs'',}
{\em Annals of Telecom.},
July-Aug, pp. 472--483, 2001,
(originally 2nd Int. Symp. on Turbo Codes and Related Topics, Brest, France
Sept 4--7, 2000),
\href{http://xxx.soton.ac.uk/ps/quant-ph/0010043}
{http://xxx.soton.ac.uk/ps/quant-ph/0010043}.

\bibitem{Par:QE}
M. G. Parker and V. Rijmen,
``The Quantum Entanglement of Binary and Bipolar Sequences'',
short version in {\em Sequences and Their Applications},
Discrete Mathematics and
Theoretical Computer Science Series, Springer-Verlag, 2001,
long version at
\href{http://xxx.soton.ac.uk/abs/quant-ph/?0107106}
{http://xxx.soton.ac.uk/abs/quant-ph/?0107106}
or
\href{http://www.ii.uib.no/~matthew/BergDM2.ps}
{http://www.ii.uib.no/$\sim$matthew/BergDM2.ps},
June 2001.

\bibitem{Par:SB}
M. G. Parker,
``Generalised S-Box Nonlinearity'',
{\em NESSIE Public Document --  NES/DOC/UIB/WP5/020/A}, 11 Feb, 2003, \newline
\href{https://www.cosic.esat.kuleuven.be/nessie/reports/phase2/SBoxLin.pdf}
{https://www.cosic.esat.kuleuven.ac.be/nessie/reports/phase2/SBoxLin.pdf}.

\bibitem{Par:CMF}
Matthew G. Parker,
Univariate and Multivariate Merit Factors,
{\em Proceedings of SETA04, Lecture Notes in Computer Science},
LNCS 3486, pp. 72--100, 2005.

\bibitem{Raus:QC}
R. Raussendorf and H.-J. Briegel,
``Quantum Computing via Measurements Only'',
{\em Phys. Rev. Lett.},
{\bf{86}}, 5188, 2000,
\href{http://xxx.soton.ac.uk/abs/quant-ph/0010033}
{http://xxx.soton.ac.uk/abs/quant-ph/0010033},
7 Oct 2000.

\bibitem{Raus:MBQC}
R. Raussendorf and D. E. Browne and H. J. Briegel,
``Measurement-based quantum computation with cluster states'',
{\em Phys. Rev. A},
{\bf{68}}, 022312 2003.

\bibitem{RP:Int}
Constanza Riera and Matthew G. Parker,
``One and Two-Variable Interlace Polynomials: A Spectral Interpretation'',
{\em International Workshop, Proceedings of WCC2005, Bergen, Norway, March 2005, Revised Selected Papers,
Lecture Notes in Computer Science},
LNCS 3969, pp. 397--411, March 2005.

\bibitem{RP:BC1}
C. Riera and M. G. Parker,
``Generalised Bent Criteria for Boolean Functions (I)'',
{\em  IEEE Trans Inform. Theory} {\bf 52}, 9, pp. 4142--4159, Sept. 2006.
\href{http://xxx.soton.ac.uk/ps/cs.IT/0502049}
{http://xxx.soton.ac.uk/ps/cs.IT/0502049}

\bibitem{RP:Piv}
C. Riera and M. G. Parker,
``On Pivot Orbits of Boolean Functions'',
{\em Proceedings of the Fourth International Workshop on Optimal Codes and Related Topics (OC 2005), Pamporovo, Bulgaria}, June 2005.
\href{http://www.ii.uib.no/~matthew/2var3.ps}
{http://www.ii.uib.no/$\sim$matthew/2var3.ps}

\bibitem{thesis}
C. Riera,
``Spectral Properties of Boolean Functions, Graphs and Graph States'',
{\em PhD thesis,  Universidad Complutense de Madrid.}

\bibitem{graphOxford}
Constanza Riera and Matthew G. Parker,
``A Generalisation of Graph States to Two-Graph States'',
{\em International Workshop on Measurement-Based Quantum Computing, St John's College, Oxford},
March 18 -- 21, 2007.

\bibitem{Sch:QG}
D. Schlingemann and R. F. Werner,
{``Quantum error-correcting codes associated with graphs''},
{\em Phys. Rev. A},
{\bf 65}, 2002,
\href{http://xxx.soton.ac.uk/abs/quant-ph/0012111}
{http://xxx.soton.ac.uk/abs/quant-ph/0012111}, Dec. 2000.

\bibitem{Sch:LULC}
D. Schlingemann,
\href{http://www.imaph.tu-bs.de/qi/problems/28.html}
{http://www.imaph.tu-bs.de/qi/problems/28.html},
April 2005.

\bibitem{Dag}
D. Stor\o y, ``On Boolean Functions, Unitary Transforms, and
Recursions", {\em Master's Thesis, the Selmer Center, Dept. of
Informatics, University of Bergen, Norway},
\href{http://rasmus.uib.no/~dst033/index.html}
{http://rasmus.uib.no/$\sim$dst033/index.html}, June, 2005.

\bibitem{VanD:Gr}
M. Van den Nest, J. Dehaene and B. De Moor,
{``Graphical description of the action of local {C}lifford
transformations on graph states'',}
{\em Phys. Rev. A},
{\bf 69}, 022316, 2004.
\href{http://xxx.soton.ac.uk/abs/quant-ph/0308151}
{http://xxx.soton.ac.uk/abs/quant-ph/0308151}.

\bibitem{VanD:Pivot}
Maarten Van den Nest and Bart De Moor,
{``Edge-local equivalence of graphs'',}
\href{http://uk.arxiv.org/pdf/math.CO/0510246}
{http://uk.arxiv.org/pdf/math.CO/0510246},
October, 2005.

\bibitem{Vidal:Mon}
G. Vidal,
{``Entanglement monotones'',}
{\em J. Mod. Opt.},
vol. 47, pp. 355, 2000.

\bibitem{Geom}
T-C. Wei and M. Ericsson and P.M. Goldbart and W.J. Munro,
{``Connections Between Relative Entropy of Entanglement and Geometric
Measure of Entanglement''},
{\em Quantum Information and Computation},
{\bf 4}, 252--272, 2004.
\href{http://xxx.soton.ac.uk/pdf/quant-ph/0405002}
{http://xxx.soton.ac.uk/pdf/quant-ph/0405002}.

\end{thebibliography}
\end{document}